\documentclass[acmsmall,nonacm]{acmart}

\usepackage[T1]{fontenc}
\usepackage{microtype}

\usepackage{cleveref}
\usepackage{enumitem}
\usepackage{thmtools}
\usepackage{mathtools}

\usepackage{tikz}
\usetikzlibrary{arrows.meta,positioning}

\theoremstyle{acmdefinition}

\crefname{lemma}{lemma}{lemmas}
\Crefname{lemma}{Lemma}{Lemmas}
\crefname{definition}{definition}{definitions}
\Crefname{definition}{Definition}{Definitions}

\Crefname{corollary}{Corollary}{Corollaries}

\Crefname{proposition}{Proposition}{Propositions}
\crefname{conjecture}{conjecture}{conjectures}
\Crefname{conjecture}{Conjecture}{Conjectures}

\Crefname{theorem}{Theorem}{Theorems}

\mathchardef\mhyphen="2D
\newcommand{\ghw}{\mathsf{ghw}}
\newcommand{\adw}{\mathsf{adw}}
\newcommand{\subw}{\mathsf{subw}}

\newcommand{\fhw}{\mathsf{fhw}}

\newcommand{\adwlift}{\adw^{\mathrm{lift}}}

\newcommand{\subwlift}{\subw^{\mathrm{lift}}}

\newcommand{\tw}{\mathsf{tw}}

\newcommand{\abs}[1]{\left|#1\right|}
\newcommand{\1}{\mathbf{1}}

\newcommand{\Mod}{\mathsf{Mod}}
\newcommand{\pMod}[1]{#1\mhyphen\Mod}

\newcommand{\epr}{\mathbf{p}}
\newcommand{\Sub}{\mathsf{Sub}}
\newcommand{\pSub}[1]{#1\mhyphen\Sub}
\newcommand{\rpos}{\mathbb{R}_{\ge 0}}

\newcommand{\Inc}{\operatorname{Inc}}
\newcommand{\disp}{\mathfrak{B}}
\newcommand{\gauge}{\gamma}

\newcommand{\Sym}{\mathcal S}
\newcommand{\one}{\mathbf 1}

\newcommand{\cut}[1]{\mathsf{cut}_#1}

\newcommand{\eex}{\mathsf{ex}}
\newcommand{\cgamp}{\mathfrak{R}}

\newcommand{\cqprob}{\mathsf{BCQ}}
\newcommand{\pcq}{\mathsf{p}\mhyphen{}\mathsf{BCQ}}

\newcommand{\cH}{\mathcal{H}}

\title{Cuts and Gauges for Submodular Width}

\author{Matthias Lanzinger}
\orcid{0000-0002-7601-3727}
\authornote{Supported by the Vienna Science and Technology Fund (WWTF) [10.47379/ICT2201].}
\affiliation{
  \institution{Institute of Logic and Computation, TU Wien}
  \city{Vienna}
  \country{Austria}
}
\email{matthias.lanzinger@tuwien.ac.at}

\begin{document}

\begin{abstract}
Submodular width is a central structural measure governing the complexity of
conjunctive query evaluation. In this paper we recast submodular width in
geometric terms. We show that submodular width can be approximated, up to a factor
$3/2$, by a new branchwidth parameter defined in terms of edge separations in the
hypergraph and the costs induced on them by admissible submodular
functions. 
This reformulation turns lower bounds on submodular width into the problem of
constructing well-balanced edge separations whose induced cost remains small.
We then express this connection through a variational characterisation in terms of a convex body.
Using these tools, we relate submodular width to more familiar graph-theoretic
notions, including line-graph treewidth and multicommodity flow, and obtain
general conditions under which submodular width is tightly linked to
generalised hypertree width. In particular, under various natural conditions we show that
\[
\subw(H) \in \Omega \left(\frac{\ghw(H)}{\log\ghw(H)} \right).
\]
\end{abstract}

\maketitle

\tableofcontents

\clearpage

\section{Introduction}
\label{sec:intro}
\paragraph{Context} Evaluation of conjunctive queries (CQs) refers to the problem of, given a relational FO formula $q$ built using only $\exists$ and conjunction and a relational structure $D$, deciding whether $q$ is satisfied in $D$. This represents one of the core algorithmic questions underlying data management but is also just as fundamental for a broad range of fields in computer science: it is equivalent to deciding the existence of a homomorphism between relational structures and to deciding constraint satisfaction problems. The general problem is $\mathsf{NP}$-complete, but its wide use has inspired highly detailed research into the specific factors that make it hard or easy. For instance, recently Bulatov~\cite{Bulatov17} and Zhuk~\cite{Zhuk20} independently proved a $\mathsf{PTIME}$/$\mathsf{NP}$ dichotomy for the problem depending on the form of $D$, concluding a decades long major research programme of understanding that side of the problem.

On the other hand, the question of how the form of $q$ affects the complexity of CQ evaluation has also received major attention. Especially in a database setting, the structure of $q$ corresponds to the structure of a database query, and positive algorithmic results directly transfer into more efficient algorithms for database query evaluation.
In this direction, the ``cyclicity'' of $q$ was identified as the major source of hardness. For constantly bounded arity of $q$, Grohe~\cite{DBLP:journals/jacm/Grohe07} showed that CQ evaluation is in $\mathsf{PTIME}$ exactly for queries of constantly bounded treewidth. When arity is not considered constant, generalisations of hypergraph $\alpha$-acyclicity --- such as generalised and fractional hypertree width ($\ghw$ and $\fhw$, respectively)~\cite{DBLP:journals/jacm/GottlobMS09,fhw} ---  were shown sufficient for $\mathsf{PTIME}$ algorithms, even when treewidth is unbounded. However, no matching lower bound is known in the unbounded rank setting (except for the special case of maximum hypergraph degree 2~\cite{LanzingerPODS22}).

From the perspective of parameterised complexity (one considers the query $q$ as the parameter), the situation is clearer. Seminal work of Marx~\cite{Marx13} introduced \emph{submodular width} ($\subw$), and showed that bounded $\subw$ characterises exactly those classes of queries for which evaluation is fixed-parameter tractable. This in turn has motivated the development of important new algorithmic techniques that achieve $f(q)\mathsf{poly}(D)^{\subw(q)}$ time~\cite{panda25,jaguar,pandaexpress}.

Despite its algorithmic importance, the structural aspects of $\subw$ remain opaque. Bounded $\subw$ is the exact boundary for fixed-parameter tractability, while bounded $\ghw$ and $\fhw$ are the standard sufficient conditions for polynomial-time tractability, yet the relationship between these parameters is largely unresolved. While we know that $\subw(H) \le \fhw(H) \le \ghw(H)$, neither inequality can be reversed generally\footnote{\citet{DBLP:journals/jacm/GottlobLPR21} show reverse inequalities of the form $\ghw(H) \le f(\fhw(H))$ under a variety of conditions.}. However, unbounded $\fhw$ (and hence $\ghw$) with bounded $\subw$ is currently only known to occur in pathological examples~\cite{adw}. In classical structural settings, they are only known to coincide under bounded rank and maximum degree 2~\cite{LanzingerPODS22}, but even there, the known relationship is purely qualitative, offering no insight into the quantitative relationship of the two parameters. 

The root of this disconnect is a lack of appropriate mathematical tools. Establishing lower bounds on $\subw$ requires identifying an edge-dominated submodular witness $b$ such that \emph{every} tree decomposition of $H$ contains at least one bag with a large $b$-weight. The community has struggled to find general techniques to systematically construct these witnesses. Consequently, fundamental gaps remain in our understanding: we lack reliable ways to computationally check or certify submodular width, its exact relationship to measures like adaptive width is unmapped, and we do not even know if $\subw$ yields the optimal exponent for evaluation time. Bridging the gap between polynomial and fixed-parameter tractability for CQ evaluation demands a completely new analytical framework for $\subw$.

\paragraph{Contribution}
Our main contribution is a reformulation of submodular width  in terms of symmetric cut functions and finite-dimensional convex geometry. The starting point is a branchwidth-type parameter, \emph{lifted submodular width}, obtained by applying admissible submodular witnesses to edge boundaries. We prove that for every hypergraph $H$,
\[
\subwlift(H)\le \subw(H)\le \max\Bigl\{1,\frac32\,\subwlift(H)\Bigr\}.
\]
That is, $\subwlift$ is equivalent to $\subw$ up to constant factor $3/2$. This replaces tree decompositions by branch decompositions over certain cut cost functions.

While this branchwidth perspective simplifies the structural decomposition, it still restricts us to cut weightings explicitly induced by submodular witnesses. Our second step removes this restriction by introducing a variational description of $\subwlift$ framed in terms of finite-dimensional convex geometry. We define the submodular realisable body $K_\one^\Sub(H)$ and its gauge\footnote{Intuitively, the gauge of a point is the minimum factor by which a fixed reference set must be scaled to contain it, serving as a geometric measure of its complexity.} $\gauge_H^\Sub$, demonstrating that
\[
\subwlift(H)=
\sup_{q\in \Sym(H)}
\frac{\disp_H(q)}{\gauge_H^\Sub(q)}
\]
where $\Sym(H)$ is the set of all symmetric edge set functions $2^E \to \rpos$.
This variational characterisation provides us with a significantly more technical freedom than previous formulations of submodular width. Instead of constructing cut weightings that correspond to a submodular function directly, the geometry of $K_\one^\Sub(H)$ allows us to use the much broader space of arbitrary symmetric edge-set functions to construct any $q$ that exhibits a large branchwidth value relative to its gauge.
These first two results taken together provide us with an entirely new approach for reasoning about the submodular width of hypergraphs.

Famously, freedom dies unless it is used. The second half of our contribution thus presents one way to build on this reframing of submodular width. Namely, we present a method for systematically constructing a kind of canonical function in $\Sym(H)$. Drawing on classic results connecting well-linked sets and capacitated multicommodity flows in graphs, we prove that large line graph treewidth guarantees the existence of a $q \in \Sym(H)$ with $\disp_H(q)\ge 4/9$ while bounding the local incident load on any single hyperedge $e \in E(H)$ to $O(\log \tw(M) / \tw(M))$, where $M$ is the line graph of $H$.

We subsequently formalise the translation of this bounded incident load into an upper bound on the gauge $\gamma_H^\Sub(q)$. To this end we introduce the \emph{submodular gauge-routing ratio} $\cgamp(H)$, as a way measures how efficiently low incident load from the previous step can be transferred into an upper bound on $\gamma_H^\Sub(q)$. Concretely, we obtain the general lower bound
\[
\subw(H) \in \Omega\left(\frac{1}{\cgamp(H)} \cdot \frac{\ghw(H)}{\log \ghw(H)}\right).
\]
We then also relate this to concrete structural properties of the hypergraph. For instance, if $H$ satisfies the \emph{private intersection property} (each pairwise intersection contains a degree-2 vertex), then $\cgamp(H) \le 1$. Alternatively, if $H$ has bounded edge excess $\eex(H)$ (the sum of degrees in an edge, ignoring degree 2 vertices), then $\cgamp(H) \le \eex(H) + 1$. That is, we significantly generalise the previous best known conditions under which $\ghw$ and $\subw$ are tightly connected, and moreover, we give explicit quantitative bounds that were missing for previous work on the degree 2 case. We note that these examples are specific, easy to present, instances of our results. We additionally identify a more general principle under which $\cgamp(H)$ is small.
These structural results additionally demonstrate that for a wide range of structures, polynomial time tractability and fixed-parameter tractability coincide for CQ evaluation. 
\begin{theorem}[Informal version of \Cref{thm:complexity}]
  Let $\cH$ be a recursively enumerable class of hypergraphs with bounded submodular gauge-routing ratio, i.e.,  $\cgamp(\cH)<\infty$. 
Assuming ETH, the following are equivalent:
\begin{enumerate}[label=(\roman*)]
\item $\ghw(\cH) < \infty$.
\item $\subw(\cH)< \infty$.

\item CQ evaluation over queries with hypergraphs in $\cH$ is tractable.
\item  CQ evaluation, parameterised by query size, over queries with hypergraphs in $\cH$ is fixed-parameter tractable.
\end{enumerate}
\end{theorem}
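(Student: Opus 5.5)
The plan is to close a cycle of implications, (i)$\Rightarrow$(iii)$\Rightarrow$(iv)$\Rightarrow$(ii)$\Rightarrow$(i). Most links are standard facts; the one that is new is (ii)$\Rightarrow$(i), and that is where the hypothesis $\cgamp(\cH)<\infty$ is used.

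For (i)$\Rightarrow$(iii), I would invoke the classical result of \citet{DBLP:journals/jacm/GottlobMS09}. Bounded $\ghw$ makes CQ evaluation polynomial, because an optimal decomposition is not needed: one can compute a hypertree decomposition of width at most $3\ghw+1$ in polynomial time. The implication (iii)$\Rightarrow$(iv) is immediate, since a polynomial-time algorithm is in particular fixed-parameter tractable. For (iv)$\Rightarrow$(ii), I would use Marx's theorem~\cite{Marx13}. Assuming ETH, for a recursively enumerable class $\cH$, FPT evaluation with the query as parameter implies $\subw(\cH)<\infty$. Recursive enumerability is exactly the side condition Marx's hardness reduction requires, namely the ability to find a witness query of large $\subw$ in the class.

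The substantive step is (ii)$\Rightarrow$(i). Here I would apply the general lower bound stated in the contribution section,
\[
\subw(H)\ge c\cdot\frac{1}{\cgamp(H)}\cdot\frac{\ghw(H)}{\log\ghw(H)}
\]
for a universal constant $c>0$, taken for $\ghw(H)$ above some constant. Suppose $\subw(\cH)\le s$ and $\cgamp(\cH)\le r$. Then every $H\in\cH$ with large $\ghw$ satisfies $\ghw(H)/\log\ghw(H)\le sr/c$. Since $x/\log x\to\infty$, this bounds $\ghw(H)$ by a function of $s$ and $r$, which gives $\ghw(\cH)<\infty$. Hypergraphs whose $\ghw$ lies below the threshold where the bound applies are covered trivially.

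The main obstacle is proving that lower bound, which carries the whole paper. It needs four ingredients, in this order:
\begin{enumerate}
\item the comparison $\subw\ge\subwlift$;
\item the variational formula $\subwlift(H)=\sup_q \disp_H(q)/\gauge^\Sub_H(q)$;
\item the multicommodity-flow construction, which from large line-graph treewidth $\tw(M)$ produces a $q$ with $\disp_H(q)\ge 4/9$ and incident load $O(\log\tw(M)/\tw(M))$;
\item the definition of $\cgamp$, which turns this load bound into $\gauge^\Sub_H(q)\le \cgamp(H)\cdot O(\log\tw(M)/\tw(M))$.
\end{enumerate}
Finally, one must relate $\tw(M)$ to $\ghw(H)$, using $\ghw(H)\le \tw(M)+1$: the bags of a line-graph tree decomposition, each covered by the edges it contains, give a hypertree-style cover. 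Since $x/\log x$ is increasing for large $x$, this monotonicity yields the stated bound in terms of $\ghw$.

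Because the informal theorem defers to \Cref{thm:complexity}, I would state the proof as this chain of citations, with (ii)$\Rightarrow$(i) as the only new content.
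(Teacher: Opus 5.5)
Your proposal is correct and follows the paper's proof. The only substantive link, (ii)$\Rightarrow$(i), is exactly \Cref{lem:load.to.width} combined with bounded $\cgamp$. The remaining links come from Marx's ETH-based characterisation and from polynomial-time evaluation under bounded $\ghw$; your single cycle merely avoids invoking $\subw\le\ghw$ separately.

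One attribution fix: the bound $\mathsf{hw}\le 3\ghw+1$ is due to Adler, Gottlob, and Grohe~\cite{DBLP:journals/ejc/AdlerGG07}, and polynomial-time computation of bounded-width hypertree decompositions is due to Gottlob, Leone, and Scarcello~\cite{DBLP:journals/jcss/GottlobLS02}, which is what the paper cites, not~\cite{DBLP:journals/jacm/GottlobMS09}.
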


\paragraph{Organization}
The remainder of the paper is structured as follows. 
Necessary technical preliminaries are introduced in~\Cref{sec:prelims}.
\Cref{sec:lift} formalises the boundary lift, establishing the core equivalence between submodular width and lifted submodular width.
\Cref{sec:geometry} develops the continuous geometry of the realisable submodular body and its dual antiblocker.
\Cref{sec:support-inputs} shows how to construct the universal balanced cuts via node-capacitated multicommodity flows.
\Cref{sec:applications} combines the results of the previous three sections to show new lower bounds of $\subw$ in terms of $\ghw$ under a variety of conditions.
Finally, \Cref{sec:consequences} formalizes the complexity theoretic consequences for CQ evaluation, and \Cref{sec:conclusion} outlines the rich avenues for future research exposed by our new geometric approach to submodular width.

\section{Preliminaries}
\label{sec:prelims}
For a set of sets $X$ we sometimes write $\bigcup X$ as shorthand for $\bigcup_{s\in X} s$. By convention, if $X=\emptyset$, then $\bigcup X = \emptyset$. For set $X$ we also write $\binom{X}{k}$ for the set of all subsets of $X$ with cardinality $k$.

\paragraph{Graphs \& Hypergraphs}
A \emph{hypergraph} $H$ is a pair $(V,E)$ where $V$ is a set of objects and $E \subseteq 2^V$. The \emph{degree} of a vertex $v\in V$, also written $\deg_H(v)$ is the number of edges that contain $v$. The degree of $H$ is $\deg(H) := \max_{v\in V} \deg_H(v)$. Throughout, we assume that hypergraphs have no isolated vertices, i.e., no vertices with degree $0$.

When there are multiple (hyper)graphs under discussion, we use the notation $V(H)$ and $E(H)$ to refer specifically to the vertex and edge sets of $H$.

The \emph{primal graph} (also called Gaifman graph) of a hypergraph $H$ is the graph $P(H)$ on vertex set $V(H)$ in which two distinct vertices $v,w$ are adjacent if $\exists e \in E(H)$ such that $v,w\in e$.
The \emph{dual hypergraph} $H^*$ of $H$ has $E(H)$ as its vertices, and for every $v \in V(H)$ it has the hyperedge $f_v = \{e \in E(H) \mid v \in e\}$.
The graph $P(H^*)$ will be referred to as the line graph of $H$.

For a hypergraph $H=(V,E)$ and a vertex $x\in V$, write
$\Inc_H(x):=\{e\in E \mid x\in e\}$
for the set of edges incident with $x$.
For $\alpha:E\to \rpos$ and $F\subseteq E$, write
$\alpha(F):=\sum_{e\in F}\alpha_e$.
Thus $\alpha(\Inc_G(x))$ is the total $\alpha$-weight of edges incident with $x$.
When $G=P(H^*)$ is fixed, we abbreviate to
$\Inc(x):=\Inc_{P(H^*)}(x)$.

\paragraph{(Sub)Modular Set Functions}
We will be interested in functions of the form
\[
b:2^{S}\to \mathbb R_{\ge 0}.
\]
that map sets to non-negative reals.
We say that such a function is \emph{normalized} if $b(\emptyset)=0$, \emph{monotone} if $b(X)\le b(Y)$ whenever $X\subseteq Y$.

It is \emph{modular} if there are weights $(w_v)_{v\in S}$ with $w_v\ge 0$ such that
\[
b(X)=\sum_{v\in X} w_v
\qquad \forall X\subseteq S.
\]
It is \emph{submodular} if
$b(X)+b(Y)\ge b(X\cup Y)+b(X\cap Y)$ for all $X,Y\subseteq S$.

Let $\Mod(H)$ be the set of all functions $2^{V(H)} \to \mathbb{R}_{\ge 0}$ that are normalised, and modular (and hence implicitly monotone). 
An \emph{edge profile} $\epr$ for hypergraph $H=(V,E)$ is a function $E(H) \to \mathbb{R}_{\ge 0}$. Moreover, for an edge profile $\epr$ define 
$$\pMod{\epr}(H) := \{ b \in \Mod(H) \mid b(X)\le \epr(e) \quad \forall e \in E, \forall X \subseteq e\}.$$
Similarly, $\Sub(H)$ is the set of normalised, monotone, submodular functions, with $\pSub{\epr}(H)$ defined analogously to $\pMod{\epr}$.
We write $\one$ for the edge profile that maps every edge to $1$. $\pMod{\one}$ then corresponds to the so-called \emph{edge-dominated} normalised, monotone, and modular functions.

\paragraph{Branch decompositions}
For hypergraph $H=(V,E)$, a \emph{branch decomposition} of $E$ is a pair $(T,\delta)$ where $T$ is a subcubic (i.e., every vertex has at most degree 3) tree and $\delta$ is a bijection from the leaves of $T$ onto $E$. Every edge $f\in E(T)$ induces a bipartition
\(
(X_f, E\setminus X_f)
\)
of the hyperedge set according to the two components of $T-f$.

We call a function $q:2^{E}\to \mathbb R_{\ge 0}$ an \emph{edge set function} (of $H$). We say such a function is symmetric if $q(X)=q(E\setminus X)$. We will often refer to normalised symmetric edge set functions as \emph{cut weightings}.
For cut weighting $q$ define
\[
\disp_H(q):=\min_{(T,\delta)}\max_{e\in E(T)} q(X_e),
\]
where the minimum runs over branch decompositions $(T,\delta)$ of $E(H)$. By convention, if $|E|\le 1$, $\disp_H(q)$ is always $0$. 

That is, $\disp_H$ is simply branchwidth over some symmetric normalised weighting function for the cuts, similar to previous work on branchwidth of connectivity functions, see e.g., \cite{OumSeymour06,Grohe16}. Note however that we generally do not deal with connectivity functions here as our cut weightings are not necessarily submodular.

\paragraph{Tree decompositions}
A tree decomposition of a hypergraph $H$ is a pair $(T,B)$, where $T$ is a tree and $B : V(T)\to 2^{V(H)}$ such that:
\begin{itemize}
\item for each $e \in E(H)$ there is a node $u \in V(T)$ such that $e \subseteq B(u)$.
\item for every vertex $v \in V(H)$, the set of nodes $\{u \in V(T) \mid v \in B(u)\}$ is non-empty and forms a connected subgraph of $T$.
\end{itemize}
For $b : 2^{V(H)} \to \mathbb{R}$, the \emph{$b$-width} of a tree decomposition $(T,B)$ is  $\max_{u \in V(T)} b(B(u))$.
The \emph{$b$-width} of $H$ is defined as
\[
\tau_b(H) := \min_{(T,B)} \max_{u\in V(T)} b(B(u))
\]
where the minimum ranges over all tree decompositions $(T,B)$ of $H$. The \emph{treewidth} $\tw(H)$ of $H$ is $\tau_b$ where $b : U \mapsto |U|-1$. The \emph{generalised hypertree width} $\ghw(H)=\tau_b$ where $b$ maps set $U \subseteq V(H)$ to its edge cover number~\cite{DBLP:journals/jacm/GottlobMS09}.

The following is well known and easy to verify (see, e.g.,~\cite{LanzingerPODS22}).
\begin{proposition}\label{prop:support-tw-controls-ghw}
For every hypergraph $H$ it holds that $\ghw(H)\le \tw(H^*)+1$.
\end{proposition}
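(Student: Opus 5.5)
The plan is to convert a tree decomposition of the dual hypergraph $H^*$ into a tree decomposition of $H$ whose bags can each be covered by few edges. Take an optimal tree decomposition $(T,B^*)$ of $H^*$ of width $k=\tw(H^*)$. Here $H^*$ has vertex set $E(H)$, so every bag $B^*(u)$ is a set of at most $k+1$ hyperedges of $H$. Define a decomposition $(T,B)$ of $H$ on the same tree $T$ by
\[
B(u) := \bigcup B^*(u) = \bigcup_{e\in B^*(u)} e ,
\]
the union of the hyperedges in the bag.

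Each bag is then the union of at most $k+1$ edges of $H$, so its edge cover number is at most $k+1$. Consequently the width of $(T,B)$ under the edge-cover function is at most $\tw(H^*)+1$. It remains to check that $(T,B)$ is a valid tree decomposition of $H$.

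\emph{Edge coverage.} Let $e\in E(H)$. Since $e$ is a vertex of $H^*$, and we assume $H$ has no isolated vertices, $e$ lies in some hyperedge $f_v$ of $H^*$. Hence $e$ appears in some bag $B^*(u)$, and therefore $e\subseteq B(u)$.

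\emph{Connectivity.} Fix $v\in V(H)$. Then
\[
\{u \mid v\in B(u)\} = \{u \mid B^*(u)\cap f_v\neq\emptyset\} = \bigcup_{e\in f_v}\{u \mid e\in B^*(u)\}.
\]
Each set in this union is connected in $T$, by the connectivity condition for $(T,B^*)$. This step is the one with real content: a union of subtrees need not be connected unless they pairwise meet or are chained together. Here they do pairwise meet. The set $f_v$ is a hyperedge of $H^*$, so some single bag $B^*(u_0)\supseteq f_v$ exists, and every subtree in the union contains $u_0$. A union of subtrees sharing the common node $u_0$ is connected. The set is also nonempty, since it contains $u_0$.

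Combining these, $(T,B)$ is a tree decomposition of $H$ in which every bag has edge cover number at most $\tw(H^*)+1$, so $\ghw(H)\le \tw(H^*)+1$. The only subtle point is using the edge-coverage property of $H^*$ to obtain the common node $u_0$; everything else follows directly from the definitions.
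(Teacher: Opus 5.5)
Your argument is correct and is the standard one the paper has in mind; the paper gives no proof, calling the fact well known and deferring to the literature. The idea is to replace each bag of a tree decomposition of $H^*$ by the union of its hyperedges, and to use the bag containing $f_v$ as a common node of the subtrees for $e\in f_v$. One small misattribution: the absence of isolated vertices in $H$ does not by itself place $e$ in some $f_v$ (that needs $e\neq\emptyset$, and in any case every vertex of $H^*$ occurs in some bag by definition). The assumption is really needed in the connectivity step, where it guarantees $f_v\neq\emptyset$ so that $u_0$ actually lies in $\{u \mid v\in B(u)\}$.
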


The \emph{adaptive width} $\adw$~\cite{adw} and \emph{submodular width} $\subw$~\cite{Marx13} of $H$ are defined as  
\[
\adw(H) = \sup_{b \in \pMod{\one}(H)} \tau_b(H) \qquad \subw(H) = \sup_{b \in \pSub{\one}(H)} \tau_b(H)
\]

It is known that $\ghw$ is always at least as high as $\subw$. Much of this paper is concerned with the question under which conditions the two measures are linked more closely.
\begin{proposition}[\citet{Marx13}]
\label{ghw.subw}
For every hypergraph $H$ it holds that $\adw(H) \le \subw(H) \le \ghw(H)$.
\end{proposition}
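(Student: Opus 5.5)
The inequality chain $\adw(H)\le\subw(H)\le\ghw(H)$ splits into two independent inequalities, and in both the argument is a comparison of the sets of witness functions over which the supremum is taken, combined with the definition of $\tau_b$.

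For $\adw(H)\le\subw(H)$, I will show $\pMod{\one}(H)\subseteq\pSub{\one}(H)$. A modular function $b(X)=\sum_{v\in X}w_v$ with $w_v\ge 0$ is normalised and monotone, since the weights are nonnegative. It is also submodular, in fact with equality, because $b(X)+b(Y)=b(X\cup Y)+b(X\cap Y)$ by inclusion–exclusion on the weighted sum. The edge-domination constraint $b(X)\le 1$ for $X\subseteq e$ is identical in the two definitions. So the supremum defining $\subw$ ranges over a superset of the one defining $\adw$, and the inequality follows immediately.

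For $\subw(H)\le\ghw(H)$, fix $b\in\pSub{\one}(H)$ and a tree decomposition $(T,B)$ of optimal $\ghw$ width $k$, so every bag $B(u)$ is covered by edges $e_1,\dots,e_k$. The plan is to prove the pointwise bound $b(B(u))\le\rho(B(u))$, where $\rho$ is the edge cover number. By monotonicity, $b(B(u))\le b(e_1\cup\dots\cup e_k)$. Submodularity together with normalisation gives subadditivity, $b(X\cup Y)\le b(X)+b(Y)-b(X\cap Y)\le b(X)+b(Y)$, using $b\ge 0$. Hence $b(B(u))\le\sum_i b(e_i)\le k$ by edge domination. Thus $\tau_b(H)\le$ the $b$-width of $(T,B)$, which is at most $\ghw(H)$. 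Taking the supremum over $b$ gives the claim.

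Neither step is a real obstacle. The one point needing care is subadditivity, which relies on $b(X\cap Y)\ge 0$. This holds because the codomain is $\rpos$; it would also follow from monotonicity together with $b(\emptyset)=0$. I also need that the cover edges lie in $E$, so that the edge-dominated constraint $b(e_i)\le\epr(e_i)=1$ applies; this is part of the definition of the edge cover number.
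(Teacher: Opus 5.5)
Your proof is correct, and it is the standard argument. The paper gives no proof of its own for this proposition and simply cites \citet{Marx13}. Your two steps are the usual derivation: first, the inclusion $\pMod{\one}(H)\subseteq\pSub{\one}(H)$; second, bounding $b(B(u))$ by the number of edges covering $B(u)$ via monotonicity, subadditivity and edge domination.

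The only thing worth noting concerns the stronger, well-known bound $\subw(H)\le\fhw(H)$, which the paper quotes later. Proving that would require the fractional analogue of your subadditivity step, for fractional edge covers. For $\ghw$ as stated, your integral argument suffices.
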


\paragraph{Cones \& Gauges}

Let $X$ be a real vector space. A set $C\subseteq X$ is a \emph{cone} if
whenever $x\in C$ and $\alpha\ge 0$, we also have $\alpha x\in C$. In other
words, together with every point, the set contains the whole ray starting at
the origin and passing through that point.

A cone $C$ is \emph{convex} if whenever $x,y\in C$ and $\alpha,\beta\ge 0$, we
have $\alpha x+\beta y\in C$. Equivalently, $C$ is closed under addition and
under multiplication by nonnegative scalars.

If $X$ is equipped with a topology (for example, if $X=\mathbb{R}^n$ with the
usual topology), then a convex cone $C\subseteq X$ is a \emph{closed convex
cone} if it is closed as a subset of $X$.

For any nonempty set $A\subseteq \mathbb R_{\ge 0}^{N}$, define its \emph{gauge} (also called a Minkowski functional) as
\[
\gauge_A(x):=\inf\{\alpha\ge 0\mid x\in \alpha \cdot A\},
\]
with value $+\infty$ if no such $\alpha$ exists.

A multivariate function $f(x_1,\dots,x_n)$ is \emph{degree-1 positively homogeneous} if
\[
f(\alpha \, x_1,\dots, \alpha\, x_n) = \alpha\, f(x_1,\dots,x_n)
\]

\section{Lifted Cuts and Widths}
\label{sec:lift}

In this section we introduce our first major reframing of submodular width. Part of the technical difficulty in understanding the submodular width of a hypergraph lies in reasoning about $\tau_b$ for arbitrary $b\in\pSub{\one}$. In this section we identify an alternative way to understand the problem by instead considering  cuts of the hypergraph, that themselves are weighted by some symmetric edge set function $\lambda : 2^E \to \rpos$. 

Concretely, we introduce a branchwidth analogue of submodular width (and adaptive width) that is equivalent to $\subw$ up to a constant factor $3/2$. The key insight in constructing this branchwidth parameter is to base it specifically on cut weightings that are induced by $b \in \pSub{\one}$.

\begin{definition}[Boundary Lift]
\label{def:boundarylift}
For hypergraph $H=(V,E)$, define the \emph{boundary function} 
\[
\partial_H(F):= \left(\bigcup F\right)\cap\left(\bigcup (E\setminus F)\right)
\]
consisting of all vertices that are in both sides of the edge bipartition induced by $F$.

For normalised and monotone $b: 2^V \to \rpos$ define the \emph{boundary lift} of $b$ as $\lambda_b := b \circ \partial_H$, i.e., 
\[
\lambda_b(F)=b(\partial_H(F)) \qquad \forall F \subseteq E
\]
\end{definition}

\begin{proposition}
\label{prop:lift.sym}
Let $b : {2^V} \to \rpos$ be normalised. Then its boundary lift $\lambda_b$  is a normalised symmetric edge set function.
\end{proposition}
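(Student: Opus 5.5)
The plan is to verify the three required properties of $\lambda_b = b\circ\partial_H$ directly from the definitions: it maps into $\rpos$, it is symmetric, and it is normalised. The map $\partial_H$ sends subsets of $E$ to subsets of $V$, and $b$ takes values in $\rpos$, so $\lambda_b : 2^E \to \rpos$ is a well-defined edge set function. All the work reduces to two set identities about $\partial_H$.

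\emph{Symmetry.} Fix $F\subseteq E$ and compute $\partial_H(E\setminus F)$. By definition,
\[
\partial_H(E\setminus F) = \Bigl(\bigcup (E\setminus F)\Bigr)\cap\Bigl(\bigcup (E\setminus(E\setminus F))\Bigr).
\]
Since $E\setminus(E\setminus F)=F$, this equals $\bigl(\bigcup(E\setminus F)\bigr)\cap\bigl(\bigcup F\bigr)$. By commutativity of intersection, that is $\partial_H(F)$. Hence $\lambda_b(E\setminus F)=b(\partial_H(E\setminus F))=b(\partial_H(F))=\lambda_b(F)$.

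\emph{Normalisation.} Here $\partial_H(\emptyset)=\bigl(\bigcup\emptyset\bigr)\cap\bigl(\bigcup E\bigr)$, and the paper's convention gives $\bigcup\emptyset=\emptyset$. So $\partial_H(\emptyset)=\emptyset$ and $\lambda_b(\emptyset)=b(\emptyset)=0$, because $b$ is normalised. By the symmetry just shown, $\lambda_b(E)=\lambda_b(\emptyset)=0$ as well, which is convenient for cut weightings even though it is not formally required.

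There is no real obstacle. The only point to watch is the empty-union convention, which the preliminaries fix explicitly, so the proof is a routine two-line verification. Monotonicity of $b$ is not needed for this statement; it is only assumed in \Cref{def:boundarylift} for later use.
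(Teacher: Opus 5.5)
Your proof is correct and follows the same route as the paper: you obtain normalisation from $\partial_H(\emptyset)=\emptyset$ together with $b(\emptyset)=0$, and symmetry from the identity $\partial_H(F)=\partial_H(E\setminus F)$. You also spell out that set identity and the empty-union convention explicitly, which the paper leaves implicit.
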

\begin{proof}
Recall that $b$ is normalised and  $\partial_H(\emptyset)=\emptyset$, hence $\lambda_b(\emptyset)=b(\emptyset) =0$. 
Symmetry is immediate from $\partial_H(F)=\partial_H(E\setminus F)$.
\end{proof}

\begin{definition}[Lifted width]\label{def:lifted-width}
The \emph{lifted submodular width} of $H$ is
\[
\subwlift(H):=
\sup\{\disp_H(\lambda_b)\mid b\in \pSub{\one}(H) \}.
\]
Analogously, define \emph{lifted adaptive width} $\adwlift(H) = \sup\{\disp_H(\lambda_b)\mid b\in \pMod{\one}(H) \}$.
\end{definition}

Lifted submodular width is thus a measure of how expensive it is to cut a hypergraph, under certain cost functions. Importantly, this is in fact very closely tied to standard submodular width. This is best understood by thinking of \(\partial_H(F)\) as the ``boundary'' of an edge separation. Whenever the hyperedges are split into \(F\) and \(E(H)\setminus F\), the vertices in \(\partial_H(F)\) are exactly those that still connect the two sides. Any decomposition that separates the two parts must therefore keep these vertices visible at the separator. Branch decompositions measure the size of this boundary directly, whereas tree decompositions measure the size of bags needed to contain it. So the two widths are measuring the same phenomenon from two different points of view.

\begin{theorem}
\label{thm:lift.to.tau}
Let $H=(V,E)$ be a hypergraph and let $b\in \pSub{\one}(H)$.
Then
\[
\disp_H(\lambda_{b})\le \tau_{b}(H)\le
\max\Bigl\{1,\frac32\,\disp_H(\lambda_{b})\Bigr\}.
\]
\end{theorem}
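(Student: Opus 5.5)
The plan is to prove the two inequalities separately, using the standard translation between tree decompositions and branch decompositions. Throughout, $b\in\pSub{\one}(H)$ is fixed and we use only that $b$ is normalised, monotone and submodular, together with $b(e)\le 1$ for $e\in E$. The degenerate case $|E|\le 1$ is trivial: there $\disp_H(\lambda_b)=0$ by convention, and the one-bag decomposition $\{e\}$ has $b$-width at most $1$.

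\emph{Lower bound $\disp_H(\lambda_b)\le\tau_b(H)$.}
\begin{enumerate}[label=(\arabic*)]
\item Fix a tree decomposition $(T,B)$ of $b$-width $w$.
\item For every $e\in E$, choose a node $u_e$ with $e\subseteq B(u_e)$ and attach a new leaf labelled $e$ to $u_e$.
\item Delete subtrees containing no labelled leaf and suppress degree-2 vertices.
\item Split every node of degree $\ge 4$ into a path of degree-3 nodes carrying copies of the same bag. This gives a branch decomposition $(T',\delta)$ in which every edge $f$ of $T'$ lies between two nodes whose (possibly copied) bags are $B(u),B(u')$, or is a leaf edge at $u_e$.
\item Let $f$ induce the bipartition $(F,E\setminus F)$. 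A vertex $v\in\partial_H(F)$ lies in an edge on each side, hence in bags on each side of $f$. By connectivity of the occurrences of $v$ it then lies in $B(u)\cap B(u')$.
\item For a leaf edge we have $\partial_H(\{e\})\subseteq e\subseteq B(u_e)$.
\item Monotonicity gives $\lambda_b(F)=b(\partial_H(F))\le w$ in every case. Taking the minimum over $(T,B)$ proves the claim.
\end{enumerate}

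\emph{Upper bound.}
\begin{enumerate}[label=(\arabic*)]
\item Take a branch decomposition $(T,\delta)$ with $\max_f\lambda_b(X_f)=k$.
\item Define bags on the same tree $T$: a leaf $\ell$ gets $B(\ell)=\delta(\ell)$, and an internal node $t$ with incident tree edges $f_1,f_2,f_3$ (or $f_1,f_2$) gets
\[
B(t)=\partial_H(X_{f_1})\cup\partial_H(X_{f_2})\cup\partial_H(X_{f_3}),
\]
orienting each $X_{f_i}$ away from $t$.
\item Edge coverage holds via the leaves.
\item For connectivity, the tree edges $f$ with $v\in\partial_H(X_f)$ are exactly those separating two leaves whose hyperedges contain $v$. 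These form the minimal subtree spanning such leaves, and the nodes containing $v$ are precisely the nodes of that subtree, so they induce a connected subgraph.
\item Leaf bags have $b$-weight $\le 1$.
\item For an internal node, the three parts $X_{f_1},X_{f_2},X_{f_3}$ partition $E$. A vertex in $\partial_H(X_{f_1})$ lies in edges of two distinct parts, say $X_{f_1}$ and $X_{f_j}$, and hence also lies in $\partial_H(X_{f_j})$. Thus every vertex of $B(t)$ is covered at least twice by the three boundaries.
\item Giving each boundary weight $\tfrac12$ is therefore a fractional cover of $B(t)$.
\item Fractional subadditivity of normalised monotone submodular functions then yields $b(B(t))\le\tfrac12\sum_i b(\partial_H(X_{f_i}))\le\tfrac32 k$.
\item With only two incident edges the two boundaries coincide and the bound is $k$.
\end{enumerate}

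\emph{Main obstacle.} The step needing care is the fractional subadditivity bound: if every element of $U=A_1\cup A_2\cup A_3$ lies in at least two of the $A_i$, then $2b(U)\le b(A_1)+b(A_2)+b(A_3)$. I would prove it directly from submodularity and monotonicity, without citation:
\[
b(A_1)+b(A_2)\ge b(A_1\cup A_2)+b(A_1\cap A_2)\ge b(U)+b(A_1\cap A_2),
\]
where the second step uses $A_1\cup A_2\supseteq U\setminus A_3$. It remains to show $b(A_1\cap A_2)+b(A_3)\ge b(U)$. By submodularity this sum is at least $b\bigl((A_1\cap A_2)\cup A_3\bigr)$, and $(A_1\cap A_2)\cup A_3=U$, because any element outside $A_3$ must lie in both $A_1$ and $A_2$.
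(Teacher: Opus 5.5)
Your proposal is correct and is essentially the paper's proof. The lower bound uses the same leaf-attachment and subcubic-refinement argument, and the upper bound uses the same bags (your union of the three boundaries is exactly the paper's set of vertices lying in hyperedges from two branches), bounded by the same submodularity chain, which you shorten to two steps by using the double cover directly. One small slip: the step $b(A_1\cup A_2)\ge b(U)$ should be justified by $A_1\cup A_2=U$, because no element of $U$ is covered by $A_3$ alone; the containment $A_1\cup A_2\supseteq U\setminus A_3$ that you cite holds trivially and does not give the inequality.
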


\begin{proof}
For the first inequality, let $(T,B)$ be a tree decomposition of $H$ of
$b$-width $w$.
For each hyperedge $e\in E(H)$ choose a node $u_e\in V(T)$ with $e\subseteq B(u_e)$, and attach a new leaf $\ell_e$ to $u_e$ labelled by $e$. Let $T^+$ be a minimal subtree that spans all leaves $\ell_e$ of the resulting tree. Replace every vertex of $T^+$ of degree larger than $3$ by an arbitrary subcubic refinement on the same incident branches. Every internal node created from the subcubic refinement of a node $t$ is also assigned the bag $B(t)$. Thus, the resulting leaf-labelled tree is a branch decomposition $(T',\delta)$ of $E(H)$.

We claim that every cut of $(T',\delta)$ is separated by some original bag $B(u)$. Indeed, if $f$ is an edge of $T'$ incident to a leaf $\ell_e$, then
\[
\partial_H(X_f)\subseteq e\subseteq B(u_e).
\]
If $f$ comes from an original edge $u\,u'$ of $T$, then any vertex meeting hyperedges on both sides of the cut must belong to both $B(u)$ and $B(u')$, because the bags containing that vertex form a connected subtree crossing the edge $u\,u'$. Finally, if $f$ was created inside the local subcubic refinement of a node $u$, then the two sides of the cut correspond to a partition of the branches incident with $u$. Any vertex meeting hyperedges on both sides again has its bag-subtree meeting two different incident branches at $u$, and therefore contains $u$ itself. In every case,
\(
\partial_H(X_f)\subseteq B(u)
\)
for some original node $u$.

Now fix such a cut $X_f$. Since $\partial_H(X_f)\subseteq B(u)$ and $b$ is monotone,
\[
\lambda_b(X_f)=b(\partial_H(X_f))\le b(B(u))\le w.
\]
Thus every cut of $(T',\delta)$ has $\lambda_b$-value at most $w$, and therefore
\(
\disp_H(\lambda_b)\le \tau_b(H).
\)

For the second inequality, let $(T,\delta)$ be a branch decomposition of $E$
with width $k=\disp_H(\lambda_{b})$.
Suppressing degree-$2$ vertices -- i.e., contracting the two incident edges of each such vertex -- does not change the cuts, so we may assume that every internal vertex of $T$ has degree $3$.
If $|E(H)|\le 1$, then by convention $\disp_H(\lambda_b)=0$, and the trivial
one-bag tree decomposition gives $\tau_b(H)\le 1$, so the claim is immediate.

Assume $|E(H)|\ge 2$. For each leaf $\ell$ labelled by a hyperedge $e$, define $B(\ell)=e$.

For each internal node $u$ of $T$, let $F_1,F_2,F_3$ be the three hyperedge sets corresponding to the components of $T-u$, and define
\[
B(u):=
\left(\bigcup F_1\cap  \bigcup F_2 \right)
\cup
\left(\bigcup F_2\cap \bigcup F_3\right)
\cup
\left(\bigcup F_3\cap \bigcup F_1\right).
\]
We claim that these bags form a tree decomposition of $H$. Every hyperedge $e$ is contained in its leaf bag $B(\ell)=e$. Now fix $v\in V(H)$, and let $T_v$ be the minimal subtree of $T$ spanning all leaves labelled by hyperedges containing $v$. Then a node $u$ satisfies $v\in B(u)$ if and only if $u\in V(T_v)$: this is immediate for leaves, and for an internal node $u$ with induced edge sets $F_1,F_2,F_3$, the vertex $v$ belongs to $B(u)$ exactly when hyperedges containing $v$ occur in at least two of the three components of $T-u$, which is equivalent to $u\in V(T_v)$. Hence the bags containing $v$ form the connected subtree $T_v$.

Fix an internal node $u$, and write
$S_i:=\partial_H(F_i)$  for $i=1,2,3$.
If $x\in B(u)$, then $x$ lies in at least two of the three sets
$X_i:=\bigcup F_i$ for $i=1,2,3$.

Suppose that $x\in X_i\cap X_j$ with $i\neq j$, then $x$ is incident with some hyperedge of $F_i$ and some hyperedge outside $F_i$, so $x\in \partial_H(F_i)=S_i$. Likewise $x\in S_j$. Therefore every $x\in B(u)$ belongs to at least two of $S_1,S_2,S_3$. In particular,
\[
B(u)\subseteq U:=S_1\cup S_2\cup S_3
\qquad\text{and}\qquad
B(u)\subseteq I:=(S_1\cap S_2)\cup(S_2\cap S_3)\cup(S_3\cap S_1).
\]

Applying submodularity thrice (always on the two left-most terms) yields a single chain of inequalities:
\begin{align*}
b(S_1) + b(S_2) + b(S_3) &\ge b(S_1 \cup S_2) + b(S_3) + b(S_1 \cap S_2) \\
 &\ge b(S_1 \cup S_2) 
+ b\big((S_1 \cap S_2) \cup S_3\big) 
+ b(S_1 \cap S_2 \cap S_3)\\
& \ge b(\underbrace{S_1 \cup S_2 \cup S_3}_{U}) 
  + b\big(\underbrace{(S_1 \cup S_2) \cap ((S_1 \cap S_2) \cup S_3)}_I\big)
 + \underbrace{b(S_1 \cap S_2 \cap S_3)}_{\ge 0}
\end{align*}
Observe that
\(
(S_1\cup S_2)\cap((S_1\cap S_2)\cup S_3)
=
(S_1\cap S_2)\cup(S_2\cap S_3)\cup(S_3\cap S_1)=I.
\)
Now, by nonnegativity of $b$, we may drop the last term and by monotonicity $b(U), b(I)\ge b(B(u))$, hence
\[
b(S_1) + b(S_2) + b(S_3) \ge b(U) + b(I) \ge 2\,b(B(u)).
\]
Using $\lambda_b(F_i)=b(S_i)$ and $\lambda_b(F_i)\le k$, we obtain
\[
2\,b(B(u)) \le \lambda_b(F_1)+\lambda_b(F_2)+\lambda_b(F_3) \le 3k.
\]

Thus $b(B(u))\le \frac32\,k$ for any internal node $u$.

 Leaf bags satisfy $b(B(\ell))=b(e)\le 1$ because $b\in\pSub{\one}(H)$, and  therefore also
\[
\tau_{b}(H)\le \max\Bigl\{1,\frac32\,\disp_H(\lambda_{b})\Bigr\}.
\]
\end{proof}

\begin{corollary}
\label{cor:lifted-bw}
For every hypergraph $H$,
\[
\subwlift(H)\le \subw(H)\le
\max\Bigl\{1,\frac32\,\subwlift(H)\Bigr\}.
\]
\[
\adwlift(H)\le \adw(H)\le
\max\Bigl\{1,\frac32\,\adwlift(H)\Bigr\}.
\]

\end{corollary}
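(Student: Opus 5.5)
The corollary follows from \Cref{thm:lift.to.tau} by taking suprema over the relevant witness classes. I will argue the submodular case and then point out what changes for the modular case.

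For the left inequality, fix any $b\in\pSub{\one}(H)$. The first half of \Cref{thm:lift.to.tau} gives $\disp_H(\lambda_b)\le\tau_b(H)$, and by definition $\tau_b(H)\le\subw(H)$. Taking the supremum over $b$ yields $\subwlift(H)\le\subw(H)$; this also holds trivially when either side is infinite.

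For the right inequality, again fix $b\in\pSub{\one}(H)$. The second half of \Cref{thm:lift.to.tau} gives
\[
\tau_b(H)\le\max\{1,\tfrac32\disp_H(\lambda_b)\}\le\max\{1,\tfrac32\subwlift(H)\},
\]
since the map $t\mapsto\max\{1,\tfrac32 t\}$ is monotone. The right-hand side no longer depends on $b$, so taking the supremum over $b$ gives $\subw(H)\le\max\{1,\tfrac32\subwlift(H)\}$.

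The adaptive-width statement is handled the same way, restricting $b$ to $\pMod{\one}(H)$. This requires $\pMod{\one}(H)\subseteq\pSub{\one}(H)$, so that \Cref{thm:lift.to.tau} applies to every modular witness. The inclusion holds because a normalised modular function with nonnegative weights is monotone and satisfies submodularity with equality, and the edge-domination constraint is the same in both classes.

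There is no real obstacle here. The only points that need care are the monotonicity of the outer bound in $t$ and the inclusion of the modular class in the submodular one.
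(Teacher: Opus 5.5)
Your proposal is correct and is exactly the argument the paper intends. The paper states the corollary without a written proof, as the immediate consequence of taking suprema over $b\in\pSub{\one}(H)$, resp.\ $b\in\pMod{\one}(H)\subseteq\pSub{\one}(H)$, in both inequalities of \Cref{thm:lift.to.tau}, using that $t\mapsto\max\{1,\tfrac32 t\}$ is monotone.
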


At first sight it might not be clear what is gained by this step. We argue that~\Cref{cor:lifted-bw} in fact has real technical benefits. First, cuts are simpler to technically analyse than tree decompositions. Specifically to argue for high $\subw$, one must argue that for some $b\in \pSub{\one}$, some expensive bag under $b$ must occur in every tree decomposition. In general this is a highly challenging task, and it remains a major problem of current database theory how to do so. In the lifted setting the situation is in a sense simpler, we give weights to cuts $F, E\setminus F$, and roughly speaking it is then sufficient to demonstrate a cut weighting such that every bipartition of $E$ has large weight. To illustrate this point, we very simply prove a fairly general submodular width lower bound for a wide range of ``hypergrid'' style hypergraphs using boundary lifts in~\Cref{app:lift.ex}.

However, alone this approach is still limited. In particular, we are limited to considering cut weightings that are induced by a submodular function in the sense of \Cref{def:boundarylift}. However, we will show in the next section that this weakness cannot only be mitigated, but turned into a strength.

\section{The Geometry of the Submodular Realisable Body}
\label{sec:geometry}

Throughout this section assume a fixed hypergraph $H=(V,E)$. We identify each
set function $b:2^V\to \mathbb R$ with its coordinate vector
$(b(X))_{X\subseteq V}\in \mathbb R^{2^V}$, and we freely pass between the
function and vector viewpoints. Accordingly, for $\alpha\in \mathbb R$ and set
functions $b,c:2^V\to \mathbb R$, scalar multiplication and addition are taken
pointwise:
\[
(\alpha \cdot b)(X):=\alpha\cdot b(X),\qquad (b+c)(X):=b(X)+c(X)\qquad  \forall X\subseteq V.
\]
Likewise, inequalities between set functions are understood pointwise, i.e., $b \le c$ means $b(X) \le c(X)$ for all $X \subseteq V$.

We use standard facts from analysis, finite-dimensional convex geometry and topology. 
In particular, we freely use that linear maps are continuous, that continuous images and coordinate projections of compact sets are compact, and that closed subsets of compact sets are compact. 
For a comprehensive treatment of these topics, we refer the reader to Rockafellar~\cite{rockafellar1997convex}.

The following additional notation will be convenient.
For a set $A\subseteq \rpos^N$, write
\[
\downarrow \!A:=\{x\in \rpos^N \mid \exists a\in A \text{ with } x\le a\}
\]
for its (pointwise) downward closure.
Moreover, we let $\Sym(H)$ denote the set of all symmetric nonnegative edge-set functions of
$H$, that is,
\[
\Sym(H):=
\{q:2^E\to \rpos \mid q(F)=q(E\setminus F)\ \forall F\subseteq E\}.
\]

\begin{definition}[Boundary Lift Operator]
\label{def:liftingmap}
Define the \emph{boundary-lift operator} $\Lambda_H:\mathbb R^{2^V}\to \mathbb R^{2^E}$
as
\[
\Lambda_H :  b \mapsto b \circ \partial_H.
\]

\end{definition}

Even when $b$ is normalised and submodular on $2^V$, its boundary lift
$\Lambda_H b$ need not be submodular on $2^E$.
Thus $\Lambda_H b$ is not, in general, a connectivity function.
Nevertheless, symmetry and nonnegativity suffice for $\disp_H$.

\begin{lemma}\label{lem:sub-slice-geometry}
For every edge profile $\epr$, the set $\pSub{\epr}(H)$ is convex and compact.
\end{lemma}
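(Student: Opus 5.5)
The plan is to view $\pSub{\epr}(H)$ as a subset of the finite-dimensional space $\mathbb R^{2^V}$ and to write it as an intersection of closed half-spaces, one for each defining constraint. By the Heine--Borel theorem, compactness then reduces to closedness together with boundedness, and convexity follows because each constraint is linear.

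First, every defining condition is a linear (in)equality in the coordinates $(b(X))_{X\subseteq V}$:
\begin{itemize}
\item normalisation is $b(\emptyset)=0$;
\item nonnegativity is $b(X)\ge 0$;
\item monotonicity is $b(X)\le b(Y)$ for $X\subseteq Y$;
\item submodularity is $b(X)+b(Y)-b(X\cup Y)-b(X\cap Y)\ge 0$;
\item edge domination is $b(X)\le \epr(e)$ for all $e\in E$ and $X\subseteq e$.
\end{itemize}
Each of these sets is a closed half-space or hyperplane, hence closed and convex. There are finitely many of them, so their intersection $\pSub{\epr}(H)$ is a closed convex polyhedron.

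Second, for boundedness, the bound has to be derived rather than read off the constraints, since edge domination only controls $b$ on subsets of edges. Fix any $X\subseteq V$. Because hypergraphs have no isolated vertices, every $v\in X$ lies in some edge $e_v$. Repeated submodularity, together with normalisation and nonnegativity, gives subadditivity $b(A\cup B)\le b(A)+b(B)$. Applying this to the singletons gives
\[
b(X)\le \sum_{v\in X} b(\{v\})\le |X|\cdot\max_{e\in E}\epr(e)\le |V|\max_{e}\epr(e),
\]
where each $b(\{v\})\le\epr(e_v)$ holds by edge domination with $\{v\}\subseteq e_v$. Combined with $b\ge 0$, this places $\pSub{\epr}(H)$ inside a box, so it is bounded and hence compact.

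The only step needing care is this boundedness argument: one must invoke the no-isolated-vertices convention and subadditivity, rather than a direct constraint, to control $b$ on sets not contained in a single edge. Everything else is routine.
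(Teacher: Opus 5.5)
Your proposal is correct and follows essentially the same route as the paper: convexity and closedness come from the finitely many linear constraints, and boundedness from the no-isolated-vertices convention, the bound $b(\{v\})\le \epr(e_v)$, and subadditivity $b(X)\le\sum_{v\in X}b(\{v\})$. The only cosmetic difference is that you read $b(\{v\})\le\epr(e_v)$ directly off the edge-domination constraint, while the paper obtains it via monotonicity, $b(\{v\})\le b(e_v)\le \epr(e_v)$.
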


\begin{proof}
The set $\pSub{\epr}(H)$ is convex because normalization, monotonicity,
submodularity, and the constraints $b(e)\le \epr(e)$ for $e\in E$ are all
preserved under convex combinations.

Moreover, $\pSub{\epr}(H)\subseteq \mathbb R^{2^V}$ is defined by finitely many
closed conditions:
\begin{align*}
& b(\emptyset)=0, \qquad  b(X)\le b(Y) \qquad \forall X \subseteq Y \subseteq V, \\
&b(X)+b(Y)\ge b(X\cup Y)+b(X\cap Y)\qquad\forall X,Y\subseteq V,
\end{align*}
and $b(e)\le \epr(e)$  for all $e\in E$.
Hence $\pSub{\epr}(H)$ is closed.

To show boundedness, choose for each vertex $v\in V$ a hyperedge $e_v\in E$
with $v\in e_v$. Recall that we assume throughout that every vertex of $H$ lies in some hyperedge. By monotonicity,
\[
0\le b(\{v\})\le b(e_v)\le \epr(e_v).
\]
Now recall that normalised submodular functions are subadditive, i.e.,
\[
b(X\cup Y)\le b(X)+b(Y)\qquad\forall X,Y\subseteq V.
\]
Repeated application of this inequality then also implies
\[
b(X)\le \sum_{v\in X} b(\{v\})\qquad\forall X\subseteq V.
\]
Therefore
\[
0\le b(X)\le \sum_{v\in X}\epr(e_v)\le \sum_{v\in V}\epr(e_v)
\qquad\forall X\subseteq V.
\]
So every coordinate $b(X)$ is uniformly bounded on $\pSub{\epr}(H)$.

Thus $\pSub{\epr}(H)$ is closed and bounded in a finite-dimensional space, and
therefore compact.
\end{proof}

\begin{definition}[Realisable Body and Submodular Gauge]
For an edge profile $\epr$, define the \emph{submodular realisable body}
\[
K^\Sub_{\epr}(H):=
\downarrow \!\Lambda_H(\pSub{\epr}(H)).
\]

In other words, $K^\Sub_\epr(H)$ is the (pointwise) downward closure of $\Lambda_H(\pSub{\epr}(H))$ in $\rpos^{2^E}$.

We refer to the gauge on the submodular realisable body as  the \emph{submodular gauge}, namely,
\[
\gauge^\Sub_{\epr,H}(q):=
\gauge_{K^\Sub_{\epr}(H)}(q)
=
\inf\{\alpha\ge 0\mid q\in \alpha\cdot K^\Sub_{\epr}(H)\}.
\]
We will primarily be interested in the unit edge profile $\one$, for this we write
$\gauge_H^\Sub:=\gauge^\Sub_{\one,H}$.
\end{definition}

While our results primarily rely on the unit profile $\one$ (matching the $\pSub{\one} $ constraint), we believe it beneficial to establish the geometric results for arbitrary edge profiles $\epr$. On the one hand, this illustrates that there is a robust principle underlying these results. On the other hand, edge profiles in a sense express constraints on the data, similar to the study of \emph{degree-aware} submodular width~\cite{DBLP:conf/pods/Khamis0S17,panda25}. Thus, this more general technical development may be useful for future work along those lines.

\begin{proposition}
\label{prop:body-geometry}
For every edge profile $\epr$, the set $K^\Sub_{\epr}(H)$ is nonempty, compact,
convex, and coordinatewise downward closed.
\end{proposition}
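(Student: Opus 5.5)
The plan is to check the four properties in turn. The main tool is \Cref{lem:sub-slice-geometry}, which says $\pSub{\epr}(H)$ is convex and compact. We also use that $\Lambda_H$ is linear: $(\alpha b+\beta c)\circ\partial_H=\alpha (b\circ\partial_H)+\beta(c\circ \partial_H)$. Write $L:=\Lambda_H(\pSub{\epr}(H))$. Note that $L\subseteq\rpos^{2^E}$, because every $b\in\pSub{\epr}(H)$ is nonnegative.

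\emph{Nonemptiness.} The zero function lies in $\pSub{\epr}(H)$, since it is normalised, monotone and submodular, and $0\le \epr(e)$. Hence $0=\Lambda_H 0\in L\subseteq K^\Sub_\epr(H)$.

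\emph{Downward closure.} This holds by construction. If $x\le y$ with $x\in\rpos^{2^E}$ and $y\le a$ for some $a\in L$, then $x\le a$, so $x\in K^\Sub_\epr(H)$.

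\emph{Convexity.} Take $x\le a=\Lambda_H b$ and $x'\le a'=\Lambda_H b'$ with $b,b'\in\pSub{\epr}(H)$, and let $t\in[0,1]$. Then
\[
tx+(1-t)x'\le ta+(1-t)a'=\Lambda_H\bigl(tb+(1-t)b'\bigr).
\]
The point $tb+(1-t)b'$ lies in $\pSub{\epr}(H)$ by convexity, and the left-hand side is still nonnegative. So $tx+(1-t)x'\in K^\Sub_\epr(H)$.

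\emph{Compactness.} This is the only step that needs some care, because the downward closure of a compact set need not obviously be closed.
\begin{enumerate}
\item $L$ is compact, as the continuous (linear) image of the compact set $\pSub{\epr}(H)$.
\item Boundedness: every $x\in K^\Sub_\epr(H)$ satisfies $0\le x\le a$ for some $a\in L$, and $L$ is bounded. So $K^\Sub_\epr(H)$ is bounded.
\item Closedness, via a sequence argument. Let $x_n\to x$ with $x_n\le a_n\in L$. Since $L$ is compact, pass to a subsequence with $a_n\to a\in L$.
\item Coordinatewise inequalities and nonnegativity are preserved under limits, so $0\le x\le a$, and hence $x\in K^\Sub_\epr(H)$.
\end{enumerate}
An equivalent formulation writes $K^\Sub_\epr(H)$ as the projection onto the first factor of the closed, bounded set $\{(x,a)\in \rpos^{2^E}\times L\mid x\le a\}$. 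Either way, $K^\Sub_\epr(H)$ is closed and bounded in finite dimensions, and therefore compact.
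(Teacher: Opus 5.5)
Your proof is correct and matches the paper's argument essentially step for step. Nonemptiness comes from the zero function, and convexity from pushing the convex combination through the linear map $\Lambda_H$. Compactness uses the compact image $\Lambda_H(\pSub{\epr}(H))$; your sequential closedness argument is an unpacked version of the paper's projection of the compact set $\{(q,u)\mid q\le u\}$, which you also mention yourself.
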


\begin{proof}
The zero vector belongs to $\pSub{\epr}(H)$, so $0\in K^\Sub_{\epr}(H)$.
Downward closure holds by definition.

For convexity, let $q_i\le \Lambda_H(b_i)$ with $b_i\in \pSub{\epr}(H)$ for
$i=1,2$, and let $0\le \alpha\le 1$. Then
\[
\alpha q_1+(1-\alpha)q_2
\le
\alpha \Lambda_H(b_1)+(1-\alpha)\Lambda_H(b_2)
=
\Lambda_H(\alpha b_1+(1-\alpha)b_2).
\]
Since $\pSub{\epr}(H)$ is convex by \Cref{lem:sub-slice-geometry}, also
$\alpha b_1+(1-\alpha)b_2\in \pSub{\epr}(H)$,
and hence $\alpha q_1+(1-\alpha)q_2\in K^\Sub_{\epr}(H)$. That is, $K^\Sub_\epr(H)$ is convex as well.

For compactness, \Cref{lem:sub-slice-geometry} states that
$\pSub{\epr}(H)$ is compact. Since $\Lambda_H$ is linear, hence continuous, the image
$U:=\Lambda_H(\pSub{\epr}(H))$
is compact in $\mathbb R_{\ge 0}^{2^E}$.

Because $U$ is compact it is also bounded. That is, there exists
$B<\infty$ such that
$u(F)\le B$ for all $u\in U$ and all $F\subseteq E(H)$.
Now consider
\[
S:=
\{(q,u)\in \rpos^{2^E}\times U\mid q\le u\}.
\]
Since $q\le u$ coordinatewise and every coordinate of every $u\in U$ is at most $B$,
we have
\[
S\subseteq [0,B]^{2^E}\times U.
\]
Moreover, $S$ is closed in $\mathbb R^{2^E}\times \mathbb R^{2^E}$, hence also closed in
the compact set $[0,B]^{2^E}\times U$. Therefore $S$ is compact.
Its projection onto the first coordinate is exactly $K^\Sub_{\epr}(H)$, so
$K^\Sub_{\epr}(H)$ is compact.
\end{proof}

The next lemma isolates the abstract variational mechanism behind our geometric reformulation. 
This part of the argument is not specific to submodularity. It holds for any compact family of feasible symmetric profiles, that feasibility is preserved under decreasing coordinates, and that the objective under consideration is monotone and positively homogeneous of degree $1$. 
In our application of the lemma below, the feasible family is the set of boundary lifts realised by functions in $\pSub{\epr}(H)$, and its downward closure is exactly the realisable body $K^\Sub_{\epr}(H)$. 
We nevertheless state and prove the following lemma in this general form to make it clear that this applies all the same to analogous width measures, and in particular to adaptive width via the realisable body $K^\Mod_{\epr}(H)$. 

\begin{lemma}
\label{lem:monotone-homogeneous-variational}
Let $K\subseteq \Sym(H)$ be nonempty and compact, and define its symmetric downward closure 
\[
K^{\downarrow\Sym}
:=
(\downarrow K)\cap \Sym(H)
=
\{q\in \Sym(H)\mid \exists u\in K\text{ such that }q\le u\}.
\]
That is, $K^{\downarrow\Sym}$ is the symmetric slice of the downward closure of $K$.
Let $\Phi:\Sym(H)\to \rpos$ be monotone and degree-$1$ positively homogeneous.
Then
\[
\sup_{u\in K}\Phi(u)
=
\sup_{q\in \Sym(H)}
\frac{\Phi(q)}{\gauge_{K^{\downarrow\Sym}}(q)},
\]
with the conventions $0/0:=0$ and $a/\infty:=0$ for $a<\infty$.
\end{lemma}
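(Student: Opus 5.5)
We prove the two inequalities separately, writing $G:=\gauge_{K^{\downarrow\Sym}}$ throughout. First we record the elementary facts about $K^{\downarrow\Sym}$ that we need. It contains $K$, since $K\subseteq\Sym(H)$ and $u\le u$. It is downward closed within $\Sym(H)$. It is also star-shaped: if $q\le u\in K$ and $0\le\beta\le 1$, then $\beta q\le q\le u$, so $\beta q\in K^{\downarrow\Sym}$.

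For the direction ``$\le$'', fix $u\in K$. Since $u\in K^{\downarrow\Sym}$, we have $G(u)\le 1$. If $G(u)=0$, then either $\Phi(u)=0$ and there is nothing to show, or $G(u)=0<1$ makes the ratio infinite; in both cases $\Phi(u)$ is at most the right-hand side. Here we need to be careful with the convention $a/0$ for $a>0$, which the statement leaves implicit; we interpret it as $+\infty$. (In fact $G(u)=0$ forces $u=0$, since $u\in\alpha K^{\downarrow\Sym}$ for arbitrarily small $\alpha$ and $K^{\downarrow\Sym}$ is bounded. Then $\Phi(0)=0$ by homogeneity, because $\Phi(0)=\Phi(2\cdot 0)=2\Phi(0)$.) If $0<G(u)\le 1$, then $\Phi(u)\le \Phi(u)/G(u)$, which is a term of the right-hand supremum. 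Taking the supremum over $u\in K$ gives the inequality.

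For the direction ``$\ge$'', fix $q\in\Sym(H)$ and set $\alpha:=G(q)$. If $\alpha=\infty$, the ratio is $0$ and is bounded by the left-hand side, because $\Phi\ge 0$ and $K\ne\emptyset$. If $\alpha=0$, then $q\in\alpha_n K^{\downarrow\Sym}$ with $\alpha_n\to 0$. By compactness of $K$, the set $K^{\downarrow\Sym}$ is bounded, so $q=0$ and $\Phi(q)=0$, which gives the ratio $0/0=0$. The main case is $0<\alpha<\infty$. Here we claim that the infimum in the gauge is attained, that is, $q\in\alpha K^{\downarrow\Sym}$.

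This claim is the only real obstacle, and it is where compactness of $K$ is used. Take $\alpha_n\downarrow\alpha$ with $q/\alpha_n\le u_n$ for some $u_n\in K$. By compactness, pass to a subsequence with $u_n\to u\in K$. Coordinatewise limits preserve $\le$, so $q/\alpha\le u$, and $q/\alpha$ is symmetric because $q$ is. Hence $q/\alpha\in K^{\downarrow\Sym}$ with witness $u$.

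With the claim in hand, monotonicity and homogeneity of $\Phi$ give
\[
\frac{\Phi(q)}{\alpha}=\Phi(q/\alpha)\le\Phi(u)\le\sup_{u'\in K}\Phi(u').
\]
One point needs checking: $\Phi$ is defined only on $\Sym(H)$, so monotonicity must be applied between symmetric functions. This is fine, since $q/\alpha$ and $u$ are both symmetric. Taking the supremum over $q$ completes the proof.
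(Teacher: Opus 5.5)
Your proof is correct and has the same two-direction structure as the paper's: $\gauge_{K^{\downarrow\Sym}}(u)\le 1$ for $u\in K$ gives one inequality, and rescaling $q$ into $K^{\downarrow\Sym}$ together with monotonicity and homogeneity of $\Phi$ gives the other. The difference is technical. You use compactness of $K$ to show the gauge infimum is attained and then rescale by exactly $\gauge_{K^{\downarrow\Sym}}(q)$, whereas the paper uses every scaling strictly above the gauge and passes to the infimum, which avoids attainment. Your treatment of the zero-gauge case and the $a/0$ convention, via boundedness of $K^{\downarrow\Sym}$, also makes explicit some points the paper leaves implicit.
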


\begin{proof}
Since every $q\in K^{\downarrow\Sym}$ is dominated by some $u\in K$, monotonicity of
$\Phi$ gives
\[
\sup_{q\in K^{\downarrow\Sym}}\Phi(q)=\sup_{u\in K}\Phi(u).
\]

Now fix $q\in \Sym(H)$. If $\gauge_{K^{\downarrow\Sym}}(q)=\infty$, then the ratio is
$0$ by convention. So suppose
\(
\gamma:=\gauge_{K^{\downarrow\Sym}}(q)<\infty.
\)
Then for every $\alpha>\gamma$ one has $q\in \alpha K^{\downarrow\Sym}$, so
$q=\alpha q'$ for some $q'\in K^{\downarrow\Sym}$. Hence
\[
\Phi(q)=\alpha\,\Phi(q')
\le
\alpha \sup_{u\in K^{\downarrow\Sym}}\Phi(u)
=
\alpha \sup_{u\in K}\Phi(u).
\]
That is, we have $\Phi(q) \le \alpha \sup_{u \in K}\Phi(u)$ and dividing by $\alpha>\gamma\ge 0$ we have that
$\frac{\Phi(q)}{\alpha}\le \sup_{u\in K}\Phi(u)$.

If $\gamma>0$, taking the infimum over $\alpha>\gamma$ yields
\[
\frac{\Phi(q)}{\gamma}\le \sup_{u\in K}\Phi(u).
\]
If $\gamma=0$, then the same inequality holds for every $\alpha>0$, and taking the (right-sided) limit as $\alpha \to 0_+$ gives $\Phi(q)=0$, so the conclusion remains valid with
the convention $0/0:=0$.

For the reverse inequality, every $u\in K$ lies in $K^{\downarrow\Sym}$, so
$\gauge_{K^{\downarrow\Sym}}(u)\le 1$. Therefore for every $u\in K\setminus\{0\}$,
\[
\frac{\Phi(u)}{\gauge_{K^{\downarrow\Sym}}(u)}\ge \Phi(u).
\]
If $u=0$, then $\Phi(0)=0$ by positive homogeneity, so the same conclusion is
consistent with $0/0:=0$. Taking the supremum over $u\in K$ proves the reverse
inequality.
\end{proof}

We are missing only one final piece of the puzzle.
Positive homogeneity is essential in \Cref{lem:monotone-homogeneous-variational}. It is what allows the scaling parameter from the gauge to appear as the normalising denominator, in particular in the step passing from $\Phi(q)\le \alpha \sup_{u\in K}\Phi(u)$ for all $\alpha>\gamma$ to $\Phi(q)/\gamma\le \sup_{u\in K}\Phi(u)$. Fortunately, $\disp_H$ indeed has this property.

\begin{lemma}
\label{lem:disp-monotone-homogeneous}
 $\disp_H$ is monotone and positively
homogeneous of degree $1$.
\end{lemma}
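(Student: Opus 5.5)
The plan is to work directly from the definition $\disp_H(q)=\min_{(T,\delta)}\max_{e\in E(T)} q(X_e)$, regarded as a function on $\Sym(H)$. The key observation is that the set of branch decompositions of $E$ depends only on $H$, not on $q$. It is also finite up to the cuts it induces, since only the family of bipartitions $\{X_e\}$ matters and there are finitely many such families. So $\disp_H$ is a minimum, over a fixed finite index set, of the functions $q\mapsto \max_{e\in E(T)} q(X_e)$. Each of these is itself a maximum of coordinate projections $q\mapsto q(X_e)$. Both properties will be checked first for the coordinate maps and then shown to survive taking finite max and finite min. The degenerate case $|E|\le 1$, where $\disp_H\equiv 0$ by convention, is trivially monotone and homogeneous, so I assume $|E|\ge 2$.

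For monotonicity, take $q\le q'$ pointwise in $\Sym(H)$. For every fixed branch decomposition $(T,\delta)$ and every edge $e$ of $T$ we have $q(X_e)\le q'(X_e)$. Hence $\max_e q(X_e)\le \max_e q'(X_e)$. Taking the minimum over the same index set on both sides preserves the inequality, which gives $\disp_H(q)\le\disp_H(q')$.

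For positive homogeneity, fix $\alpha\ge 0$. For each $(T,\delta)$ we have $\max_e \alpha q(X_e)=\alpha\max_e q(X_e)$, because multiplication by a nonnegative scalar is order-preserving and commutes with max. For the same reason it commutes with min over decompositions, so $\disp_H(\alpha q)=\alpha\disp_H(q)$. The case $\alpha=0$ is consistent, since both sides are $0$.

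I expect no genuine obstacle here. The only point deserving explicit mention is that the minimum in the definition is attained and ranges over a set independent of $q$, so that the min/max manipulations are legitimate. Finiteness of the set of induced cut families settles this, and it also confirms that $\disp_H$ maps into $\rpos$, as required for applying \Cref{lem:monotone-homogeneous-variational}.
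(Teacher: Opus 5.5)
Your proof is correct and follows the same route as the paper: compare the $q$- and $q'$-widths (respectively the $(\alpha q)$- and $q$-widths) one branch decomposition at a time, then take the minimum over decompositions. Your extra remarks on the degenerate case $|E|\le 1$ and on the minimum ranging over a $q$-independent, effectively finite index set are harmless refinements that the paper leaves implicit.
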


\begin{proof}
If $q\le q'$, then every branch decomposition has $q$-width at most its
$q'$-width, hence $\disp_H(q)\le \disp_H(q')$. Also, for every $\alpha\ge 0$,
every branch decomposition has $(\alpha q)$-width equal to $\alpha$ times its
$q$-width, so $\disp_H(\alpha q)=\alpha\,\disp_H(q)$.
\end{proof}

\begin{theorem}[variational characterisation]
\label{thm:variational}
For every hypergraph $H$,
\[
\subwlift(H)=
\sup_{q\in \Sym(H)}
\frac{\disp_H(q)}{\gauge_H^\Sub(q)},
\]
with the convention that profiles with $\gauge_H^\Sub(q)=+\infty$ contribute
ratio $0$.
\end{theorem}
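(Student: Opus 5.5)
The plan is to apply \Cref{lem:monotone-homogeneous-variational} with $K:=\Lambda_H(\pSub{\one}(H))$ and $\Phi:=\disp_H$. \Cref{lem:disp-monotone-homogeneous} already gives monotonicity and degree-$1$ homogeneity of $\disp_H$, so two hypotheses remain to be checked, plus one identification of gauges.

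First, $K$ must be a nonempty compact subset of $\Sym(H)$. It is nonempty because $0\in\pSub{\one}(H)$. It is compact because it is the linear image of the compact set $\pSub{\one}(H)$ (\Cref{lem:sub-slice-geometry}). Every element is symmetric and nonnegative by \Cref{prop:lift.sym}, since $\partial_H(F)=\partial_H(E\setminus F)$ and $b\ge 0$. By \Cref{def:lifted-width}, the left-hand side of the lemma is then exactly $\sup_{u\in K}\disp_H(u)=\subwlift(H)$.

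Second, the lemma's right-hand side uses the gauge of $K^{\downarrow\Sym}=K^\Sub_{\one}(H)\cap\Sym(H)$, whereas the theorem uses $\gauge_H^\Sub$, the gauge of the full body $K^\Sub_\one(H)$. So I must show that these two gauges agree on every $q\in\Sym(H)$. This is the only step that is not bookkeeping.

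Since $\Sym(H)$ is a cone, for $\alpha>0$ and symmetric $q$ we have $q\in\alpha K^\Sub_\one(H)$ iff $q/\alpha\in K^\Sub_\one(H)$. Because $q/\alpha$ is itself symmetric, this holds iff $q/\alpha\in K^\Sub_\one(H)\cap\Sym(H)$. Hence the admissible positive scalings coincide for the two sets.

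The case $\alpha=0$ needs a separate check. Both $0\cdot K^\Sub_\one(H)$ and $0\cdot K^{\downarrow\Sym}$ equal $\{0\}$, as both sets are nonempty, so $\alpha=0$ is admissible for exactly the same $q$ in either case.

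It follows that the two infima are taken over identical sets, including the case where that set is empty and both gauges are $+\infty$. The convention that such $q$ contribute ratio $0$ matches the lemma's convention $a/\infty=0$.

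Substituting these facts into \Cref{lem:monotone-homogeneous-variational} yields $\subwlift(H)=\sup_{q\in\Sym(H)}\disp_H(q)/\gauge_H^\Sub(q)$. The $0/0$ convention is inherited directly from the lemma.
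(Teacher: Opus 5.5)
Your proposal is correct and follows essentially the same route as the paper. Both arguments apply \Cref{lem:monotone-homogeneous-variational} to $K=\Lambda_H(\pSub{\one}(H))$ with $\Phi=\disp_H$, and both identify the gauge of the symmetric slice $K^\Sub_\one(H)\cap\Sym(H)$ with $\gauge_H^\Sub$ on $\Sym(H)$ via the same $\alpha>0$ versus $\alpha=0$ case split; your explicit checks of nonemptiness of $K$ and of the $+\infty$ case are details the paper leaves implicit.
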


\begin{proof}
Let $K=\Lambda_H(\pSub{\one}(H))$.

By \Cref{prop:lift.sym}, the set $K$ is contained in $\Sym(H)$. Since $\Lambda_H$ is continuous and $\pSub{\one}(H)$ is compact by
\Cref{lem:sub-slice-geometry}, the set $K$ is compact.

Its symmetric downward closure is
\[
K^{\downarrow\Sym}
=
\{q\in \Sym(H)\mid \exists u\in K,\ q\le u\}
=
K^\Sub_{\one}(H)\cap \Sym(H).
\]
Since $K^\Sub_{\one}(H)$ is downward closed, its gauge agrees on symmetric
profiles with the gauge of its symmetric slice.
Indeed, one direction is immediate because
$K^\Sub_{\one}(H)\cap \Sym(H)\subseteq K^\Sub_{\one}(H)$.

Conversely, if $q\in \Sym(H)$ and $q\in \alpha K^\Sub_{\one}(H)$ with $\alpha>0$,
then $q/\alpha\in K^\Sub_{\one}(H)\cap \Sym(H)$.
If $\alpha=0$, then necessarily $q=\bf 0$ (recall that if $\mathbf{0}\in K$, then $\gauge_K(\mathbf{0})=0$).
Hence
\[
\gauge_{K^{\downarrow\Sym}}(q)=\gauge_H^\Sub(q)
\qquad\forall q\in \Sym(H).
\]

The functional $\disp_H$ is monotone and degree-$1$ positively homogeneous on
$\Sym(H)$ by \Cref{lem:disp-monotone-homogeneous}. Applying
\Cref{lem:monotone-homogeneous-variational} with $\Phi=\disp_H$ therefore gives
\[
\sup_{u\in K}\disp_H(u)
=
\sup_{q\in \Sym(H)}
\frac{\disp_H(q)}{\gauge_H^\Sub(q)}.
\]
By \Cref{def:lifted-width}, the left-hand side is exactly $\subwlift(H)$.
\end{proof}

By combining \Cref{thm:variational} with
\Cref{cor:lifted-bw}, one immediately obtains a variational approximation of $\subw(H)$ up to the factor $\tfrac32$. One can also observe that the supremum in \Cref{thm:variational} is in fact a maximum.

\Cref{thm:variational} is the starting point for
 later sections. It replaces direct reasoning about witnesses
$b\in \pSub{\one}$ by the geometry of the associated symmetric boundary profiles on $2^E$. This is also where our move to lifted parameters and cut weightings pays off. The variational characterisation provides us with new technical freedom, in the sense that we can argue lower bounds by arguing over arbitrary functions in $\Sym(H)$. We provide two examples to illustrate these new possibilities. The following sections then employ this in a more general way, where we construct useful $q\in \Sym(H)$ without consideration for their realisability in terms of functions in $\pSub{\one}(H)$.

\begin{example}

Let $H$ have edges $e_1,e_2,e_3,e_4$, all containing a common vertex $v$, and suppose that these are the only intersections among the edges. 
Then for every nontrivial proper set $\emptyset \subsetneq F \subsetneq E(H)$,
\(
\partial_H(F)=\{v\}.
\)
Hence for every $b\in \pSub{\one}(H)$,
\[
\Lambda_H b(F)=
\begin{cases}
0, & F\in \{\emptyset,E(H)\},\\[1mm]
b(\{v\}), & \emptyset \subsetneq F \subsetneq E(H).
\end{cases}
\]
Since $\{v\}\subseteq e_i$ for every $i$, the edge-domination constraint implies $b(\{v\})\le 1$. 
Thus every boundary lift considered in the definition of $\subwlift$ is constant on all nontrivial cuts, with value at most $1$.

Now define
\[
q(F):=|F|\cdot(4-|F|)\qquad\forall F\subseteq E(H).
\]
Then $q\in \Sym(H)$, with value $3$ on singleton cuts and value $4$ on cuts with $2$ edges on each side. 
We see that $q$ is not itself realisable, i.e., $q \not \in K^\Sub_\one(H)$. In fact, $\gauge_H^\Sub(q)=4$:
if $q\in \alpha K^\Sub_{\one}(H)$, then every nontrivial cut value of $q$ must be at most $\alpha$, so necessarily $\alpha\ge 4$. 
Conversely, equality is attained because $q\le 4\Lambda_H b$, where $b$ is the modular function given by $b(X)=\1[v\in X]$.
We see that symmetry is a far weaker requirement than submodular realisability. 
\end{example}

\begin{example}

\newcommand{\bal}{\mathsf{bal}}

The variational principle is useful even with profiles that are not themselves

realisable. Let $H=K_n$ be the complete graph on $n$ vertices, viewed as a
hypergraph with $2$-element edges, and define the following $q \in \Sym(H)$:
\[
q(F):=
\begin{cases}
1& \text{if } \min\{|F|,|E(H)\setminus F|\}\ge |E(H)|/3,\\
0& \text{otherwise}.
\end{cases}
\]
Observe that $\disp_H(q)=1$ as every branch decomposition of $E(H)$ has a cut with both sides
containing at least one third of the leaves, so some cut has
$q$-value $1$, while trivially every cut has value at most $1$.

Now let $b(X):=|X|/2$. Since every edge of $K_n$ has size $2$, we have $b\in \pMod{\one}(H)\subseteq \pSub{\one}(H)$, and hence $\Lambda_H b(F)=\frac{|\partial_H(F)|}{2}$.

Now observe that for every $F\subseteq E(H)$ with
$\min\{|F|,|E(H)\setminus F|\}\ge |E(H)|/3$, one has
$|\partial_H(F)|\ge \frac n6$.

Write $U:=V(H)\setminus \partial_H(F)$. Then every vertex of $U$ has all its
incident edges contained either in $F$ or in $E(H)\setminus F$. Since
$K_n[U]$ is complete, all vertices of $U$ must choose the same side, for
otherwise an edge of $K_n[U]$ would belong to both $F$ and $E(H)\setminus F$,
a contradiction. That is, one side of the partition is covered by $\partial_H(F)$. As both $F$ and $E(H)\setminus F$ have size at least $|E(H)|/3$, this yields
\(
|\partial_H(F)|(n-1)\ge \frac{|E(H)|}{3}=\frac{n(n-1)}{6},
\)
so $|\partial_H(F)|\ge n/6$.

Thus on every balanced cut,
$\Lambda_H b(F)=\frac{|\partial_H(F)|}{2}\ge \frac n{12}$, and equivalently,
\(
q\le \frac{12}{n}\,\Lambda_H b.
\)
That in particular means that $q$ has very small submodular gauge, namely $\gauge_H^\Sub(q)\le \frac{12}{n}$. That is, $q$ lies deep inside the realisable body $K^\Sub_\one(H)$. Intuitively, this implies it can be scaled up to a ``heavier'' realisable cut weighting. Looking at the definition of $q$ it is not clear how, but the convex nature of the induced geometry guarantees that it is.

To conclude, combining the low gauge with $\disp_H(q)=1$ gives
\[
\subwlift(H)\ge
\frac{\disp_H(q)}{\gauge_H^\Sub(q)}
\ge \frac{n}{12}.
\]
Note that this is slightly below the actual submodular width $n/2$ of the clique. The point we wish to illustrate is how the variational perspective opens up entirely new approaches to proving lower bounds on $\subw$.
\end{example}

\subsection{An Antiblocker-Dual Interpretation}
\label{sec:geom.duality}

The quantity $\gauge^\Sub_{\one,H}(q)$ asks for the smallest scaling factor by which a candidate cut weighting $q$ must be divided before it becomes realisable by some submodular witness. 
One can then consider a pricing scheme on cuts as a kind of dual certificate for this question. That is, the  scheme assigns a price $w(F)$ to each bipartition $(F,E\setminus F)$ and is valid if every realisable profile collects total price at most $1$. 
Because the realisable body is coordinatewise downward closed inside the nonnegative orthant, this is precisely the setting of so-called \emph{antiblocking duality} for convex corners. Antiblockers originate in combinatorial optimisation \cite{fulkerson1972anti} and also play an important role in links between graph theory and information theory, for example in classical work of Csiszár et al. \cite{DBLP:journals/combinatorica/CsiszarKLMS90}.
 Intuitively, an antiblocker consists of all cut-pricing schemes that charge at most $1$ to every realisable profile, and the gauge of $q$ is the maximum total charge that any such scheme can extract from $q$.

Formally, for any set $K\subseteq \rpos^{2^E}$, define its \emph{antiblocker} by
\[
\operatorname{abl}(K)
:=
\Bigl\{
w\in \rpos^{2^E}
\Bigm|
\langle q,w\rangle \le 1
\ \text{for all } q\in K
\Bigr\},
\]
where
\[
\langle q,w\rangle := \sum_{F\subseteq E} q(F)w(F)
\]
denotes the standard inner product on $\mathbb R^{2^E}$.

Thus $\operatorname{abl}(K)$ consists of all nonnegative linear functionals
which are bounded by $1$ on $K$. In our case, this gives a dual description of
the submodular gauge.

\begin{proposition}
\label{prop:support-separation}
Let $K\subseteq \mathbb R^d$ be nonempty, compact, and convex, and for $y\in \mathbb R^d$ define
$h_K(y):=\max_{x\in K}\langle x,y\rangle$ .
Then,
for every $\alpha\ge 0$,
\[
x\in \alpha K
\quad\Longleftrightarrow\quad
\langle x,y\rangle \le \alpha\, h_K(y)
\ \text{for all } y\in \mathbb R^d.
\]
\end{proposition}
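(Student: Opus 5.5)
The plan is to prove the two directions separately. The forward direction is elementary, and the reverse direction is a separating-hyperplane argument, with a small case split for $\alpha=0$.

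\emph{Forward direction.} Suppose $x\in\alpha K$, so $x=\alpha k$ for some $k\in K$. For any $y\in\mathbb R^d$ we get $\langle x,y\rangle=\alpha\langle k,y\rangle\le \alpha h_K(y)$, because $h_K(y)$ is a maximum over $K$. Compactness and nonemptiness of $K$ guarantee that this maximum exists and is finite. The same computation covers $\alpha=0$: then $\alpha K=\{0\}$ and $x=0$, so both sides vanish.

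\emph{Reverse direction, case $\alpha>0$.} Assume $\langle x,y\rangle\le \alpha h_K(y)$ for all $y$. Dividing by $\alpha$ gives $\langle x/\alpha,y\rangle\le h_K(y)$ for all $y$. Suppose for contradiction that $z:=x/\alpha\notin K$.
\begin{itemize}
\item Since $K$ is nonempty, closed and convex, it has a unique nearest point $p\in K$ to $z$.
\item Set $y:=z-p\neq 0$. The projection (obtuse-angle) characterisation gives $\langle k-p,y\rangle\le 0$ for all $k\in K$.
\item Hence $h_K(y)=\langle p,y\rangle<\langle p,y\rangle+\|y\|^2=\langle z,y\rangle$.
\end{itemize}
This contradicts the hypothesis at this particular $y$. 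Equivalently, one can simply cite the strict separation theorem for a point and a compact convex set (Rockafellar). Therefore $z\in K$ and $x\in\alpha K$.

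\emph{Reverse direction, case $\alpha=0$.} The hypothesis reads $\langle x,y\rangle\le 0$ for all $y$. Taking $y=x$ gives $\|x\|^2\le 0$, so $x=0$. Since $K$ is nonempty, $0=0\cdot k\in 0\cdot K$, so $x\in 0\cdot K$.

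The only genuine step is the separation argument. I expect no real obstacle there beyond invoking the projection theorem correctly and remembering that closedness of $K$ is what makes the nearest point exist. The degenerate case $\alpha=0$ must be handled separately, since one cannot divide by $\alpha$, and it is also where nonemptiness of $K$ is used.
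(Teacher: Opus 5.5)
Your proof is correct and follows essentially the same route as the paper: the forward direction is the same one-line computation, and the reverse direction is a separating-hyperplane argument. The differences are cosmetic. The paper cites Rockafellar's Theorem~11.5 for $\alpha K$ directly and uses $h_{\alpha K}=\alpha h_K$, which also covers $\alpha=0$ without a case split. You instead rescale to $K$, build the separating direction explicitly from the nearest-point projection, and handle $\alpha=0$ separately.
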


\begin{proof}
The forward implication is immediate: if $x=\alpha z$ with $z\in K$, then for
every $y\in \mathbb R^d$,
\[
\langle x,y\rangle
=
\alpha \langle z,y\rangle
\le
\alpha\, h_K(y).
\]

Conversely, suppose $x\notin \alpha K$. Since $\alpha K$ is compact and convex,
hence closed and convex, Rockafellar~\cite[Theorem~11.5]{rockafellar1997convex}
implies that $\alpha K$ is the intersection of the closed half-spaces
containing it. Therefore there exists a closed half-space
$H=\{z\in \mathbb R^d \mid \langle z,y\rangle\le \beta\}$
such that $\alpha K\subseteq H$ but $x\notin H$. Then
$h_{\alpha K}(y)=\max_{z\in \alpha K}\langle z,y\rangle \le \beta < \langle x,y\rangle$.
Since $h_{\alpha K}(y)=\alpha\, h_K(y)$, it follows that
\[
\langle x,y\rangle>\alpha\, h_K(y).
\]
This proves the contrapositive of the reverse implication.
\end{proof}

\begin{theorem}
\label{thm:gauge-antiblocker}
For every hypergraph $H$, every edge profile $\epr$, and every
$q\in \rpos^{2^E}$,
\[
\gauge^\Sub_{\epr,H}(q)
=
\sup_{w\in \operatorname{abl}(K^\Sub_{\epr}(H))}
\langle q,w\rangle,
\]
with the convention that the right-hand side may be $+\infty$.
\end{theorem}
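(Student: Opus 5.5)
We prove the two inequalities separately, writing $K:=K^\Sub_{\epr}(H)$ and $\gamma:=\gauge^\Sub_{\epr,H}(q)$. By \Cref{prop:body-geometry}, $K$ is nonempty, compact, convex and downward closed in $\rpos^{2^E}$, and $0\in K$. These properties are all we use, so the argument is really the standard antiblocker duality for convex corners.

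\emph{Easy direction ($\ge$).} If $\gamma=\infty$ there is nothing to show. Otherwise, for every $\alpha>\gamma$ we can write $q=\alpha z$ with $z\in K$. Hence every $w\in\operatorname{abl}(K)$ satisfies $\langle q,w\rangle=\alpha\langle z,w\rangle\le\alpha$. Letting $\alpha\downarrow\gamma$ gives $\sup_w\langle q,w\rangle\le\gamma$. This also covers $\gamma=0$.

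\emph{Hard direction ($\le$).} Let $s:=\sup_{w\in\operatorname{abl}(K)}\langle q,w\rangle$. If $s=\infty$ we are done, so assume $s<\infty$. We show $q\in\alpha K$ for every $\alpha>s$, which gives $\gamma\le s$. Fix $\alpha>s$, which forces $\alpha>0$, and use \Cref{prop:support-separation}. It suffices to prove
\[
\langle q,y\rangle\le\alpha\,h_K(y)\qquad\text{for all } y\in\mathbb R^{2^E}.
\]

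\emph{Main step: reduce to nonnegative $y$.} Let $y^+:=\max(y,0)$ coordinatewise. Since $q\ge0$, we have $\langle q,y\rangle\le\langle q,y^+\rangle$. Since $K$ is downward closed in the nonnegative orthant, zeroing the coordinates where $y<0$ keeps a point in $K$, so $h_K(y)=h_K(y^+)$. Hence we may assume $y\ge0$.

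Now split on $h_K(y)$.
\begin{itemize}
\item If $h_K(y)>0$, set $w:=y/h_K(y)$. Then $w\ge0$ and $\langle x,w\rangle\le1$ for all $x\in K$, so $w\in\operatorname{abl}(K)$. Therefore $\langle q,y\rangle=h_K(y)\langle q,w\rangle\le h_K(y)\,s\le\alpha\,h_K(y)$.
\item If $h_K(y)=0$, then $ty\in\operatorname{abl}(K)$ for every $t>0$. So $t\langle q,y\rangle\le s<\infty$ for all $t>0$, which forces $\langle q,y\rangle\le0=\alpha\,h_K(y)$.
\end{itemize}

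With the inequality established for all $y$, \Cref{prop:support-separation} yields $q\in\alpha K$. The obstacle I expect is exactly this reduction to nonnegative $y$ and the degenerate case $h_K(y)=0$; both are handled by downward closure and by the finiteness of $s$, as above.
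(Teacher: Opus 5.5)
Your proof is correct and follows essentially the same route as the paper: you reduce to nonnegative $y$ via $h_K(y)=h_K(y_+)$ using downward closure, split on $h_K(y)>0$ versus $h_K(y)=0$ to bring in $\operatorname{abl}(K)$, and conclude with \Cref{prop:support-separation}. The only difference is organisational: you prove the two inequalities separately, with the easy one taken directly from the definition of the gauge. The paper instead proves, for each fixed $\alpha\ge 0$, the equivalence $q\in\alpha K \iff \langle q,w\rangle\le\alpha$ for all $w\in\operatorname{abl}(K)$, and then reads off the gauge.
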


\begin{proof}
Write $K:=K^\Sub_{\epr}(H)$. By \Cref{prop:body-geometry}, the set $K$ is
nonempty, compact, convex, and coordinatewise downward closed.

For $y\in \mathbb R^{2^E}$, write
$h_K(y):=\max_{x\in K}\langle x,y\rangle$ as in \Cref{prop:support-separation}, and
observe that $\operatorname{abl}(K)=\{w\in \rpos^{2^E}: h_K(w)\le 1\}$.

Define $y_+(F):=\max\{y(F),0\}$. We first claim that
$h_K(y)=h_K(y_+)$ for all $y\in \mathbb R^{2^E}$.
First, for $x\in K$, define $x'\in \rpos^{2^E}$ by
\[
x'(F):=
\begin{cases}
x(F), & y(F)\ge 0,\\
0, & y(F)<0.
\end{cases}
\]
Then $0\le x'\le x$ coordinatewise, so $x'\in K$, and
$\langle x,y_+\rangle=\langle x',y\rangle\le h_K(y)$.
Taking the maximum over $x\in K$ gives $h_K(y_+)\le h_K(y)$. The reverse
inequality is immediate from $y\le y_+$ and $K \subseteq \rpos^{2^E}$:
since every $x \in K$ is nonnegative, also $\langle x,y\rangle \le \langle x, y_+ \rangle$ for every $x$.

Now fix $\alpha\ge 0$ and $q\in \rpos^{2^E}$. By
\Cref{prop:support-separation},
\[
q\in \alpha K
\quad\Longleftrightarrow\quad
\langle q,y\rangle\le \alpha\, h_K(y)
\ \text{for all } y\in \mathbb R^{2^E}.
\]
Using $q\ge 0$ and $h_K(y)=h_K(y_+)$, this is equivalent to $\langle q,u\rangle\le \alpha\, h_K(u)$ for all $u\in \rpos^{2^E}$.

We claim that this is in turn equivalent to
$\langle q,w\rangle\le \alpha$
for all  $w\in \operatorname{abl}(K)$.
If $\langle q,u\rangle\le \alpha h_K(u)$ for all $u\in \rpos^{2^E}$ and
$w\in \operatorname{abl}(K)$, then $h_K(w)\le 1$, so
$\langle q,w\rangle\le \alpha\, h_K(w)\le \alpha$.

Conversely, assume $\langle q,w\rangle\le \alpha$ for all $w\in \operatorname{abl}(K)$, and fix $u\in \rpos^{2^E}$.

If $h_K(u)=0$, then $tu\in \operatorname{abl}(K)$ for every $t>0$, hence
$t\langle q,u\rangle=\langle q,tu\rangle\le \alpha$ for all $t>0$,
which forces $\langle q,u\rangle=0=\alpha h_K(u)$.

If $h_K(u)>0$, then $u/h_K(u)\in \operatorname{abl}(K)$, so
\[
\frac{\langle q,u\rangle}{h_K(u)}
=
\Bigl\langle q,\frac{u}{h_K(u)}\Bigr\rangle
\le \alpha,
\]
that is, $\langle q,u\rangle\le \alpha\, h_K(u)$.

We have therefore shown that, for every $\alpha\ge 0$,
\[
q\in \alpha K
\quad\Longleftrightarrow\quad
\langle q,w\rangle\le \alpha
\ \text{for all } w\in \operatorname{abl}(K).
\]
Hence
\[
\gauge^\Sub_{\epr,H}(q)
=
\inf\{\alpha\ge 0:\langle q,w\rangle\le \alpha
\ \text{for all } w\in \operatorname{abl}(K)\}
=
\sup_{w\in \operatorname{abl}(K)}\langle q,w\rangle.
\]
\end{proof}

Combining this with \Cref{thm:variational} yields an equivalent dual form of
the lifted submodular width.

\begin{corollary}[Dual variational form]
\label{cor:subwlift-dual}
For every hypergraph $H$,
\[
\subwlift(H)
=
\sup_{q\in \Sym(H)}\ 
\inf_{w\in \operatorname{abl}(K^\Sub_{\one}(H))}
\frac{\disp_H(q)}{\langle q,w\rangle},
\]
with the conventions that $a/0:=+\infty$ for $a>0$, and $0/0:=0$.

\end{corollary}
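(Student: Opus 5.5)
The plan is to substitute \Cref{thm:gauge-antiblocker} into \Cref{thm:variational}, pointwise in $q$, and then check that the conventions on both sides match. Fix $q\in\Sym(H)$ and write $A:=\operatorname{abl}(K^\Sub_{\one}(H))$. By \Cref{thm:gauge-antiblocker} with $\epr=\one$ we have $\gauge_H^\Sub(q)=\sup_{w\in A}\langle q,w\rangle$. It therefore suffices to show, for each fixed $q$,
\[
\frac{\disp_H(q)}{\sup_{w\in A}\langle q,w\rangle}
=
\inf_{w\in A}\frac{\disp_H(q)}{\langle q,w\rangle},
\]
where the left-hand side uses the conventions of \Cref{thm:variational} and the right-hand side those of the corollary. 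Taking the supremum over $q\in\Sym(H)$ then gives the claim directly from \Cref{thm:variational}.

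The pointwise identity is the elementary fact that, for a fixed numerator $a\ge 0$, the map $t\mapsto a/t$ is nonincreasing on $(0,\infty]$. Hence dividing by a supremum equals the infimum of the quotients. I will split into cases.

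\emph{Case $a:=\disp_H(q)>0$.} Since $0\in A$, the set $\{\langle q,w\rangle : w\in A\}$ contains $0$. If the supremum $s$ is finite and positive, then the infimum of $a/t$ over this set is $a/s$. The value $t=0$ contributes $+\infty$ under $a/0:=+\infty$ and so does not affect the infimum. If $s=+\infty$, the infimum is $0$, which matches the convention $a/\infty:=0$. If $s=0$, every quotient is $+\infty$. This matches the left-hand side, since $\gauge_H^\Sub(q)=0$ means $q\in\alpha K$ for all $\alpha>0$. By compactness of $K$ (\Cref{prop:body-geometry}) this forces $q=0$, hence $\disp_H(q)=0$, contradicting $a>0$. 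So the subcase $s=0$ cannot occur.

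\emph{Case $a=0$.} Every quotient on the right is $0$, using $0/0:=0$, and the left-hand side is $0$ under the conventions of \Cref{thm:variational}.

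The only delicate point is this convention bookkeeping: the boundary cases $\gauge=0$ and $\gauge=\infty$, and ensuring that $w=0\in A$ does not spoil the infimum. I expect no real obstacle beyond stating these cases carefully.
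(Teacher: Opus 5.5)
Your proposal is correct and follows the paper's proof: substitute \Cref{thm:gauge-antiblocker} into \Cref{thm:variational} for each fixed $q$, and use that dividing a fixed nonnegative numerator by a supremum equals the infimum of the quotients. Your case analysis is more careful than the paper's one-line appeal to ``the stated conventions'': you note that $0\in\operatorname{abl}(K^\Sub_{\one}(H))$, and you rule out $\gauge_H^\Sub(q)=0$ with $\disp_H(q)>0$ via compactness. The argument is otherwise the same.
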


\begin{proof}
By \Cref{thm:variational} and \Cref{thm:gauge-antiblocker},
\[
\subwlift(H)
=
\sup_{q\in \Sym(H)}
\frac{\disp_H(q)}
{\sup_{w\in \operatorname{abl}(K^\Sub_{\one}(H))}\langle q,w\rangle}.
\]
For fixed $q$, the quantity $\disp_H(q)$ is independent of $w$, while
$\langle q,w\rangle\ge 0$. Hence
\[
\frac{\disp_H(q)}
{\sup_{w\in \operatorname{abl}(K^\Sub_{\one}(H))}\langle q,w\rangle}
=
\inf_{w\in \operatorname{abl}(K^\Sub_{\one}(H))}
\frac{\disp_H(q)}{\langle q,w\rangle},
\]
with the stated conventions. Taking the supremum over $q\in \Sym(H)$ gives the
result.
\end{proof}

We have thus obtained a dual description of the submodular gauge in terms of  antiblockers. 
Although we do not make further use of this perspective in the present paper, it seems to us to provide a helpful alternative interpretation of the gauge. Instead of reasoning directly about realisability, one may test a profile against nonnegative cut-pricing schemes. We expect that this dual viewpoint may be useful both for proving lower bounds on $\gauge_H^\Sub$ and for showing tightness of upper bounds coming from explicit realisations.

\section{General Cut Certificates from Treewidth}
\label{sec:support-inputs}
\providecommand{\load}{\operatorname{load}}

The variational formula in \Cref{sec:geometry} reduces lower bounds on
$\subwlift(H)$, and hence on $\subw(H)$, to constructing a symmetric edge-set
function $q\in \Sym(H)$ with two properties.
First, $q$ must be large on all sufficiently balanced cuts of $E(H)$.
Second, $q$ must admit a concrete representation as a cut profile that is induced by an edge profile $\alpha$ on the line graph $M:=P(H^*)$, defined as follows:
\[
q(F)=\sum_{e\in \cut{M}(F)}\alpha_e
\qquad \forall F\subseteq E(H).
\]
Moreover, we require the local incident load of $\alpha$ on the line graph to be strictly controlled, so that the resulting cut profile can ultimately be realised with a small submodular gauge cost.

In this section, we develop the core routing machinery to achieve the first half of this goal. We show that the treewidth of the line graph guarantees the existence of a balanced cut profile with a bounded local load. Later, in \Cref{sec:applications}, we will establish the structural interfaces that allow us to efficiently translate this bounded line-graph load into a bounded submodular gauge on the hypergraph.

\subsection{Balanced Edge Set Functions Force Large Cuts}

Throughout this section, a probability measure on a finite set $\Omega$ is
simply a weight function $\rho:\Omega\to \rpos$ with
$\sum_{x\in \Omega}\rho(x)=1$, extended additively to subsets by
\[
\rho(F):=\sum_{x\in F}\rho(x)
\qquad \forall F \subseteq \Omega.
\]
In particular, saying that a point $x\in \Omega$ has $\rho$-mass at most
$1/2$ means $\rho(\{x\})\le 1/2$. The following is an adaptation of standard arguments for balanced cuts in branch decompositions, adapted to the specifics we use later.

\begin{lemma}
\label{lem:branch-one-third}
Let $T$ be a subcubic tree with leaf set $\Omega$, and let $\rho$ be a
probability measure on $\Omega$. Assume that every leaf $\ell \in \Omega$ has $\rho$-mass at most $1/2$.
Then some edge $f$ of $T$ induces a bipartition
$(X_f,\Omega\setminus X_f)$ of $\Omega$ such that
\(
\min\{\rho(X_f),\rho(\Omega\setminus X_f)\}\ge 1/3.
\)
\end{lemma}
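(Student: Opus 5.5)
We use the standard ``walk towards the heavy side'' argument. Orient every edge $f=uv$ of $T$ as follows: removing $f$ splits $\Omega$ into the leaves on the $u$-side and on the $v$-side. Point $f$ from $u$ to $v$ if the $v$-side has $\rho$-mass strictly greater than $2/3$. If no edge receives an orientation, then every edge has both sides of mass at most $2/3$. Since the two sides sum to $1$, both sides then have mass at least $1/3$, and any edge works. We must handle the degenerate case where $T$ has no edges at all, i.e.\ $|\Omega|\le 1$. This is excluded because a single leaf would have mass $1>1/2$. So $T$ has at least one edge.

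So assume the claim fails, meaning every edge has a side of mass less than $1/3$. Then every edge has exactly one side of mass greater than $2/3$, so every edge is oriented exactly once. The tree is finite and acyclic, so there is a sink vertex $t$: a vertex all of whose incident edges point towards it. We can find $t$ by following out-edges from any vertex. Such a walk cannot revisit a vertex in a tree, since that would require traversing some edge in both directions. So the walk terminates at a sink.

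We now derive a contradiction at $t$ in two cases.

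\emph{Case 1: $t$ is a leaf $\ell$.} Its unique edge points to $\ell$, so the side consisting of $\{\ell\}$ has mass greater than $2/3>1/2$. This contradicts the hypothesis that leaves have mass at most $1/2$.

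\emph{Case 2: $t$ is internal.} It has degree $2$ or $3$ by subcubicity. Let $C_1,\dots,C_d$ be the leaf sets of the components of $T-t$. The leaves form a partition $\Omega=\bigsqcup_i C_i$, since $t$ is not itself a leaf (a leaf in this tree is a degree-$1$ vertex). Each edge $tc_i$ points to $t$, so the side not containing $C_i$ has mass greater than $2/3$, i.e.\ $\rho(C_i)<1/3$. Summing over the at most $3$ components gives $1=\sum_i\rho(C_i)<d/3\le 1$, a contradiction.

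The only delicate point is the case analysis at the sink. This is exactly where both hypotheses enter: subcubicity bounds $d\le 3$, and the mass-$1/2$ bound on leaves rules out a leaf sink. A smaller technical point is ensuring the orientation is well defined. It is, because two disjoint sides cannot both exceed $2/3$.

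Finally, we must match the notation $X_f$. The bipartition $(X_f,\Omega\setminus X_f)$ is the leaf partition induced by $f$, so the edge found satisfies $\min\{\rho(X_f),\rho(\Omega\setminus X_f)\}\ge 1/3$ directly.
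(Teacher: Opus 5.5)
Your proof is correct and is essentially the paper's argument: both walk towards the heavy side of the tree until reaching a terminal vertex, then use the leaf-mass bound to handle a leaf and subcubicity to handle an internal vertex. The only difference is cosmetic: you argue by contradiction with threshold $2/3$, so every edge gets oriented and a sink exists. The paper instead walks directly to a vertex all of whose branches have mass at most $1/2$, and then picks an incident branch of mass at least $1/3$.
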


\begin{proof}
Assign weight $\rho(\ell)$ to each leaf $\ell$ and weight $0$ to every internal
vertex. Start at an arbitrary vertex $t_0$. If some component $C$ of $T-t_0$
has total weight greater than $1/2$, move to the unique neighbor $t_1$ of
$t_0$ that lies in $C$, and continue similarly. This process is well defined,
since two distinct components of $T-t$ cannot both have weight greater than
$1/2$. Moreover, if we move from $t_i$ into a component $C_i$ of $T-t_i$ with
weight greater than $1/2$, then in $T-t_{i+1}$ the component containing $t_i$
has weight $1-\rho(C_i)<1/2$, so the process cannot immediately return across
the same edge. Since $T$ is acyclic, a walk that never immediately
backtracks also never revisits a vertex, so the process must terminate. Hence it
terminates at a vertex $t^\ast$ such that every component of $T-t^\ast$ has
total weight at most $1/2$.

If $t^\ast$ is a leaf, then the unique component of $T-t^\ast$ has weight
$1-\rho(\{t^\ast\})\le 1/2$, so $\rho(\{t^\ast\})\ge 1/2$. By hypothesis
$\rho(\{t^\ast\})\le 1/2$, hence $\rho(\{t^\ast\})=1/2$, and the incident edge
gives a $1/2$--$1/2$ cut.

Assume now that $t^\ast$ is internal. Since $T$ is subcubic,
$\deg_T(t^\ast)\in\{2,3\}$. If the component masses are $a,b$ in the
degree-$2$ case, then $a+b=1$ and $a,b\le 1/2$, so $a=b=1/2$. If the component
masses are $a,b,c$ in the degree-$3$ case, then $a+b+c=1$ and each of $a,b,c$
is at most $1/2$, so not all three can be less than $1/3$. Thus in every case
some incident edge induces a cut with both sides of mass at least $1/3$.
\end{proof}

\begin{lemma}
\label{lem:general-balanced-profile}
Let $q \in \Sym(H)$, let $\rho$ be a probability measure on $E$ such that
$\rho(\{e\})\le 1/2$ for all $e\in E$, and let
$\phi:[0,1/2]\to \rpos$ be nondecreasing. If for all $F \subseteq E$ it holds that
$$q(F)\ge \phi\bigl(\min\{\rho(F),\rho(E\setminus F)\}\bigr),$$
then $\disp_H(q)\ge \phi(1/3)$.
\end{lemma}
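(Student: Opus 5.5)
The plan is to fix an arbitrary branch decomposition $(T,\delta)$ of $E$ and show that some edge of $T$ carries $q$-value at least $\phi(1/3)$; taking the maximum over edges of $T$ and then the minimum over all $(T,\delta)$ gives $\disp_H(q)\ge\phi(1/3)$.

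\emph{Degenerate case.} First we dispose of the case $|E|\le 1$. By convention $\disp_H(q)=0$ there, so we must check that no valid $\rho$ exists. If $|E|=1$, then $\rho$ puts mass $1$ on the single edge, contradicting $\rho(\{e\})\le 1/2$. If $E=\emptyset$, there is no probability measure on $E$ at all. So the hypotheses force $|E|\ge 2$, and every branch decomposition has at least one tree edge.

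\emph{Main step.} Fix $(T,\delta)$. Since $\delta$ is a bijection from the leaf set $\Omega$ of $T$ onto $E$, we transport $\rho$ to a probability measure $\rho'$ on $\Omega$ via $\rho'(\ell):=\rho(\{\delta(\ell)\})$. This $\rho'$ again gives every leaf mass at most $1/2$. \Cref{lem:branch-one-third} then yields an edge $f$ of $T$ whose bipartition of leaves has both sides of $\rho'$-mass at least $1/3$. Under $\delta$, this bipartition is exactly $(X_f,E\setminus X_f)$, so
\[
\min\{\rho(X_f),\rho(E\setminus X_f)\}\ge 1/3.
\]
The minimum is always at most $1/2$, so it lies in the domain $[0,1/2]$ of $\phi$. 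Since $\phi$ is nondecreasing, the hypothesis gives
\[
q(X_f)\ge \phi\bigl(\min\{\rho(X_f),\rho(E\setminus X_f)\}\bigr)\ge\phi(1/3).
\]
Hence every branch decomposition has width at least $\phi(1/3)$.

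\emph{Remaining checks.} Symmetry of $q$ makes it irrelevant which side of the cut is called $X_f$. The only genuine obstacle is the bookkeeping identification of leaves with hyperedges when applying \Cref{lem:branch-one-third}, together with the degenerate case; everything else is immediate.
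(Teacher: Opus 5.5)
Your proposal is correct and follows the paper's proof: fix an arbitrary branch decomposition, apply \Cref{lem:branch-one-third} to obtain a cut with both sides of $\rho$-mass at least $1/3$, and conclude via monotonicity of $\phi$. Your handling of the degenerate case $|E|\le 1$ and of the transport of $\rho$ to the leaves makes explicit details that the paper's two-line proof leaves implicit.
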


\begin{proof}
Let $(T,\delta)$ be any branch decomposition of $H$. By
\Cref{lem:branch-one-third}, some cut $f\in E(T)$ satisfies
\(
\min\{\rho(X_f),\rho(E\setminus X_f)\}\ge \frac13.
\)
Hence $q(X_f)\ge \phi(1/3)$ for some cut in every branch decomposition of $H$.
\end{proof}

\subsection{From Treewidth to Multicommodity Flows}

We now turn to the main technical step, constructing a symmetric cut weighting $q\in \Sym(H)$ that is simultaneously large on balanced cuts and induced by an edge profile on the line graph with small local incident load. To obtain such a profile, we follow an established path from treewidth, to well-linked sets, to multicommodity flow routing.
Accordingly, this subsection is not a new routing theorem, but rather a tailored adaptation of results of \citet{Reed97}, \citet{FeigeHajiaghayiLee08}, and \citet{ChekuriKhannaShepherd05}, adapted to the particular form needed for our variational framework.

From now on, let $M:=P(H^*)$ be the line graph of $H$. Thus every set $X\subseteq V(M)$ is identified with a subset $F\subseteq E(H)$. For such
$F$ write
\(
\cut{M}(F):=\{xy\in E(M) \mid  x\in F,\ y\notin F\}
\)
for the set of edges in $M$ that cross the cut.

Let $\alpha\in \rpos^{E(M)}$ be an edge profile on $M$. Define
$q_{M,\alpha}(F):=\sum_{e\in \cut{M}(F)}\alpha_e$
for all $F\subseteq E(H)$.
Note that $q_{M,\alpha}\in \Sym(H)$, since
$\cut{M}(F)=\cut{M}(E(H)\setminus F)$. 

For $h\in E(H)=V(M)$, the total $\alpha$-weight incident with $h$ in $M$ is
$\alpha(\Inc_M(h))$.

Let $\mathcal P(M)$ denote the set of simple paths in $M$. A
\emph{path weighting} on $M$ is a function
$\omega:\mathcal P(M)\to  \rpos$
with total mass $\sum_{P\in\mathcal P(M)}\omega(P)=1$. It induces an edge profile $\alpha^\omega$ in the line graph $M$ with weights
\[
\alpha^\omega_e:=\sum_{P\ni e}\omega(P)
\qquad \forall e\in E(M),
\]
and with it the symmetric edge set function
$q_\omega:=q_{M,\alpha^\omega}\in \Sym(H)$.

\begin{lemma}
\label{prop:edge-usage-in-cut-cone}
For every path weighting $\omega$ on $M$ and every $F\subseteq E(H)$,
\[
q_\omega(F)
=
\sum_{P\in\mathcal P(M)} \omega(P)\,\abs{E(P)\cap\cut{M}(F)}.
\]
\end{lemma}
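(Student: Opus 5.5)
This identity follows by unfolding the definitions and exchanging the order of two finite sums. By definition of $q_\omega=q_{M,\alpha^\omega}$ we have
\[
q_\omega(F)=\sum_{e\in\cut{M}(F)}\alpha^\omega_e=\sum_{e\in\cut{M}(F)}\ \sum_{P\in\mathcal P(M),\,e\in E(P)}\omega(P).
\]
I will write the inner sum as $\sum_{P\in\mathcal P(M)}\omega(P)\,[e\in E(P)]$ using an indicator. Then I swap the two summations, which is legitimate because $M$ is a finite graph, so both $\cut{M}(F)$ and $\mathcal P(M)$ are finite and all terms are nonnegative reals. This gives
\[
q_\omega(F)=\sum_{P\in\mathcal P(M)}\omega(P)\sum_{e\in\cut{M}(F)}[e\in E(P)].
\]

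The inner sum counts the edges $e$ that lie in both $\cut{M}(F)$ and $E(P)$, so it equals $\abs{E(P)\cap\cut{M}(F)}$. This yields the claimed formula. Since paths are simple, each edge occurs at most once in $P$, so the indicator $[e\in E(P)]$ correctly records the contribution of $P$ to $\alpha^\omega_e$ and no multiplicities appear.

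No step is a real obstacle. The only points to state explicitly are the finiteness of $\mathcal P(M)$, which holds because simple paths in a finite graph are finite in number, and the fact that the normalisation $\sum_P\omega(P)=1$ plays no role in this identity.
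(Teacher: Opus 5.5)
Your proof is correct and is essentially the paper's own argument: unfold the definitions of $q_\omega$ and $\alpha^\omega$, then interchange the two finite sums over $\cut{M}(F)$ and $\mathcal P(M)$, so that the inner sum counts exactly $\abs{E(P)\cap\cut{M}(F)}$.
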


\begin{proof}
By the definition of $\alpha^\omega$ and by interchanging two finite sums,
\[
q_\omega(F)
=\sum_{e\in\cut{M}(F)}\sum_{P\ni e}\omega(P)
=\sum_{P\in\mathcal P(M)}\omega(P)\,\abs{E(P)\cap\cut{M}(F)}.
\]
\end{proof}

The point of the path-weighting formalism is that $q_\omega(F)$ is exactly the
expected number of edges of $\cut{M}(F)$ traversed by a random simple path drawn
from $\omega$. Thus, if $\omega$ spreads mass uniformly over many terminal
pairs, then every cut separating many of those terminals forces $q_\omega(F)$ to
be large. At the same time, when $\omega$ comes from a feasible
node-capacitated multicommodity flow, the local incident load of the induced edge
profile $\alpha^\omega$ is controlled by the vertex capacities. This is the
mechanism that converts treewidth into a balanced cut profile induced by an edge weighting on the line graph.

Let $G$ be a graph.
A set $U\subseteq V(G)$ is \emph{node-well-linked} if for every $A,B\subseteq U$ with
$|A|=|B|$, there exist $|A|$ pairwise internally vertex-disjoint $A$--$B$ paths
in $G$. When $A\cap B\neq\emptyset$, we
allow trivial paths joining each vertex of $A\cap B$ to itself.

\begin{proposition}[Reed~\cite{Reed97}, see also Lemma~2.1~\cite{ChekuriChuzhoy13}]
\label{prop:treewidth-node-well-linked-set}
Let $G$ be a graph. Let $k$ be the size of the largest node-well-linked set in $G$. Then
$k \le \tw(G) \le 4k$.
\end{proposition}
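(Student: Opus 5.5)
The statement is a classical result, so I would reprove it along the lines of the standard arguments of Reed and of Chekuri--Chuzhoy, treating the two inequalities separately. Throughout I use Menger's theorem in its vertex form. For $A,B\subseteq V(G)$, the maximum number of vertex-disjoint $A$--$B$ paths equals the minimum size of a set $S$ such that every $A$--$B$ path meets $S$. The constants below should be checked against the paper's off-by-one conventions; those conventions are exactly what the slack in ``$4k$'' and the exact form of the lower bound absorb.

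\emph{Lower bound ($k\le \tw(G)$, up to the paper's convention).} Let $U$ be node-well-linked with $|U|=k$, and fix a tree decomposition $(T,B)$ of minimum width. By the usual centroid walk, the same one used in \Cref{lem:branch-one-third}, I would find a node $t$ such that every component of $G-B(t)$ contains at most $|U\setminus B(t)|/2$ vertices of $U$. Suppose $|B(t)|$ is much smaller than $k$. Then greedily merging components yields two vertex-disjoint unions $C_1,C_2$, each containing at least roughly $|U\setminus B(t)|/3$ vertices of $U$, with no edges between them. Choose equal-size sets $A\subseteq U\cap C_1$ and $B'\subseteq U\cap C_2$. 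Well-linkedness gives $|A|$ disjoint $A$--$B'$ paths, and each of them must pass through $B(t)$. Hence $|B(t)|\ge |A|$, and so the width is linear in $k$. Tightening the constant to match the stated bound is a matter of choosing the separating bag more carefully, or of appealing directly to \cite{ChekuriChuzhoy13}. I do not expect difficulty here.

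\emph{Upper bound ($\tw(G)\le 4k$).} Here I would use the classical Robertson--Seymour recursive construction. I would prove, by induction on $|V(G')|$, the following claim for induced subgraphs $G'$ with a designated boundary set $W$ of size at most $3k$: there is a tree decomposition of $G'$ with all bags of size at most $4k+1$, and with $W$ contained in the root bag.

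If $|V(G')|\le 4k+1$ we take a single bag. Otherwise, enlarge $W$ by arbitrary vertices to get $|W|=3k$. Since no node-well-linked set of size larger than $k$ exists, in particular $W$ is not well-linked. So there are equal-size $A,B\subseteq W$ that cannot be joined by $|A|$ disjoint paths. The hard part is to choose $A,B$ so that this failure yields a \emph{balanced} separation of $W$. A failure with, say, $|A|=1$ is useless. What I want is a separator $S$ with $|S|<k$ such that each side of $(G'-S)$ meets $W$ in at most $2k$ vertices.

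I would try to extract this from the standard equivalence: if some $W$ of size $3k$ has no balanced separation of order $<k$, then $W$ contains a well-linked set of size $\Omega(k)$. That equivalence is proved by Menger applied to every balanced pair of subsets of $W$. I expect this step, matching the constants so that ``no well-linked set of size $>k$'' translates into ``every $3k$-set has a balanced separator of size $<k$,'' to be the main obstacle. If the constants do not close, I would fall back on citing Reed's bound with its own constant.

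Given such an $S$, recurse on each piece $G'[C\cup S]$, where $C$ ranges over groups of components, using boundary $(W\cap C)\cup S$. This boundary has size at most $2k+k=3k$. Finally, attach the children to a root bag $W\cup S$, which has size less than $4k$. The connectivity condition of the tree decomposition holds because every vertex shared between pieces lies in $S$, and $S$ is contained in every child's root bag. This gives width at most $4k$.
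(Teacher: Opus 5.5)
The paper gives no proof of this statement. It is imported from Reed and Chekuri--Chuzhoy, and only the direction $\tw(G)\le 4k$ is ever used, to obtain a node-well-linked set of size at least $\tw(M)/4$ in the line graph. Judged on its own, your proposal has two genuine gaps.

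\textbf{Lower bound.} You cannot reach $k\le\tw(G)$ by ``choosing the separating bag more carefully'', because with the paper's definitions it is false.

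In $K_n$ the whole vertex set is node-well-linked: use trivial paths on $A\cap B$ and direct edges between $A\setminus B$ and $B\setminus A$. So $k=n>n-1=\tw(K_n)$.

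The failure is not merely additive. Take a clique $S$ of size $s$ completely joined to two independent sets $L_1,L_2$ of size $s$. This graph has treewidth $s$, yet $L_1\cup L_2$ is node-well-linked, since each remaining pair can be routed through its own vertex of $S$. Hence $k\ge 2\tw(G)$.

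What your argument genuinely yields is a linear bound. Your centroid condition ``at most $|U\setminus B(t)|/2$'' need not be attainable in a minimum-width decomposition; for example, take the path $u_1xu_2$ with $U=\{u_1,u_2\}$ and bags $\{u_1,x\},\{x,u_2\}$. Use a bag with at most $|U|/2$ terminals per component instead. Putting $U\cap B(t)$ into both $A$ and $B$ as trivial paths then gives $|B(t)|\ge k/3$, i.e.\ $k\le 3(\tw(G)+1)$.

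\textbf{Upper bound.} Here you leave the decisive step open, and the route you sketch would not close it. A well-linked subset of size $\Omega(k)$ does not contradict maximality of $k$, and non-well-linkedness of $W$ itself gives no balance.

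The missing idea is to apply non-well-linkedness to a subset $Z\subseteq W$ with $|Z|=k+1$ and then pad:
\begin{enumerate}
\item $Z$ is not well-linked in $G$, hence not in $G'$. Menger gives $A,B\subseteq Z$ with $|A|=|B|=s$ and an $A$--$B$ separator $S_0$ in $G'$ with $|S_0|\le s-1$.
\item Choose $R\subseteq W\setminus(A\cup B)$ with $|R|=k+1-s$. This is possible since $|W\setminus(A\cup B)|\ge 2k-1\ge k$. Put $S:=S_0\cup R$. Then $|S|\le k$, and $S$ separates $A\cup R$ from $B\cup R$, both of size $k+1$.
\item Let $X$ consist of $S$ together with the components of $G'-S$ meeting $A$, and let $Y:=(V(G')\setminus X)\cup S$. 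Since $A\cup R\subseteq X$ and $B\cup R\subseteq Y$, both $W\cap(X\setminus S)$ and $W\cap(Y\setminus S)$ have at most $3k-(k+1)=2k-1$ elements.
\item Both of these sets are nonempty because $|S|<k+1$, which guarantees progress in the recursion.
\end{enumerate}
The children then have boundaries of size at most $3k-1$, and the root bag $W\cup S$ has size at most $4k$. You get $|S|\le k$, not $|S|<k$, but this still yields width at most $4k$. Running the same argument with $|W|=2(k+1)$ even gives $\tw(G)\le 3k+1$.
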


For $S\subseteq V(G)$, write
$N_G(S):=\{x\in V(G)\setminus S \mid \exists y \in S \text{ s.t. } xy \in E(G)\}$.
A set $U\subseteq V(G)$ is \emph{$1$-node-cut-linked} if, for every
$S\subseteq V(G)$ with $|S\cap U|\le |U|/2$, one has $|N_G(S)|\ge |S\cap U|$.

It is a standard consequence of Menger's Theorem~\cite{menger1927allgemeinen} that node-well-linked sets are also 1-node-cut-linked (see also~\cite[Section 1.3]{ChekuriKhannaShepherd05}).

\begin{proposition}
\label{prop:well-linked-node-cut-linked}
Every node-well-linked set is also $1$-node-cut-linked.
\end{proposition}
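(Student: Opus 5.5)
The plan is to derive the cut condition from Menger's Theorem applied to a suitable pair of equal-size subsets of $U$. Let $U\subseteq V(G)$ be node-well-linked and fix $S\subseteq V(G)$ with $|S\cap U|\le |U|/2$. Write $A:=S\cap U$ and $k:=|A|$. If $k=0$ the inequality $|N_G(S)|\ge 0$ is trivial, so assume $k\ge 1$.

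\textbf{Choosing the sets $A$ and $B$.} Because $|U\setminus S|\ge |U|-k\ge k$, we can pick $B\subseteq U\setminus S$ with $|B|=k$. Then $A$ and $B$ are disjoint equal-size subsets of $U$. Node-well-linkedness gives $k$ pairwise internally vertex-disjoint $A$--$B$ paths $P_1,\dots,P_k$. Since $|A|=|B|=k$, each vertex of $A$ and each vertex of $B$ is an endpoint of exactly one path. So the paths have distinct start vertices in $A$ and distinct end vertices in $B$, and they are vertex-disjoint in full, not merely internally.

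\textbf{Assigning a distinct boundary vertex to each path.} Each $P_i$ starts in $A\subseteq S$ and ends in $B$, which lies outside $S$. Walking along $P_i$, let $x_i$ be the first vertex not in $S$. Its predecessor on $P_i$ lies in $S$, so $x_i\in N_G(S)$. Because the paths are vertex-disjoint, the vertices $x_1,\dots,x_k$ are distinct. Hence $|N_G(S)|\ge k=|S\cap U|$, which is exactly the $1$-node-cut-linked condition.

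\textbf{Main obstacle.} The only delicate point is justifying full vertex-disjointness from the definition, which only asserts internal disjointness. I will handle it by the endpoint counting above: with $k$ paths, $|A|=|B|=k$, and every path joining $A$ to $B$, the endpoints must be pairwise distinct. Internal disjointness then rules out any other shared vertex. The convention allowing trivial paths is irrelevant here, since $A\cap B=\emptyset$.
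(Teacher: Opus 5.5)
Your strategy is the standard argument behind the paper's remark that the statement is a standard consequence of Menger's Theorem. You put $A=S\cap U$, use $|S\cap U|\le |U|/2$ to pick $B\subseteq U\setminus S$ with $|B|=|A|$, and charge each linking path to its first vertex outside $S$. The paper gives no proof of its own, and only the easy direction is needed. The gap is exactly the step you flag as delicate. The claim that $k$ paths from $A$ to $B$ with $|A|=|B|=k$ must have pairwise distinct endpoints does not follow. Internal disjointness allows several paths to share a start or an end vertex, and the counting only runs the other way: distinct endpoints would give a bijection, but nothing forces $k$ distinct endpoints. For instance, take paths $a\,s_i\,b$ with $a\in A$, $b\in B$ and distinct $s_1,\dots,s_k\in S$. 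They are pairwise internally disjoint, yet all have first exit vertex $b$, so $i\mapsto x_i$ is not injective.

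No cleverer counting can repair this within the literal wording. Suppose paths were only required to start in $A$, end in $B$, and have no inner vertex of one lying on another; then the proposition would be false. Take $K_{1,3}$ with centre $c$ and $U=V(K_{1,3})$. Every equal-size pair $A,B\subseteq U$ is joined in this weak sense, using trivial paths on $A\cap B$ plus either single edges at $c$ (possibly sharing the end $c$) or one leaf--$c$--leaf path. But the set $S$ of two leaves satisfies $|S\cap U|=2=|U|/2$ and $N_G(S)=\{c\}$.

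So full vertex-disjointness must come from the definition itself. Node-well-linkedness in the sense of Reed and Chekuri--Khanna--Shepherd, which the paper follows, asks for $|A|$ vertex-disjoint paths linking $A$ to $B$. Each vertex of $A$ and of $B$ is then an end of exactly one path, with trivial paths on $A\cap B$. The paper's word ``internally'' has to be read in that sense. If you invoke the definition in this form instead of your counting claim, the rest of your proof is correct. This also settles the separate issue of an inner vertex of one path being an end of another.
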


Our goal will be to move from cut-linkedness to a technically pleasant function in $\Sym(H)$. We do so through node-capacitated multicommodity flows. 

For $\{u,v\}\in\binom{U}{2}$ let $\mathcal P_{uv}(G)$ denote the set of simple
$u$--$v$ paths in $G$.

Let $G$ be a graph and let $U\subseteq V(G)$. 
A \emph{node-capacitated multicommodity flow} on $U$ is a family
\[
f=\bigl(f_{uv}\bigr)_{\{u,v\}\in\binom{U}{2}}
\qquad\text{with}\qquad
f_{uv}:\mathcal P_{uv}(G)\to \mathbb R_{\ge 0}.
\]
The value of the $\{u,v\}$-commodity is
\[
\|f_{uv}\|_1:=\sum_{P\in\mathcal P_{uv}(G)} f_{uv}(P).
\]
For $x\in V(G)$, the internal load of $x$ is
\[
\load_f(x):=
\sum_{\{u,v\}\in\binom{U}{2}}
\sum_{\substack{P\in\mathcal P_{uv}(G)\\ x\in V(P)\setminus\{u,v\}}}
f_{uv}(P).
\]
The flow is \emph{feasible} if $\load_f(x)\le 1$ for every $x\in V(G)$.

If $\|f_{uv}\|_1=\vartheta$ for every $\{u,v\}\in\binom{U}{2}$, we say that
$f$ routes $\vartheta$ units between every unordered terminal pair.

We use the convention that only internal vertices count against capacity; using
the alternative convention in which endpoints contribute weight $1/2$ changes
only constant differences.

\begin{proposition}[Feige, Hajiaghayi, and Lee,~Theorem 4.1~\cite{FeigeHajiaghayiLee08}]
\label{prop:fhl-product-node-flow-gap}
There exists an universal constant $C >0$ such that for every graph
$G$, every terminal set $U\subseteq V(G)$ with $|U|=k\ge 2$, and every product
multicommodity demand supported on $U$, the node-capacitated max-flow/min-cut gap is
at most $C \log(k)$.
\end{proposition}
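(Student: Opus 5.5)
We follow the standard LP-duality-plus-embedding route, specialised to vertex capacities as in Feige, Hajiaghayi, and Lee. We cite the result rather than reprove it; the following sketch records why the gap is $O(\log k)$.

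\emph{Duality.} Fix a product demand $D(u,v)=\pi(u)\pi(v)$ on $U$, with unit capacities (the weighted case is identical). Write the maximum concurrent flow as an LP over path weightings $f_{uv}$. Its dual assigns lengths $\ell:V(G)\to\rpos$ and minimises $\sum_x \ell(x)$ subject to $\sum_{u,v}D(u,v)\,d_\ell(u,v)\ge 1$. Here $d_\ell(u,v)$ is the minimum over $u$--$v$ paths of the total length of internal vertices; to obtain a genuine metric we symmetrise, e.g.\ by counting endpoints with weight $1/2$, which by the remark above costs only constants. By LP duality, the flow value equals this fractional vertex-cut value. The integral counterpart is the sparsest vertex cut: a separator $S$ splitting $V\setminus S$ into parts $A,B$, with ratio $|S|/\sum_{u\in A\cap U,\,v\in B\cap U}D(u,v)$. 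So it suffices to round an optimal $\ell$ to a vertex separator while losing only an $O(\log k)$ factor.

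\emph{Embedding.} Apply Bourgain's embedding to the metric $d_\ell$ restricted to the $k$ terminals. Its coordinates are Fréchet maps $x\mapsto d_\ell(x,A)$ for random subsets $A\subseteq U$, and they distort terminal distances by $O(\log k)$. The key observation is that we extend each coordinate to all of $V(G)$ by the same formula. Each such map is then $1$-Lipschitz along vertex lengths: for adjacent $x,y$, the values $d_\ell(x,A)$ and $d_\ell(y,A)$ differ by at most $\ell(x)$ or $\ell(y)$, up to constants.

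\emph{Rounding.} This is the main obstacle. Edge-cut rounding via $\ell_1$ cut decompositions does not directly produce vertex separators. We therefore round coordinate by coordinate with a random threshold $t$. Put $S_t:=\{x : t\in[\,d_\ell(x,A)-\ell(x),\,d_\ell(x,A)\,]\}$; these vertices form a separator between $\{x: d_\ell(x,A)<t\}$ and the remaining vertices. By the Lipschitz property, integrating over $t$ gives $\int |S_t|\,dt\le\sum_x\ell(x)$ for each coordinate. On the other side, integrating the separated demand over $t$ yields $\sum D(u,v)\,|d_\ell(u,A)-d_\ell(v,A)|$. Averaging over coordinates and using the $O(\log k)$ distortion lower bound on terminals, some coordinate and some threshold give a ratio at most $O(\log k)\cdot\sum_x\ell(x)$. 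This is the required $C\log k$ gap. The delicate point is the endpoint and symmetrisation bookkeeping, so that terminals that themselves lie in $S_t$ are charged correctly. Following Feige, Hajiaghayi, and Lee, we handle this by giving terminals half weight; this changes only the constant $C$.
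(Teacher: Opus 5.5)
Like the paper, you ultimately rest on Theorem~4.1 of Feige, Hajiaghayi, and Lee. The paper gives no argument beyond that citation, so as a citation your proposal matches its treatment and is fine. The architecture of your sketch is also the right one. Vertex-cut rounding needs a probability distribution over non-expansive line maps whose average stretch on terminal pairs is $\Omega(d_\ell/\log k)$. Bourgain's Fr\'echet coordinates $x\mapsto d_\ell(x,A)$, extended to all of $V(G)$, provide exactly this.

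The rounding step, however, is wrong as written.

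\textbf{Why the window fails.} Under the half-endpoint metric you adopt, a coordinate $g(x)=d_\ell(x,A)$ only satisfies $|g(x)-g(y)|\le(\ell(x)+\ell(y))/2$ for adjacent $x,y$. Your one-sided window $t\in[g(x)-\ell(x),g(x)]$ separates $\{x: g(x)<t\}$ from the rest only if $g(y)\le g(x)+\ell(y)$ holds on every edge, and that can fail.

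\textbf{Counterexample.} Take the path with vertices $a,x,y$ in this order, $A=\{a\}$, $\ell(a)=\ell(y)=0$ and $\ell(x)=2$. Then $g(x)=1$ and $g(y)=2$. For every $t\in(1,2)$ your $S_t$ is empty, yet $x\in\{g<t\}$ is adjacent to $y$, which lies on the other side. The procedure would therefore report a cut of cost zero.

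\textbf{Fix.} Centre the windows: set $S_t:=\{x : |t-g(x)|\le\ell(x)/2\}$. This separates $\{x: g(x)+\ell(x)/2<t\}$ from $\{x: g(x)-\ell(x)/2>t\}$ precisely by the two-sided bound, and still gives $\int|S_t|\,dt=\sum_x\ell(x)$.

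\textbf{Endpoint bookkeeping.} Also count a terminal pair as separated whenever one of its endpoints lies in $S_t$. This is a legitimate cut notion under the half-endpoint convention: a unit of flow uses half a unit of capacity at each endpoint, so weak duality holds up to a factor $2$. With this convention every pair $u,v$ is separated on a threshold set of measure at least $|g(u)-g(v)|$. The endpoint bookkeeping you defer to the reference then disappears.
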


\begin{proposition}
\label{prop:cks-uniform-node-dictionary}
There exists a universal constant $C>0$ such that the following holds.
Let $G$ be a graph, let $U\subseteq V(G)$ be a 1-node-cut-linked set with $m:=|U|\ge 2$. Then $G$
admits a feasible node-capacitated multicommodity flow routing at least
$1/(C m\log m)$ units between every unordered pair
$\{u,v\}\in\binom{U}{2}$.
\end{proposition}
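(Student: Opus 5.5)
The plan is to derive the statement from \Cref{prop:fhl-product-node-flow-gap} by LP duality for the uniform product demand on $U$. Consider the concurrent-flow LP: maximise $\vartheta$ such that there is a feasible node-capacitated multicommodity flow routing $\vartheta$ units between every pair $\{u,v\}\in\binom{U}{2}$. This is the uniform product demand $D_{uv}=1$ on $\binom{U}{2}$. By \Cref{prop:fhl-product-node-flow-gap}, its optimum $\vartheta^\ast$ satisfies $\vartheta^\ast\ge \mathrm{sparsity}^\ast/(C\log m)$. Here $\mathrm{sparsity}^\ast$ is the minimum, over node separators $S\subseteq V(G)$, of $|S|$ divided by the number of demand pairs separated by $S$. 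The goal is therefore to show $\mathrm{sparsity}^\ast\ge c/m$ for an absolute constant $c$, using $1$-node-cut-linkedness.

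Let me fix the notion of node cut to be that of \citet{FeigeHajiaghayiLee08}. A separator $S$ removes a vertex set, terminals in $S$ are treated as lying on the cut, and the remaining terminals split among the components of $G-S$. Let $A$ be a union of components of $G-S$ chosen so that $|A\cap U|\le |U|/2$, and so that the terminals outside $S$ are split as evenly as possible. Then $N_G(A)\subseteq S$, and $1$-node-cut-linkedness gives $|S|\ge |N_G(A)|\ge |A\cap U|$.

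The number of separated pairs is at most about $m\cdot(|A\cap U|+|S\cap U|)$. This counts pairs with one endpoint in $A$ or in $S$, and separated pairs within the large side can be handled by the choice of $A$. If I take $A$ to be everything except the heaviest component, each pair separated by $S$ has an endpoint outside the heaviest component, giving at most $m\cdot|U\setminus C_{\max}|$ pairs.

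Two cases remain.
\begin{itemize}
\item If $|C_{\max}\cap U|\ge m/2$, the complement has at most $m/2$ terminals, and cut-linkedness applied to $S'=V\setminus (C_{\max}\cup S)$ bounds $|S|\ge |U\setminus(C_{\max}\cup S)|$. Terminals inside $S$ are counted directly by $|S|$.
\item Otherwise, greedily group components into a side with between roughly $m/4$ and $m/2$ terminals and argue symmetrically.
\end{itemize}
Either way $|S|\gtrsim (\#\text{separated pairs})/m$, so $\mathrm{sparsity}^\ast\ge c/m$ and $\vartheta^\ast\ge c/(Cm\log m)$. Absorbing constants into $C$ gives the claim. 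By the remark on the endpoint convention, switching between conventions for terminal load changes only constants.

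I expect the main obstacle to be matching conventions precisely. FHL's theorem is stated with a specific sparsity definition, including how separator vertices that are themselves terminals are charged. I need the cut-linkedness inequality to lower bound exactly that quantity, including the boundary case where $S$ contains many terminals. I would first try to handle those terminals by charging each to itself, since it contributes $1$ to $|S|$ and separates at most $m$ pairs.
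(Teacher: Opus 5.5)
Your proposal is correct and follows essentially the same route as the paper: both combine the Feige--Hajiaghayi--Lee $O(\log m)$ flow--cut gap for product demands with the $1$-node-cut-linkedness of $U$. The difference is that the paper cites Chekuri--Khanna--Shepherd for the step ``$\pi$-cut-linked implies $(\pi/\beta)$-flow-linked'' (giving $1/(\beta m)$ per pair), whereas you prove that step inline by showing every node cut has sparsity $\Omega(1/m)$ for the uniform demand on $U$ (terminals in the separator are charged to themselves, and cut-linkedness is applied to a lighter side or a greedily grouped side with between $m/4$ and $m/2$ terminals); this is robust to the endpoint conventions and only costs a constant absorbed into $C$.
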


\begin{proof}
The result is due to Chekuri, Khanna, and Shepherd \cite[Section~1.3]{ChekuriKhannaShepherd05}. They use the
following terminology. For a nonnegative weight function
$\pi:X\to \rpos$ on terminals, $X$ is $\pi$-flow-linked if the
product demand
\[
\frac{\pi(u)\pi(v)}{\pi(X)}
\]
can be feasibly routed between every unordered terminal pair $u,v\in X$ in the
relevant edge- or node-capacitated sense. In the node case, $X$ is
$\pi$-node-cut-linked if
\[
|N_G(S)|\ge \pi(S\cap X)
\]
whenever $\pi(S\cap X)\le \pi(X)/2$. They state that if $X$ is
$\pi$-cut-linked, then it is $(\pi/\beta)$-flow-linked, where $\beta$ is the
worst-case max-flow/min-cut gap for product multicommodity flow instances in
the graph. By \Cref{prop:fhl-product-node-flow-gap}, the worst-case node-capacitated
max-flow/min-cut gap for product demands supported on $U$ is at most
$C \log m$.

Apply this with $X=U$ and $\pi(u)=1$ for every $u\in U$. Then, since $U$ is
$1$-node-cut-linked, it is $(\pi/\beta)$-flow-linked with
$\beta\le C\log m$. The corresponding product demand between each unordered
pair is
\[
\frac{(1/\beta)(1/\beta)}{m/\beta}
=
\frac{1}{\beta m}
\ge
\frac{1}{C m\log m},
\]
as claimed.
\end{proof}

\begin{lemma}
\label{prop:node-cut-linked-uniform-node-flow}
There exists a universal constant $C>0$ with the following
property. For every graph $G$ and every $1$-node-cut-linked set
$U\subseteq V(G)$ with $m:=|U|\ge2$, there is a feasible node-capacitated
multicommodity flow in $G$ that routes
\[
\vartheta\ge \frac{1}{C\,m\log(m)}
\]
units between every unordered pair $\{u,v\}\in\binom{U}{2}$.
\end{lemma}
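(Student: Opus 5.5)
This lemma is essentially a restatement of \Cref{prop:cks-uniform-node-dictionary}, so the plan is to derive it directly from that proposition. The only real work is to check that the conventions match. One convention is that only internal vertices count against capacity. The other is that the flow routes the \emph{same} amount $\vartheta$ between every pair, rather than merely at least some amount.

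First, fix the universal constant $C$ supplied by \Cref{prop:cks-uniform-node-dictionary}, which itself comes from the Feige--Hajiaghayi--Lee gap bound \Cref{prop:fhl-product-node-flow-gap}. Given $G$ and a $1$-node-cut-linked $U$ with $m\ge 2$, that proposition yields a feasible node-capacitated flow $f$ in which every commodity has value $\|f_{uv}\|_1\ge 1/(Cm\log m)$.

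Second, to obtain a flow that routes exactly $\vartheta:=\min_{\{u,v\}}\|f_{uv}\|_1\ge 1/(Cm\log m)$ units between every pair, rescale each commodity. Set $f'_{uv}:=(\vartheta/\|f_{uv}\|_1)\,f_{uv}$. Since every scaling factor is at most $1$ and all path weights are nonnegative, $\load_{f'}(x)\le\load_f(x)\le 1$ for every $x$, so $f'$ stays feasible. Because each $f_{uv}$ has positive mass, $f'_{uv}$ is well defined, and it has value exactly $\vartheta$.

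Third, check the capacity conventions. If the cited result counts endpoints with weight $1/2$ (or fully), then its load is at least our internal load, so feasibility there implies feasibility here. If instead the cited result uses a convention weaker than ours, the loads differ by a bounded factor; dividing the flow by that constant (e.g.\ $2$) and enlarging $C$ to absorb it settles the matter. This bookkeeping about conventions and constants in the cited routing theorems is the only point requiring care; the rest is monotonicity of load under downscaling.
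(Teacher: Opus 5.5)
Your proposal is correct and matches the paper, which proves this lemma in one line by invoking \Cref{prop:cks-uniform-node-dictionary} and enlarging the constant if necessary. Your explicit downscaling of each commodity to the common value $\vartheta$ is valid bookkeeping that the paper leaves implicit; the convention check is not really needed, since that proposition is already stated for the paper's own notion of feasibility, which counts only internal vertices and is the most lenient convention anyway.
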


\begin{proof}
This is exactly \Cref{prop:cks-uniform-node-dictionary}, after increasing the
universal constant if necessary.
\end{proof}

\begin{definition}
\label{def:normalised-path-weighting}
Let $U\subseteq E(H)$ with $|U|=m\ge 2$, and let
$f=\bigl(f_{uv}\bigr)_{\{u,v\}\in\binom{U}{2}}$
be a feasible node-capacitated multicommodity flow in $M$ such that
$\|f_{uv}\|_1=\vartheta$ for every $\{u,v\}\in\binom{U}{2}$.
Define the \emph{normalised path weighting}  $\omega_f$ as 
\[
\omega_f(P):=
\frac{1}{\binom{m}{2}\vartheta}
\sum_{\{u,v\}\in\binom{U}{2}} f_{uv}(P)
\qquad(P\in\mathcal P(M)),
\]
where $f_{uv}(P)=0$ for $P\notin \mathcal P_{uv}(M)$.
\end{definition}

\begin{lemma}
\label{lem:normalized-path-weighting-properties}
With notation as in \Cref{def:normalised-path-weighting}, $\omega_f$ is a normalised path weighting of total mass $1$.
Moreover:
\begin{enumerate}
\item for every $\{u,v\}\in\binom{U}{2}$, the total $\omega_f$-mass of
$u$--$v$ paths is exactly $1/\binom{m}{2}$;
\item if $\alpha^{\omega_f}$ is the induced edge weighting on the line graph, then
\[
\alpha^{\omega_f}(\Inc_M(h))
\le
\frac{4}{m}+\frac{2}{\binom{m}{2}\vartheta}
\qquad \forall h\in E(H),
\]
\item in particular, if $\vartheta\ge 1/(C\,m\log m)$, then
\[
\alpha^{\omega_f}(\Inc_M(h))\le C'\frac{\log m}{m}
\qquad \forall h\in E(H)
\]
for some constant $C'$ depending only on $C$.
\end{enumerate}
\end{lemma}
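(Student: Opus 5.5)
The plan is to compute directly from \Cref{def:normalised-path-weighting}. The total mass claim and item (1) together are one calculation. Every path in $\mathcal P_{uv}(M)$ is a simple path with endpoints $u,v$. Hence for distinct unordered pairs these path sets are disjoint, except that a single path $P$ belongs to exactly one pair $\{u,v\}$, namely its endpoint pair. Summing $\omega_f(P)$ over $P\in\mathcal P_{uv}(M)$ therefore gives $\|f_{uv}\|_1/(\binom m2\vartheta)=1/\binom m2$. Summing over all $\binom m2$ pairs then gives total mass $1$, so $\omega_f$ is a path weighting.

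For item (2), fix $h\in E(H)=V(M)$ and expand
\[
\alpha^{\omega_f}(\Inc_M(h))=\sum_{e\ni h}\sum_{P\ni e}\omega_f(P)=\sum_{P}\omega_f(P)\,\bigl|E(P)\cap \Inc_M(h)\bigr|.
\]
A simple path meets $h$ in at most two of its edges. More precisely, it uses exactly two edges at $h$ if $h$ is an internal vertex of $P$, at most one if $h$ is an endpoint, and none otherwise. I would therefore split the sum into two parts.

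The first part covers paths with $h$ internal. It contributes at most
\[
2\sum_{P:\,h\in V(P)\setminus\mathrm{ends}}\omega_f(P)=\frac{2\,\load_f(h)}{\binom m2\vartheta}\le \frac{2}{\binom m2\vartheta},
\]
by feasibility of $f$.

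The second part covers paths with $h$ as an endpoint. This part is nonzero only if $h\in U$. It is bounded by the total mass of all pairs containing $h$, which by item (1) is $(m-1)/\binom m2=2/m$. This is at most $4/m$, so the stated constant $4/m$ leaves slack, presumably accommodating the alternative endpoint convention. Adding the two parts gives (2).

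For item (3), substitute $\vartheta\ge 1/(Cm\log m)$. This gives
\[
\frac{2}{\binom m2\vartheta}\le \frac{4Cm\log m}{m(m-1)}\le \frac{8C\log m}{m}
\]
for $m\ge 2$. It also gives $4/m\le (4/\log 2)\log m/m$, since $\log m\ge\log 2$. Hence $C'=8C+4/\log 2$ works.

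The only delicate point is the bookkeeping in (2), namely that each path is charged with the correct multiplicity at $h$ according to its endpoint or internal status. Once that split is made, the argument is routine.
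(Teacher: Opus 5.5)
Your proposal is correct and follows essentially the same route as the paper: the same direct mass computation for the total mass and item (1), the same use of feasibility to bound the mass of paths using $h$ internally by $1/(\binom{m}{2}\vartheta)$, and the same constant-chasing for item (3). The only difference is that you charge endpoint paths with multiplicity $1$ and so obtain $2/m$. The paper instead uses $\deg_P(h)\le 2$ for every path through $h$, which is where its $4/m$ comes from, not any alternative endpoint convention.
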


\begin{proof}
Write $\omega=\omega_f$. The total mass of $\omega$ is
\[
\sum_{P\in\mathcal P(M)}\omega(P)
=
\frac{1}{\binom{m}{2}\vartheta}
\sum_{\{u,v\}\in\binom{U}{2}}\sum_P f_{uv}(P)
=1.
\]
For a fixed terminal pair $\{u,v\}$, the total $\omega$-mass assigned to
$u$--$v$ paths is $\vartheta/(\binom{m}{2}\vartheta)=1/\binom{m}{2}$.

Fix $g\in E(H)$. The total $\omega$-mass of paths for which $g$ is an endpoint
of the terminal pair is at most $2/m$, with equality only when $g\in U$. For
internal use, let $\varphi_g(u,v)$ denote the total flow of the
$\{u,v\}$-commodity through $g$ as an internal vertex. Then
\[
\sum_{\substack{P\in\mathcal P(M)\\ g\text{ internal in }P}}\omega(P)
=
\frac{1}{\binom{m}{2}\vartheta}
\sum_{\{u,v\}\in \binom{U}{2}}\varphi_g(u,v).
\]
Feasibility of the node-capacitated flow gives
\(
\sum_{\{u,v\}\in \binom{U}{2}} \varphi_g(u,v)\le 1,
\)
so the total $\omega$-mass of paths that use $g$ internally is at most
$1/(\binom{m}{2}\vartheta)$. Since every path in the support is simple,
\[
\alpha^\omega(\Inc_M(g))
=
\sum_{P\in\mathcal P(M)}\omega(P)\deg_P(g)
\le
2\sum_{P\in\mathcal P(M)}\omega(P)\mathbf 1[g\in V(P)]
\le
\frac{4}{m}+\frac{2}{\binom{m}{2}\vartheta}.
\]
For the final statement, if $\vartheta \ge 1/(C m\log m)$, then
\[
\frac{2}{\binom{m}{2}\vartheta}
\le
\frac{4C\log m}{m-1}
\le
\frac{8C\log m}{m}.
\]
The remaining term $4/m$ is also at most $C''\log m/m$ for all $m\ge2$, after
choosing $C''$ large enough. This gives the desired constant $C'$.
\end{proof}

\subsection{From Multicommodity Flows to Cut Weightings}
We now extract from the routed flow a concrete symmetric cut weighting. The previous subsection produced a path weighting $\omega$ that spreads mass uniformly over many terminal pairs while keeping the induced incident load small. The next lemma converts the first property into a lower bound on cut values; the second property will later allow us to realise the resulting profile with small gauge cost.

\begin{lemma}
\label{prop:uniform-terminal-crossing}
Let $U\subseteq E(H)$ have size $m\ge 2$, and let $\rho$ be
\[
\rho(F):=\frac{|F\cap U|}{|U|} 
\qquad \forall F\subseteq E(H).
\]
Let $\omega$ be a path weighting on $M$ such that, for every unordered pair
$\{u,v\}\in\binom{U}{2}$, the total $\omega$-mass of $u$--$v$ paths is exactly
$1/\binom{m}{2}$. Then
\[
q_\omega(F)\ge 2\rho(F)\rho(E(H)\setminus F)
\qquad \forall F\subseteq E(H).
\]
\end{lemma}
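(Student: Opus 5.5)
The plan is to begin from the path-sum formula of \Cref{prop:edge-usage-in-cut-cone}, which writes $q_\omega(F)$ as a sum over paths $P$ of $\omega(P)\,\abs{E(P)\cap\cut{M}(F)}$. All terms are nonnegative, so we can discard the contribution of every path whose endpoints do not form a terminal pair separated by $F$.

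The first step is purely combinatorial. Take $u\in F\cap U$ and $v\in U\setminus F$, and let $P$ be any simple $u$--$v$ path in $M$. Since $P$ starts inside $F$ and ends outside $F$, some edge of $P$ has one endpoint in $F$ and the other outside $F$. That edge lies in $\cut{M}(F)$, so $\abs{E(P)\cap\cut{M}(F)}\ge 1$.

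The second step adds up these contributions. By hypothesis, the total $\omega$-mass on $u$--$v$ paths is exactly $1/\binom{m}{2}$ for each such pair. Hence
\[
q_\omega(F)\ \ge\ \sum_{u\in F\cap U}\ \sum_{v\in U\setminus F}\frac{1}{\binom{m}{2}}
=\frac{|F\cap U|\,|U\setminus F|}{\binom{m}{2}}.
\]
Here each unordered pair $\{u,v\}$ with exactly one endpoint in $F$ is counted once, which is consistent with the unordered indexing of commodities. It remains to compare this with $\rho$. We have $\rho(F)=|F\cap U|/m$, and $\rho(E(H)\setminus F)=|U\setminus F|/m$ because $U\subseteq E(H)$. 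Therefore
\[
\frac{|F\cap U|\,|U\setminus F|}{\binom{m}{2}}=\frac{2m^2}{m(m-1)}\,\rho(F)\rho(E(H)\setminus F)\ \ge\ 2\rho(F)\rho(E(H)\setminus F),
\]
using $m/(m-1)\ge 1$.

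I do not expect any step to be a real obstacle. The one point that needs care is bookkeeping: a path counted for a given pair $\{u,v\}$ should not also be counted for a different pair. This cannot happen, since a simple path has exactly one pair of endpoints, so the sets $\mathcal P_{uv}(M)$ for distinct pairs are disjoint. The mass assignments in the hypothesis therefore refer to disjoint families of paths, and restricting the sum to the separated pairs is legitimate.
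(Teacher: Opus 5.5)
Your proposal is correct and follows essentially the same route as the paper: use \Cref{prop:edge-usage-in-cut-cone}, note that every path joining a terminal in $F$ to one outside $F$ crosses $\cut{M}(F)$, count the $|U\cap F|\cdot|U\setminus F|$ separated pairs at mass $1/\binom{m}{2}$ each, and then use $m(m-1)\le m^2$. Your remark that the families $\mathcal P_{uv}(M)$ are disjoint for distinct pairs is a small bookkeeping point that the paper leaves implicit.
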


\begin{proof}
Fix $F\subseteq E(H)$ and write
$a:=|U\cap F|$, $b:=|U\setminus F|=|U \cap (E(H)\setminus F)|$.
Every path whose endpoints lie on opposite sides of
$F$ and  $E(H)\setminus F$ contains at least one edge of $\cut{M}(F)$. Hence, by
\Cref{prop:edge-usage-in-cut-cone},
\[
q_\omega(F)\ge \frac{ab}{\binom{m}{2}}
=\frac{2ab}{m(m-1)}
\ge \frac{2ab}{m^2}
= \frac{2 |U \cap F| \cdot |U \cap (E(H)\setminus F)|}{|U|^2}
=2\rho(F)\rho(E(H)\setminus F).
\]
\end{proof}

\begin{lemma}\label{lem:treewidth-to-cut-profile}
There exists a universal constant $c_{\mathrm{sup}}>0$ with the following
property. Let $H$ be a hypergraph, let $M$ be the line graph of $H$, and
assume that $\tw(M)\ge 8$. Then there exist a subset $U\subseteq E(H)$ of
size $m:=|U|\ge \tw(M)/4$, and an edge profile $\alpha\in \mathbb R_{\ge 0}^{E(M)}$
such that, writing
\(
\rho(F):=\frac{|F\cap U|}{m}
\) for all $F \subseteq E(H)$, 
the induced symmetric edge set function
$q_{M,\alpha}\in \Sym(H)$
satisfies
\[
q_{M,\alpha}(F)\ge 2\rho(F)\rho(E(H)\setminus F)
\qquad \forall F\subseteq E(H),
\]
and
\[
\alpha(\Inc_M(e))\le c_{\mathrm{sup}}\,\frac{\log m}{ m}
\qquad \forall e\in E(H).
\]

In particular, $\rho$ is the uniform probability measure on $U$, and every
point of $U$ has $\rho$-mass exactly $1/m\le 1/2$.
\end{lemma}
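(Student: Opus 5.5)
The plan is to chain the results of this section in order: treewidth, then a well-linked set, then a uniform node-capacitated flow, then a path weighting, then a cut profile.

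\emph{Terminal set.} Apply \Cref{prop:treewidth-node-well-linked-set} to $G=M$. This yields a node-well-linked set $U\subseteq V(M)=E(H)$ with $m:=|U|\ge \tw(M)/4\ge 2$; the bound $m\ge 2$ is where the hypothesis $\tw(M)\ge 8$ is used. By \Cref{prop:well-linked-node-cut-linked}, $U$ is $1$-node-cut-linked.

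\emph{Flow.} \Cref{prop:node-cut-linked-uniform-node-flow} then provides a feasible node-capacitated multicommodity flow $f$ in $M$ routing at least $\vartheta_0\ge 1/(Cm\log m)$ units between every unordered pair in $\binom{U}{2}$. \Cref{def:normalised-path-weighting} requires every commodity to have value exactly $\vartheta$. If the commodities carry different amounts, scale each $f_{uv}$ down to value exactly $\vartheta:=\min_{uv}\|f_{uv}\|_1\ge 1/(Cm\log m)$. Scaling down only decreases loads, so feasibility is preserved.

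\emph{Path weighting.} Form the normalised path weighting $\omega:=\omega_f$ and set $\alpha:=\alpha^{\omega}$, so that $q_{M,\alpha}=q_\omega$.

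\emph{Two bounds.}
\begin{itemize}
\item By \Cref{lem:normalized-path-weighting-properties}(1), every pair $\{u,v\}\in\binom{U}{2}$ receives $\omega$-mass exactly $1/\binom{m}{2}$. \Cref{prop:uniform-terminal-crossing} then gives $q_{M,\alpha}(F)\ge 2\rho(F)\rho(E(H)\setminus F)$ for every $F\subseteq E(H)$.
\item By \Cref{lem:normalized-path-weighting-properties}(3), $\alpha(\Inc_M(e))\le C'\log m/m$ for every $e\in E(H)$. Set $c_{\mathrm{sup}}:=C'$, which depends only on the universal constant $C$.
\end{itemize}

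\emph{Closing remark.} $\rho$ is uniform on $U$, so each point of $U$ has mass $1/m\le 1/2$ because $m\ge 2$.

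\emph{Main obstacle.} The step needing the most care is the interface with \Cref{prop:node-cut-linked-uniform-node-flow}, in two respects. First, the well-linked set must consist of vertices of $M$, i.e., hyperedges of $H$; this holds because \Cref{prop:treewidth-node-well-linked-set} is applied to $M$ itself. Second, the flow must be normalised to exactly uniform commodity values $\vartheta$ without losing the lower bound on $\vartheta$, which the scaling-down step handles. Everything else is direct composition of the earlier lemmas.
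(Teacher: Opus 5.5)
Your proposal is correct and follows the paper's proof essentially step for step. The chain is the same: a node-well-linked set $U\subseteq V(M)=E(H)$ from Reed's bound, its $1$-node-cut-linkedness, the uniform node-capacitated flow, the normalised path weighting $\omega_f$ with $\alpha:=\alpha^{\omega_f}$, and finally the two bounds from the path-weighting lemma and the terminal-crossing lemma. Your extra step of scaling each commodity down to the common value $\vartheta=\min_{uv}\|f_{uv}\|_1$ is harmless but not needed: the paper's convention that a flow ``routes $\vartheta$ units between every pair'' already means exactly $\vartheta$ per commodity.
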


\begin{proof}
By \Cref{prop:treewidth-node-well-linked-set}, there is a node-well-linked set
$U\subseteq V(M)=E(H)$ with $m:=|U|\ge \tw(M)/4$. Since $\tw(M)\ge 8$, we have $m\ge 2$.
Let $\rho$ be the uniform probability measure on $E(H)$ supported on $U$.

By \Cref{prop:well-linked-node-cut-linked}, the set $U$ is $1$-node-cut-linked.
Hence by \Cref{prop:node-cut-linked-uniform-node-flow} there exists a feasible
node-capacitated multicommodity flow routing
\[
\vartheta\ge \frac{1}{C\,m\log m}
\]
between every unordered pair in $\binom{U}{2}$.
Construct the normalized path weighting $\omega=\omega_f$ from this flow as in
\Cref{def:normalised-path-weighting}, and set $\alpha$ to be $\alpha^\omega$.
Then, by \Cref{lem:normalized-path-weighting-properties},
\[
\alpha(\Inc_M(h))\le c_{\mathrm{sup}}\frac{\log m}{m}
\qquad \forall h\in E(H).
\]
The balanced cut lower bound follows from \Cref{prop:uniform-terminal-crossing}.
Since $m\ge 2$ and $\rho$ is uniform on $U$, every point of $U$ has
$\rho$-mass at most $1/2$.
\end{proof}

\begin{theorem}
\label{thm:flow.engine}
There exists a universal constant $C>0$ such that the following holds.
Let $H$ be a hypergraph and let $M:=P(H^*)$.
If $\tw(M)\ge 8$, then there exists an edge profile $\alpha\in\mathbb R_{\ge0}^{E(M)}$
such that
\[
q_{M,\alpha}\in\Sym(H),\qquad
\disp_H(q_{M,\alpha})\ge \frac49,
\qquad\text{and}\qquad
\max_{e\in E(H)}\alpha(\Inc_M(e))\le C\,\frac{\log \tw(M)}{\tw(M) }.
\]
\end{theorem}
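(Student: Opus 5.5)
The proof will combine \Cref{lem:treewidth-to-cut-profile} with \Cref{lem:general-balanced-profile}. Since $\tw(M)\ge 8$, \Cref{lem:treewidth-to-cut-profile} gives a set $U\subseteq E(H)$ with $m=|U|\ge \tw(M)/4\ge 2$ and an edge profile $\alpha$. Its induced function $q_{M,\alpha}$ satisfies
\[
q_{M,\alpha}(F)\ge 2\rho(F)\rho(E(H)\setminus F)
\qquad\text{and}\qquad
\alpha(\Inc_M(e))\le c_{\mathrm{sup}}\frac{\log m}{m},
\]
where $\rho$ is the uniform probability measure on $U$. Symmetry $q_{M,\alpha}\in\Sym(H)$ was already noted when $q_{M,\alpha}$ was defined, since $\cut{M}(F)=\cut{M}(E(H)\setminus F)$.

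For the width bound, set $\phi(x):=2x(1-x)$ on $[0,1/2]$. This is nondecreasing there. Writing $x=\min\{\rho(F),\rho(E\setminus F)\}$, we have $\rho(F)\rho(E\setminus F)=x(1-x)$ because the two masses sum to $1$. Hence $q_{M,\alpha}(F)\ge\phi(x)$ for all $F$. Every point of $E$ has $\rho$-mass at most $1/m\le 1/2$, and edges outside $U$ have mass $0$. So \Cref{lem:general-balanced-profile} applies and yields $\disp_H(q_{M,\alpha})\ge\phi(1/3)=2\cdot\frac13\cdot\frac23=\frac49$.

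The only remaining step is to turn the load bound from $m$ into $\tw(M)$. This is a monotonicity issue, because $m$ is only bounded from below.
\begin{itemize}
\item The function $g(x)=\log x/x$ is decreasing for $x\ge e$, so for $m\ge \tw(M)/4\ge e$ we get $g(m)\le g(\tw(M)/4)=4\log(\tw(M)/4)/\tw(M)\le 4\log\tw(M)/\tw(M)$.
\item The small cases where $m<e$, i.e.\ $m=2$, force $\tw(M)\le 8$, so $\tw(M)=8$. There the bound $c_{\mathrm{sup}}\log 2/2$ is at most a constant times $\log 8/8$.
\item One could also use $m\le \tw(M)+1$ by \Cref{prop:treewidth-node-well-linked-set}, which bounds $m$ from above, but that is not needed.
\end{itemize}
Taking $C:=4c_{\mathrm{sup}}$, enlarged to cover the $m=2$ case, completes the argument.

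I expect no real obstacle. The only points needing care are this monotonicity step and checking that the hypothesis $\rho(\{e\})\le 1/2$ of \Cref{lem:general-balanced-profile} holds.
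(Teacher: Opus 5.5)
Your proposal is correct and follows the paper's proof: apply \Cref{lem:treewidth-to-cut-profile}, then \Cref{lem:general-balanced-profile} with $\phi(x)=2x(1-x)$, and your conversion from $\log m/m$ to $\log\tw(M)/\tw(M)$ is more explicit than the paper's one-line step. One small slip: your first bullet needs $\tw(M)/4\ge e$, i.e.\ $\tw(M)\ge 11$, so for $8\le\tw(M)\le 10$ with $m\ge 3$ neither bullet applies (indeed $\log 3/3>\log 2/2$); this is harmless, since there $\log m/m$ is at most its maximum $(\log e)/e$, which lies below $4\log\tw(M)/\tw(M)$.
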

\begin{proof}
Apply \Cref{lem:treewidth-to-cut-profile} to obtain $U$, $m$, $\rho$, and $\alpha$ with
\[
q_{M,\alpha}(F)\ge 2\rho(F)\rho(E(H)\setminus F)
\qquad \forall F\subseteq E(H),
\]
and
\[
\alpha(\Inc_M(e))\le c\frac{\log m}{m} \le c' \frac{\log \tw(M)}{\tw(M)}
\qquad \forall e\in E(H).
\]
Now note that every point mass of $\rho$ from the lemma is at most $1/2$. It is now sufficient to apply \Cref{lem:general-balanced-profile} with
$\phi(x)=2x(1-x)$
to directly obtain $\disp_H(q_{M,\alpha})\ge \phi(1/3)=\frac49$.
\end{proof}

\section{Connecting $\ghw$ and $\subw$}
\label{sec:applications}

We now apply the machinery of the previous sections by showing how submodular width and $\ghw$ are related under various structural conditions.
We first present two natural structural conditions under which the submodular gauge for the $q$ from \Cref{thm:flow.engine} is in $O(\log \ghw(H)/\ghw(H))$. This then directly yields the inverse lower bound for $\subw(H)$ through the variational characterisation of $\subwlift$ (\Cref{thm:variational}) combined with the transfer from $\subwlift$ to $\subw$ (\Cref{cor:lifted-bw}).

As a starting point we recall a result by Lanzinger~\cite{LanzingerPODS22} that states that PTIME and FPT CQ evaluation collapse for degree 2 (assuming $W[1]\neq $ FPT). In a structural sense this implies that for every class of degree 2 hypergraphs $\mathcal H$, $\ghw$ is bounded if and only if $\subw$ is bounded, although the proof in \cite{LanzingerPODS22} does not produce any concrete bounds on how the measures are related. 

Here we show how to recover the complexity collapse through our framework, which also yields the first explicit bound relating the two measures in this class. In fact it turns out that the natural proof for the degree 2 case in our framework readily generalises to two (incomparable) significantly more general conditions. The strategy for the results in this section is simple, we demonstrate a general principle for converting edge profiles on the line graph $M=P(H^*)$ into functions $b\in \pSub{\one}(H)$, such that $\gauge^\Sub_H(q_{M,\alpha})$ is very low. 
More precisely, we will apply the following principle to make use of \Cref{thm:flow.engine}. To simplfy the presentation, the following  considers only hypergraphs $H$ where $P(H^*)$ has at least one edge (i.e., there are two hyperedges that have non-empty intersection).

\begin{definition}
For a hypergraph $H$ and $M:=P(H^*)$, define its \emph{(submodular) gauge-routing ratio}
\[
\cgamp(H):=
\sup_{\alpha\neq 0}
\frac{\gauge_H^\Sub(q_{M,\alpha})}
{\max_{e\in E(H)}\alpha(\Inc_M(e))}
\]
where the supremum ranges over all nonzero edge profiles $\alpha : E(M) \to \rpos$.
\end{definition}

The submodular gauge-routing ratio simply measures how efficiently incident load in the line graph can be turned
into submodular gauge. Combined with \Cref{thm:flow.engine}, it immediately
yields lower bounds on $\subw(H)$. 

\begin{lemma}
\label{lem:load.to.width}
There exists a universal constant $C>0$ such that, for every hypergraph $H$
with $\ghw(H)\ge 9$,
\[
\subw(H)\ge
\frac{C}{\cgamp(H)}\cdot
\frac{\ghw(H)}{\log \ghw(H)}.
\]
\end{lemma}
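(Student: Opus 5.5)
The plan is to chain the results already established: the treewidth-to-flow engine (\Cref{thm:flow.engine}), the definition of $\cgamp(H)$, the variational characterisation (\Cref{thm:variational}), and the transfer $\subwlift(H)\le\subw(H)$ from \Cref{cor:lifted-bw}. First we pass from $\ghw$ to the treewidth of the line graph $M=P(H^*)$. Since $H^*$ and $M$ have the same primal graph, $\tw(H^*)=\tw(M)$, so \Cref{prop:support-tw-controls-ghw} gives $\tw(M)\ge \ghw(H)-1\ge 8$ whenever $\ghw(H)\ge 9$. This is exactly the hypothesis of \Cref{thm:flow.engine}.

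Next, apply \Cref{thm:flow.engine} to obtain $\alpha$ with $\disp_H(q_{M,\alpha})\ge 4/9$ and $L:=\max_e\alpha(\Inc_M(e))\le C_0\log\tw(M)/\tw(M)$. Because $\disp_H(q_{M,\alpha})>0$, we have $\alpha\neq 0$. So the definition of $\cgamp(H)$ applies and gives $\gauge_H^\Sub(q_{M,\alpha})\le \cgamp(H)\,L$. We may assume $\cgamp(H)<\infty$, since otherwise the claim is vacuous under the convention that the right-hand side is $0$. Note also that $\cgamp(H)>0$, because a nonzero $q$ has positive gauge: $K^\Sub_\one(H)$ is bounded by \Cref{prop:body-geometry}. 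Plugging $q_{M,\alpha}$ into \Cref{thm:variational} then yields
\[
\subw(H)\ge\subwlift(H)\ge \frac{\disp_H(q_{M,\alpha})}{\gauge_H^\Sub(q_{M,\alpha})}\ge \frac{4}{9\,C_0\,\cgamp(H)}\cdot\frac{\tw(M)}{\log\tw(M)}.
\]

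Finally, we convert back to $\ghw$. The function $x\mapsto x/\log x$ is nondecreasing for $x\ge e$, and $\tw(M)\ge\ghw(H)-1\ge 8$. Hence
\[
\frac{\tw(M)}{\log\tw(M)}\ge \frac{\ghw(H)-1}{\log(\ghw(H)-1)}\ge \frac{8}{9}\cdot\frac{\ghw(H)}{\log\ghw(H)},
\]
using $\ghw(H)-1\ge \tfrac89\ghw(H)$ for $\ghw(H)\ge 9$. Absorbing all constants into a single universal $C$ completes the argument.

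The only delicate points are bookkeeping. We must check that \Cref{thm:flow.engine} gives a bound in terms of $\tw(M)$ rather than the size $m$ of the well-linked set, which it does. We must also handle the degenerate cases ($\alpha=0$, and $\cgamp(H)\in\{0,\infty\}$) cleanly. I expect the main obstacle to be justifying the identification $\tw(H^*)=\tw(P(H^*))$ together with the monotonicity step, since everything else is a direct substitution.
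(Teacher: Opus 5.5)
Your proposal is correct and follows the same chain as the paper's proof: \Cref{prop:support-tw-controls-ghw} gives $\tw(P(H^*))\ge 8$, then \Cref{thm:flow.engine}, then the definition of $\cgamp(H)$, then \Cref{thm:variational}, and finally \Cref{cor:lifted-bw}. Your extra bookkeeping spells out steps the paper leaves implicit: the identification $\tw(H^*)=\tw(P(H^*))$, that $\alpha\neq 0$ and $0<\cgamp(H)$, and the monotonicity of $x/\log x$ together with the factor $8/9$.
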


\begin{proof}
By \Cref{prop:support-tw-controls-ghw}, $\ghw(H)\le \tw(P(H^*))+1$, and therefore
$\tw(P(H^*))\ge \ghw(H)-1\ge 8$.
Hence \Cref{thm:flow.engine} provides an edge profile
$\alpha\in \rpos^{E(P(H^*))}$ such that
\[
q:=q_{P(H^*),\alpha}\in \Sym(H),
\qquad
\disp_H(q)\ge \frac49,
\]
and
\[
\tau:=\max_{e\in E(H)}\alpha(\Inc_{P(H^*)}(e))
\le
C_0\,\frac{\log \tw(P(H^*))}{\tw(P(H^*))}
\]
for some universal constant $C_0>0$.

By definition of $\cgamp(H)$,
\[
\gauge_H^\Sub(q)\le \tau\,\cgamp(H)
\le
C_0\,\cgamp(H)\,\frac{\log \tw(P(H^*))}{\tw(P(H^*))}.
\]
Therefore, by \Cref{thm:variational},
\[
\subwlift(H)
\ge
\frac{\disp_H(q)}{\gauge_H^\Sub(q)}
\ge
\frac{4}{9\,C_0\,\cgamp(H)}
\cdot
\frac{\tw(P(H^*))}{\log \tw(P(H^*))}.
\]
Since $\tw(P(H^*))\ge \ghw(H)-1$, this implies
\[
\subwlift(H)\ge
\frac{C_1}{\cgamp(H)}
\cdot
\frac{\ghw(H)}{\log \ghw(H)}
\]
for some universal constant $C_1>0$.
Finally, \Cref{cor:lifted-bw} gives $\subw(H)\ge \subwlift(H)$.
\end{proof}

On a technical level, the way we will bound $\cgamp(H)$ is by constructing a $b \in \pSub\one(H)$ from every $\alpha$. This may raise the question why this technically differs from constructing $b$ right away. There are multiple subtle but important reasons. The developments from \Cref{sec:support-inputs} construct technically very useful $q\in\Sym(H)$ for all hypergraphs. We can use this to bound $\subw$ in terms of $\ghw$ in those cases where this $q$ has small gauge (in particular, small $\cgamp(H)$). Using the tools from \Cref{sec:support-inputs} directly to construct a $b \in \pSub\one(H)$ would require applying them directly under some constraints to the structure of $H$, leading to much higher conceptual complexity. That is, the variational characterisation allows us to separate principles construction of $q$, from its application to bounding $\subw$, leading to much simpler and arguably more natural arguments overall. Furthermore, the cut perspective made it much easier to construct our ``canonical'' $q \in \Sym(H)$, that guarantees a universal lower bound on $\disp_H$. However, this implies high submodular width through \Cref{thm:lift.to.tau} only when $q$ is a boundary lift of $b \in \pSub\one$, a much stricter property than what we use to bound the gauges here.

We now discuss two natural but general structural properties that imply bounded $\cgamp$. The proofs are simple but instructive. We defer them to Appendix~\ref{app:applications}.
First, we consider the role of what we call the \emph{edge excess} -- the sum of degrees greater than 2 in an edge. That is, degree 2 vertices do not count towards the excess, and the total degree of the rest of the vertices must be low. In a database setting, where degree represents how many joins an attribute is involved in, low edge excess already covers most natural cases. For hypergraphs with maximum degree 2, the edge excess is 0.

\begin{definition}
For a hypergraph $H$, define its \emph{edge excess} as
\[
\eex(H):=
\max_{e\in E(H)}
\sum_{v\in e}\max\{\deg_H(v)-2,0\}.
\]
\end{definition}

\begin{proposition}
\label{bounded-excess-ratio}
For every hypergraph $H$, it holds that $\cgamp(H) \le \eex(H)+1$.
\end{proposition}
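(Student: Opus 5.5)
The plan is to bound the gauge of $q_{M,\alpha}$ by exhibiting one explicit modular witness. Fix a nonzero edge profile $\alpha$ on $M=P(H^*)$ and write $\tau:=\max_{e\in E(H)}\alpha(\Inc_M(e))$. It suffices to construct a modular $b\in\Mod(H)$ with two properties:
\[
q_{M,\alpha}\le \Lambda_H b
\qquad\text{and}\qquad
b(e)\le(\eex(H)+1)\,\tau\quad\forall e\in E(H).
\]
Given such $b$, the rescaled function $b':=b/((\eex(H)+1)\tau)$ is normalised, modular and hence monotone and submodular. By monotonicity the bound on $b(e)$ extends to every subset $X\subseteq e$, so $b'\in\pMod{\one}(H)\subseteq\pSub{\one}(H)$. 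Then $q_{M,\alpha}\le(\eex(H)+1)\tau\,\Lambda_H b'$ gives $q_{M,\alpha}\in(\eex(H)+1)\tau\cdot K^\Sub_\one(H)$. Hence $\gauge_H^\Sub(q_{M,\alpha})\le(\eex(H)+1)\tau$, and taking the supremum over $\alpha$ yields $\cgamp(H)\le\eex(H)+1$.

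\emph{Construction of $b$.} Each edge $xy\in E(M)$ corresponds to two hyperedges with $x\cap y\neq\emptyset$. Choose one vertex $v_{xy}\in x\cap y$ for each such edge. Put vertex weights $w_v:=\sum_{xy:\,v_{xy}=v}\alpha_{xy}$ and let $b(X):=\sum_{v\in X}w_v$.

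\emph{Domination of $q_{M,\alpha}$.} If $xy\in\cut{M}(F)$ with $x\in F$ and $y\notin F$, then $v_{xy}$ lies in both $\bigcup F$ and $\bigcup(E\setminus F)$, so $v_{xy}\in\partial_H(F)$. Each crossing edge is assigned to exactly one vertex, and that vertex lies in the boundary. Therefore
\[
q_{M,\alpha}(F)=\sum_{xy\in\cut{M}(F)}\alpha_{xy}\le\sum_{v\in\partial_H(F)}w_v=\Lambda_H b(F).
\]

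\emph{Edge budget (the main step).} Fix $e\in E(H)$. Then $b(e)$ is the total $\alpha$-weight of the line-graph edges assigned to vertices of $e$, and each line-graph edge is counted once. Split these edges into two classes.
\begin{itemize}
\item Edges incident to $e$ in $M$. These are distinct edges of $\Inc_M(e)$, so their total weight is at most $\tau$.
\item Edges not incident to $e$. Such an edge $xy$ with $v_{xy}=v\in e$ must satisfy $x,y\in\Inc_H(v)\setminus\{e\}$. This is impossible when $\deg_H(v)\le 2$, so only vertices of degree $d:=\deg_H(v)\ge 3$ contribute.
\end{itemize}
For such a vertex $v$, the pairs among the $d-1$ hyperedges of $\Inc_H(v)\setminus\{e\}$ have total weight at most $\tfrac12\sum_{f\in\Inc_H(v)\setminus\{e\}}\alpha(\Inc_M(f))\le\frac{d-1}{2}\tau$, since each pair is counted from both of its endpoints. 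For $d\ge3$ we have $\frac{d-1}{2}\le d-2=\max\{d-2,0\}$.

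Summing the two classes over $v\in e$ gives $b(e)\le\tau+\tau\sum_{v\in e}\max\{\deg_H(v)-2,0\}\le(\eex(H)+1)\tau$, as required. This accounting is the main obstacle. A naive bound charges each high-degree vertex $(d-1)\tau$, which is too weak. The fix is to count pairs not incident to $e$ only once globally, by halving over the two endpoints, while charging all pairs incident to $e$ jointly against the single budget $\tau$.
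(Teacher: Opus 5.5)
Your proof is correct. The edge-budget counting is the paper's: split off $\Inc_M(e)$, then for each $v\in e$ of degree $d\ge 3$ bound the pairs inside $\Inc_H(v)\setminus\{e\}$ by double counting and use $\tfrac{d-1}{2}\le d-2$.

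The genuine difference is the witness. The paper applies \Cref{lem:witness-set-realisation} with $A_{xy}=x\cap y$, i.e.\ the coverage function $b(X)=\sum_{xy}\alpha_{xy}\mathbf 1[(x\cap y)\cap X\neq\emptyset]$. It then bounds $b(e)$ by summing $\alpha_{xy}$ over all line-graph edges whose whole intersection meets $e$. You instead charge each line-graph edge to one arbitrarily chosen $v_{xy}\in x\cap y$. This gives a modular $b$ whose hyperedge load $\sum_{xy:\,v_{xy}\in e}\alpha_{xy}$ is a sub-sum of the paper's $L_e$, so it is never larger.

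The paper's choice is canonical (no choices needed) and treats the private-intersection and excess cases through one shared lemma. However, it only yields a submodular (coverage) witness. The paper explicitly remarks that it is unclear how to relax this direct argument to a modular function, and it recovers a modular bound only indirectly, via $\kappa(H)$ and \Cref{prop:fractional-witness-realisation}.

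Your argument shows the relaxation is immediate. It places $q_{M,\alpha}$ in $(\eex(H)+1)\tau\cdot K^\Mod_\one(H)$, so the same lower bound transfers to $\adwlift$ and hence to $\adw$. In effect, it is also an explicit proof that the integral allocation $p_{xy,v}=\mathbf 1[v=v_{xy}]$ satisfies $\kappa(p)\le\eex(H)+1$, a claim the paper states but does not prove.

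One small point to state explicitly: your rescaling by $(\eex(H)+1)\tau$ needs $\tau>0$. This holds because $\alpha\neq 0$ and every line-graph edge lies in some $\Inc_M(e)$.
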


When we are only interested in the qualitative goal of observing when unbounded $\ghw$ implies unbounded $\subw$ on a class of hypergraphs, constant edge excess is not essential. Indeed, if a class $\mathcal H$ satisfies
$\eex(H)\in O((\log \ghw(H))^c)$
for some fixed $c$ on all $H\in \mathcal H$,
then \Cref{bounded-excess-ratio} yields
\[
\subw(H)\in \Omega\!\left(\frac{\ghw(H)}{(\log \ghw(H))^{c+1}}\right).
\]
In particular, every such class with unbounded $\ghw$ also has unbounded $\subw$. We illustrate an example of such a class in Appendix~\ref{app:excess}.

The second constraint that generalises maximum degree 2 is of a very different flavour. Rather than controlling the total amount of high-degree overlap, it merely requires that each pairwise intersection contain at least one degree-$2$ witness. In particular, it places no direct restriction on the rest of the hypergraph, and yet it is already strong enough to produce a strong asymptotic lower bound on $\subw$ in terms of $\ghw$.

\begin{definition}
We say that $H$ has the \emph{private intersections  property} if, for
every distinct $x,y\in E(H)$ with $x\cap y\neq \emptyset$, there exists
$v_{xy}\in x\cap y$ with $\deg_H(v_{xy})=2$.
\end{definition}

\begin{proposition}
\label{prop:exclusive-witness-ratio}
Assume that $H$ has the private intersection property. Then $\cgamp(H) \le 1$.
\end{proposition}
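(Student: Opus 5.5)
Fix a nonzero edge profile $\alpha$ on $M=P(H^*)$ and write $\tau:=\max_{e\in E(H)}\alpha(\Inc_M(e))>0$. It suffices to exhibit $b\in\pSub{\one}(H)$ with $q_{M,\alpha}\le \tau\,\Lambda_H b$ pointwise. Then $q_{M,\alpha}/\tau\in K^\Sub_\one(H)$, hence $\gauge_H^\Sub(q_{M,\alpha})\le\tau$, and taking the supremum over $\alpha$ gives $\cgamp(H)\le 1$. I would take $b$ modular, so $b\in\pMod{\one}(H)\subseteq\pSub{\one}(H)$, with weights placed on the degree-$2$ witnesses.

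The construction goes as follows. For each edge $xy\in E(M)$, i.e.\ each pair of distinct hyperedges with $x\cap y\neq\emptyset$, the private intersection property supplies a vertex $v_{xy}\in x\cap y$ with $\deg_H(v_{xy})=2$. Since $v_{xy}$ has degree $2$, it lies in exactly the two hyperedges $x,y$. Hence distinct edges of $M$ get distinct witnesses, because the pair $\{x,y\}$ is recovered from $v_{xy}$. Define weights $w_{v_{xy}}:=\alpha_{xy}/\tau$ and $w_v:=0$ for all other vertices, and set $b(X):=\sum_{v\in X}w_v$.

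The first check is edge domination. For a hyperedge $e$, the witnesses contained in $e$ are exactly $v_{ey}$ for the $M$-neighbours $y$ of $e$. Therefore $b(e)=\sum_{y}\alpha_{ey}/\tau=\alpha(\Inc_M(e))/\tau\le 1$. Monotonicity then gives $b(X)\le 1$ for all $X\subseteq e$.

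The second check is the cut domination. Let $F\subseteq E(H)$ and $xy\in\cut{M}(F)$ with $x\in F$ and $y\notin F$. Then $v_{xy}\in\bigcup F\cap\bigcup(E\setminus F)=\partial_H(F)$. Since witnesses are injective in the $M$-edge, this gives
\[
\tau\,\Lambda_H b(F)=\tau\, b(\partial_H(F))\ \ge\ \sum_{xy\in\cut{M}(F)}\alpha_{xy}=q_{M,\alpha}(F).
\]

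The only point needing care is the injectivity of the witness assignment. Without it, two cut edges of $M$ could share a witness and be double counted in $q$ but not in $b$. Degree exactly $2$ is what guarantees injectivity. Everything else is direct verification.
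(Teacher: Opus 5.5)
Your proof is correct and is essentially the paper's argument: the paper applies \Cref{lem:witness-set-realisation} with singleton witness sets $A_{xy}=\{v_{xy}\}$, whose coverage function $b(X)=\sum_{xy\in E(M)}\alpha_{xy}\mathbf{1}[v_{xy}\in X]$ is your modular $b$ up to the scaling by $\tau$. In both arguments the key step is that $v_{xy}\in e$ forces $e\in\{x,y\}$; the only cosmetic difference is that the paper sums $\alpha_{xy}$ over line-graph edges instead of assigning one weight per vertex, so it never needs injectivity of $xy\mapsto v_{xy}$, which you nonetheless establish correctly.
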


The private intersection property is in its own sense much more permissive than a global degree bound. A hypergraph may satisfy it while still having vertices of arbitrarily large degree, large pairwise intersections, or highly complicated higher-order overlap (e.g., high VC-dimension). The point is that our realisation argument needs only one degree-$2$ witness for each line-graph edge $xy$, and once such a witness exists, all remaining intersection structure can be ignored. We consider this rather surprising, and given the later complexity consequences it may be worthwhile to study  direct algorithmic consequences of this property further in a database context.

That is, by combining these bounds with \Cref{lem:load.to.width} we can observe
\[
\subw(H)
\in \Omega\left(\frac{1}{\eex(H)+1}\cdot \frac{\ghw(H)}{\log \ghw(H)}\right)
\]
in general. And $\subw(H) \in \Omega(\ghw(H)/\log \ghw(H))$ for $H$ with private intersections.

\subsection{A General Fractional Witness Principle}
\label{sec:kappa}
Abstracting one level, the two previous applications are instances of the same mechanism. For each line-graph edge $xy\in E(P(H^*))$, one must assign one unit of witness mass to vertices of the intersection $x\cap y$. The core question then is how much total charge this induces on each hyperedge of $H$. In the private-intersection case this assignment is integral and supported on a single degree $2$ vertex, whereas in the bounded-excess case it is distributed over larger intersections. Allowing fractional assignments isolates the common optimisation problem and turns the resulting cost into a linear, hence modular, functional of the line graph.

\begin{definition}
Let $H$ be a hypergraph, and let $M:=P(H^*)$. An \emph{intersection capacity allocation} on $H$ is a family $p$ of all $p_{xy,v} \in \rpos$ where $xy \in E(M)$ and $v \in x \cap y$ such that
\[
\sum_{v\in x\cap y} p_{xy,v}=1
\qquad\forall xy\in E(M).
\]
For $e\in E(H)$ and $xy\in E(M)$, define
\[
w_e^p(xy):=\sum_{v\in e\cap x\cap y} p_{xy,v}.
\]
\end{definition}

For an intersection capacity allocation we now want to measure the maximum fractional load it can induce on any individual hyperedge.
We write 
\[
\kappa_e(p):=
\max\Bigl\{
\sum_{xy\in E(M)} w_e^p(xy)\beta_{xy}
\mid \ \beta\in \rpos^{E(M)},\
\beta(\Inc_M(h))\le 1\ \forall h\in E(H)
\Bigr\},
\]
and $\kappa(p):=\max_{e\in E(H)}\kappa_e(p)$.
Finally, define $\kappa(H):=\min_p \kappa(p)$,
where the minimum ranges over all intersection capacity allocations $p$ on $H$.

\begin{lemma}
\label{prop:fractional-witness-realisation}
Let $H$ be a hypergraph, let $M:=P(H^*)$, let $\alpha\in \rpos^{E(M)}$, and let
$p$ be an intersection capacity allocation on $H$. Then
$\gauge_H^\Sub(q_{M,\alpha})
\le
\kappa(p)\max_{e\in E(H)}\alpha(\Inc_M(e))$.
In particular,
$\cgamp(H)\le \kappa(H)$.
\end{lemma}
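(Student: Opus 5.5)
The plan is to bound the gauge by exhibiting an explicit \emph{modular} witness. Fix $\alpha\in\rpos^{E(M)}$ and an intersection capacity allocation $p$, and put $\tau:=\max_{e\in E(H)}\alpha(\Inc_M(e))$. Define vertex weights
\[
w_v:=\sum_{\substack{xy\in E(M)\\ v\in x\cap y}}\alpha_{xy}\,p_{xy,v}\qquad(v\in V),
\]
and let $b(X):=\sum_{v\in X}w_v$. The unscaled $b$ is normalised and modular, hence lies in $\Sub(H)$. The aim is to show two things:
\begin{itemize}
\item $q_{M,\alpha}\le \Lambda_H b$ pointwise;
\item $b(e)\le \tau\,\kappa(p)$ for every $e\in E(H)$.
\end{itemize}
Given both, the rescaled function $b':=b/(\tau\kappa(p))$ lies in $\pMod{\one}(H)\subseteq\pSub{\one}(H)$. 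Then $q_{M,\alpha}\le \tau\kappa(p)\,\Lambda_H b'$, so $q_{M,\alpha}\in \tau\kappa(p)\cdot K^\Sub_\one(H)$ by downward closedness, which gives the claimed gauge bound.

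\textbf{Domination of the cut profile.} Fix $F\subseteq E(H)$ and a crossing edge $xy\in\cut{M}(F)$ with $x\in F$ and $y\notin F$. Every $v\in x\cap y$ meets hyperedges on both sides of the cut, so $v\in\partial_H(F)$. Expanding $b(\partial_H(F))=\sum_{v\in\partial_H(F)}w_v$ and using that all terms are nonnegative, we keep only the pairs $(xy,v)$ with $xy$ crossing. This gives
\[
b(\partial_H(F))\ge\sum_{xy\in\cut{M}(F)}\alpha_{xy}\sum_{v\in x\cap y}p_{xy,v}=\sum_{xy\in\cut{M}(F)}\alpha_{xy}=q_{M,\alpha}(F).
\]
The middle equality uses the allocation constraint $\sum_{v\in x\cap y}p_{xy,v}=1$.

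\textbf{Edge domination.} Exchanging the order of summation gives
\[
b(e)=\sum_{v\in e}w_v=\sum_{xy\in E(M)}\alpha_{xy}\,w^p_e(xy).
\]
If $\tau>0$, then $\beta:=\alpha/\tau$ satisfies $\beta(\Inc_M(h))\le1$ for all $h$, so it is feasible in the program defining $\kappa_e(p)$. Hence $b(e)=\tau\sum_{xy}w^p_e(xy)\beta_{xy}\le\tau\kappa_e(p)\le\tau\kappa(p)$.

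\textbf{Degenerate cases.} These are the only real obstacle. If $\alpha\neq0$ then $\tau>0$. If $\kappa(p)=0$, then $b(e)=0$ for every $e$. Since every vertex lies in some edge, all $w_v=0$, so $q_{M,\alpha}\le\Lambda_H b=0$ and the gauge is $0$. The case $\alpha=0$ is trivial.

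\textbf{The consequence $\cgamp(H)\le\kappa(H)$.} For every nonzero $\alpha$ and every $p$, the bound gives $\gauge_H^\Sub(q_{M,\alpha})/\tau\le\kappa(p)$. Taking the supremum over $\alpha$ and the infimum over $p$ yields the claim. Allocations exist, because each line-graph edge $xy$ has $x\cap y\neq\emptyset$. The minimum in $\kappa(H)$ is attained, or at any rate the infimum suffices: each $\kappa_e(p)$ is a finite LP value, since the constraints force $\beta_{xy}\le1$, and the set of allocations is a compact polytope.
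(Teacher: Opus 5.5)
Your proposal is correct and follows essentially the same route as the paper. It uses the same modular witness with vertex weights $\sum_{xy:\,v\in x\cap y}\alpha_{xy}p_{xy,v}$, the same domination argument via $x\cap y\subseteq\partial_H(F)$ for crossing line-graph edges, and the same rescaling $\beta=\alpha/\tau$ to invoke the LP defining $\kappa_e(p)$. Your treatment of the degenerate case $\kappa(p)=0$ and your remark that the minimum in $\kappa(H)$ is attained (or that the infimum suffices) are slightly more explicit than the paper's.
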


\begin{proof}
Set
\(
\tau:=\max_{h\in E(H)}\alpha(\Inc_M(h)),
\)
and define
\[
\mu(v):=\sum_{xy\in E(M):\,v\in x\cap y}\alpha_{xy}p_{xy,v},
\qquad
b(X):=\sum_{v\in X}\mu(v).
\]
Then $b$ is modular, hence also submodular.

We first show that $q_{M,\alpha}\le \lambda_b$.
Fix $F\subseteq E(H)$. If $xy\in \cut{M}(F)$, then $x$ and $y$ lie on opposite
sides of the cut, so every vertex of $x\cap y$ belongs to $\partial_H(F)$.
Since $\sum_{v\in x\cap y}p_{xy,v}=1$, the full mass $\alpha_{xy}$ contributes
to $b(\partial_H(F))$. Summing over all $xy\in\cut{M}(F)$ gives
\(
q_{M,\alpha}(F)\le b(\partial_H(F))=\lambda_b(F).
\)

It remains to bound $b$ on hyperedges.
Fix $e\in E(H)$. Since $e$ is the hyperedge whose load we are measuring, we use
$h$ for the generic hyperedge in the constraints defining $\kappa_e(p)$.
By rearranging the sums,
\[
b(e)=\sum_{xy\in E(M)}\alpha_{xy}w_e^p(xy).
\]

If $\tau=0$, then $\alpha=0$, hence $q_{M,\alpha}=0$, and there is nothing to
prove. So assume $\tau>0$, and set $\beta:=\alpha/\tau$. Then
$\beta(\Inc_M(h))\le 1$ for every $h\in E(H)$.
Therefore, by the definition of $\kappa_e(p)$,
\[
b(e)
=
\tau\sum_{xy\in E(M)}w_e^p(xy)\beta_{xy}
\le
\tau\,\kappa_e(p)
\le
\tau\,\kappa(p).
\]
Since $e$ was arbitrary, we have $b(h)\le \tau\kappa(p)$ for every
$h\in E(H)$.

If $\tau\kappa(p)=0$, then $b=0$, and hence again $q_{M,\alpha}=0$.
Otherwise define
\[
\widehat b:=\frac{1}{\tau\kappa(p)}\,b.
\]
Then $\widehat b\in \pMod{\one}(H)\subseteq \pSub{\one}(H)$, which implies
\(
q_{M,\alpha}\le \lambda_b=\tau\kappa(p)\,\lambda_{\widehat b},
\)
and therefore also
\(
\gauge_H^\Sub(q_{M,\alpha})\le \tau\kappa(p).
\)
Since $p$ was arbitrary, taking the infimum over all intersection capacity
allocations yields $\cgamp(H)\le \kappa(H)$.
\end{proof}

\begin{corollary}
\label{cor:kappa-subw}
There exists a universal constant $C>0$ such that, for every hypergraph $H$
with $\ghw(H)\ge 9$,
\[
\subw(H)\ge
\frac{C}{\kappa(H)}\cdot
\frac{\ghw(H)}{\log \ghw(H)}.
\]
\end{corollary}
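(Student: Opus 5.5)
This follows by chaining \Cref{prop:fractional-witness-realisation} with \Cref{lem:load.to.width}. The only work is checking that the constants survive and handling degenerate values of $\kappa(H)$ and $\cgamp(H)$.

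First, assume $\ghw(H)\ge 9$. Then $P(H^*)$ has at least one edge, since otherwise the hyperedges are pairwise disjoint and $\ghw(H)\le 1$. So $\cgamp(H)$ and $\kappa(H)$ are well defined. We also need $\kappa(H)$ to be attained and finite. An intersection capacity allocation always exists: put all mass on a single vertex of each $x\cap y$. Each $\kappa_e(p)$ is finite, because the feasible region for $\beta$ is a bounded polytope: $\beta_{xy}\le \beta(\Inc_M(x))\le 1$. The set of allocations is a compact product of simplices. Moreover, $\kappa(p)$ is a maximum of finitely many LP values, each of which is a maximum of functions linear in $p$, so $\kappa(p)$ is lower semicontinuous (indeed convex) in $p$. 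Hence the minimum defining $\kappa(H)$ is attained and finite. Strictly speaking, this existence question is not essential, since the inequality below passes to infima.

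Second, I want $\kappa(H)>0$, so that the bound is not vacuous or ill-defined. Pick an edge $xy\in E(M)$ and set $\beta=\one_{xy}$, which is feasible. Then $\kappa_x(p)\ge w^p_x(xy)=\sum_{v\in x\cap y}p_{xy,v}=1$, since $x\cap y\subseteq x$. Thus $\kappa(H)\ge 1$. By the same reasoning, $\cgamp(H)>0$: take $\alpha=\one_{xy}$, and note that $q_{M,\alpha}$ is nonzero on the cut $\{x\}$, so its gauge is positive.

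Third, \Cref{prop:fractional-witness-realisation} gives $\cgamp(H)\le\kappa(H)$, hence $1/\cgamp(H)\ge 1/\kappa(H)$. Substituting this into \Cref{lem:load.to.width}, which holds for every $H$ with $\ghw(H)\ge 9$ and a universal constant $C$, gives
\[
\subw(H)\ge \frac{C}{\cgamp(H)}\cdot\frac{\ghw(H)}{\log\ghw(H)}\ge \frac{C}{\kappa(H)}\cdot\frac{\ghw(H)}{\log\ghw(H)},
\]
with the same constant $C$.

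There is no real obstacle here. The only point requiring care is the finiteness and positivity of $\kappa(H)$ and $\cgamp(H)$, which ensures the ratio manipulation is legitimate.
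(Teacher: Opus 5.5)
Your proposal is correct and follows the paper's own proof, which simply chains \Cref{prop:fractional-witness-realisation} ($\cgamp(H)\le\kappa(H)$) with \Cref{lem:load.to.width}. Your extra checks are sound: $M$ has an edge, and $\kappa(H)\ge 1$ and $0<\cgamp(H)<\infty$ hold. They make explicit the well-definedness that the paper leaves implicit.
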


\begin{proof}
Immediate from \Cref{lem:load.to.width} and
\Cref{prop:fractional-witness-realisation}.
\end{proof}

Note that previous applications can be recovered by choosing specific intersection capacity allocations analogous to the arguments in~\Cref{app:applications}. Under the private intersection property one has
$\kappa(H)\le 1$, while bounded edge excess gives
$\kappa(H)\le \eex(H)+1$. 

There is a particularly interesting observation to make here that should be explored further. The direct argument for the $\eex(H)+1$  bound proves the gauge is small relative to some coverage function. In the proof a coverage function is the natural choice and it is not clear how to relax this to show the gauge relative to a modular function. In contrast, \Cref{prop:fractional-witness-realisation} in fact always shows that the gauge $\gauge^\Mod_{H,\one}$ (the gauge relative to $K^\Mod_\one$) is bounded in terms of $\kappa$, as the bound comes from a modular $b$. That is, this more general principle implicitly translates our realisability argument for edge excess from one based on submodular functions, to one based on modular functions.

Concretely, this implies the same lower bounds already for adaptive width. But on a higher level this hints at the opportunity for deeper insight into the connections between adaptive and submodular width. 
New insight into this connection might be gained through studying the relationship of submodular and modular gauges for the realisable body.

\section{Complexity Consequences}
\label{sec:consequences}
We assume the reader to be familiar with \emph{conjunctive queries} (CQs). We use the notation and terminology of~\cite{pdm}. In particular, for CQ $q$ and database $D$, we write $q(D)$ for the set of answers of $q$ over $D$.
We write $\cqprob(\cH)$ for the Boolean conjunctive query evaluation problem over hypergraph class $\cH$. Namely, given a CQ $q$ with $H(q) \in \cH$ and a database $D$, is $q(D)\neq \emptyset$?
We also write $\pcq(\cH)$ for the typical parameterisation of the problem studied in the context of parameterised complexity, with the same inputs and output as $\cqprob(\cH)$, and parameter $q$. That is, we say $\pcq(\cH) \in \mathsf{FPT}$ if there is an $f(q)\,\mathsf{poly}(D)$ time algorithm for the problem, where $f$ is computable. 

\begin{proposition}[\citet{DBLP:journals/ejc/AdlerGG07,DBLP:journals/jcss/GottlobLS02}]
  \label{prop:hwtrac}
  Let $\cH$ be a class of hypergraphs with constantly bounded $\ghw$. Then $\cqprob(\cH) \in \mathsf{PTIME}$.
\end{proposition}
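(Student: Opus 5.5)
The plan is to combine two classical ingredients: an approximation result that lets us compute a decomposition in polynomial time, and Yannakakis-style evaluation over an acyclic join tree. Fix $k$ with $\ghw(H)\le k$ for all $H\in\cH$. Computing an optimal generalised hypertree decomposition is NP-hard even for fixed $k$, so I would not work with $\ghw$ directly. Instead I would pass to \emph{hypertree width} $\mathsf{hw}$, which adds the ``special condition'' on guards.

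First I would invoke the result of Adler, Gottlob and Grohe that $\mathsf{hw}(H)\le 3\,\ghw(H)+1$. Hence every $H\in\cH$ has hypertree width at most $k':=3k+1$. Second, I would use the algorithm of Gottlob, Leone and Scarcello, which for fixed $k'$ decides whether $\mathsf{hw}(H)\le k'$ in polynomial time and, if so, outputs a hypertree decomposition $(T,B,\lambda)$ of width at most $k'$. This step is based on alternating logspace / candidate-bag dynamic programming over the polynomially many unions of at most $k'$ edges. Applied to the query hypergraph $H(q)$, it yields in time polynomial in $|q|$ a decomposition with at most polynomially many nodes. Each node $u$ carries a guard $\lambda(u)\subseteq E(H)$ with $|\lambda(u)|\le k'$ and $B(u)\subseteq\bigcup\lambda(u)$.

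Third, I would evaluate the query along this decomposition. For each node $u$, compute the relation $R_u:=\pi_{B(u)}\bigl(\Join_{e\in\lambda(u)} R_e\bigr)$, where $R_e$ is the database relation of the atom for $e$. Since at most $k'$ relations are joined, each of size at most $|D|$, this takes time $O(|D|^{k'})$ up to polynomial factors, which is polynomial for fixed $k$. Every atom $e$ is covered by some bag $B(u_e)$, and I would additionally semijoin $R_{u_e}$ with $R_e$ so that each atom is actually enforced. The resulting instance $\{R_u\}$ is an acyclic CQ with join tree $T$: the connectedness condition of the decomposition gives exactly the running-intersection property. It is equivalent to the original query, because every variable assignment satisfying all atoms projects into each $R_u$, and conversely. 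Finally, I would run Yannakakis' algorithm, a bottom-up semijoin pass on $T$, and answer ``yes'' iff the root relation is nonempty. This is polynomial in the total size of the $R_u$.

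The main obstacle is the decomposition-finding step, since bounded $\ghw$ alone does not give a polynomial algorithm to find a witness decomposition. The detour through $\mathsf{hw}\le 3\ghw+1$ resolves this, at the price of raising the exponent from $k$ to $3k+1$, which is harmless for membership in $\mathsf{PTIME}$. The remaining correctness argument, equivalence of the bag-relation instance with $q$, is routine.
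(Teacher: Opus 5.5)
Your proposal is correct and is essentially the argument the paper delegates to its two citations (it gives no proof of its own): Adler, Gottlob and Grohe's bound $\mathsf{hw}(H)\le 3\,\ghw(H)+1$ reduces to bounded hypertree width. For that, Gottlob, Leone and Scarcello compute a decomposition of width at most $3k+1$ in polynomial time and evaluate the query over the resulting acyclic instance of bag relations with Yannakakis' algorithm. Your semijoin of $R_{u_e}$ with $R_e$ is the right detail: it enforces atoms that appear in no guard.
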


\begin{proposition}[\citet{Marx13}]
\label{prop:subwfpt}
Let $\cH$ be a recursively enumerable class of hypergraphs. Then, assuming ETH,
$\pcq(\cH) \in \mathsf{FPT}$  if and only if $\cH$ has constantly bounded $\subw$.
\end{proposition}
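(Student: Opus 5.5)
The statement is \citet{Marx13}'s characterisation, and I would reproduce its two directions rather than derive it from the machinery of this paper.

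\emph{Upper bound.} Suppose $\subw(H)\le w$ for all $H\in\cH$. Given a query $q$ with hypergraph $H$ and a database $D$, I would run Marx's decomposition algorithm:
\begin{enumerate}[label=(\arabic*)]
\item Split the instance into at most $f(q)\cdot\mathsf{poly}(D)$ subinstances that are ``uniform''. Here uniform means that for every pair of subsets of variables, the projections of the relations have essentially fixed degree ratios.
\item On each uniform subinstance, the function $b(X):=\log_{|D|}|\pi_X(\text{solutions})|$ is, up to small error, an edge-dominated, monotone, normalised submodular function. This uses the entropy and polymatroid argument: take the uniform distribution on solutions and pass to its entropy function.
\item Since $\subw(H)\le w$, some tree decomposition has $b$-width at most $w$. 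I would enumerate the $f(q)$ tree decompositions to find it, materialise every bag in time $|D|^{O(w)}$, and finish with Yannakakis.
\end{enumerate}
Because $q$ is the parameter, $f$ is computable and the total running time is $f(q)\,\mathsf{poly}(D)$, so $\pcq(\cH)\in\mathsf{FPT}$.

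\emph{Lower bound.} Suppose $\subw$ is unbounded on $\cH$.
\begin{enumerate}[label=(\arabic*)]
\item For every $k$ there is $H\in\cH$ and $b\in\pSub{\one}(H)$ with $\tau_b(H)\ge k$. Recursive enumerability of $\cH$, together with computability of $\subw$ (decidable via LP over finitely many decompositions), lets me find such an $H$ effectively.
\item By LP duality I would turn large $\tau_b$ into a ``highly connected'' fractional structure. This is Marx's step of finding a set that is well-linked with respect to $b$.
\item From that structure I would build an embedding of an arbitrary graph $G$ with $m$ edges into $H$ with edge depth $O(m/\,\text{connectivity})$.
\item Via this embedding, I would reduce binary CSP on $G$ to $\pcq(\cH)$, blowing domains up by only a polynomial factor of the edge depth.
\end{enumerate}
Under ETH, binary CSP on graphs with $m$ edges has no $f(G)\,n^{o(m/\log m)}$ algorithm. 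An FPT algorithm for $\pcq(\cH)$ would therefore violate ETH.

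The main obstacle is the embedding step (3) in the lower bound. It requires converting a submodular, not just modular, obstruction into concrete vertex-disjoint routing with controlled congestion. Marx does this with a delicate fractional-to-integral argument through the dual of the $b$-width LP and product-style embeddings, and it is where the $\log$ loss enters. The upper bound's uniformisation step is technical but routine.
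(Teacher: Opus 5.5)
\paragraph{Comparison with the paper.} The paper does not prove this proposition. It quotes it from \citet{Marx13} and uses it only as a black box in \Cref{thm:complexity}, so deferring to Marx is exactly the paper's route, and your two-part outline has the right overall architecture. As a reconstruction, however, two of your steps would not go through as you state them.

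\paragraph{Tractability direction.} You define uniformity and $b(X)=\log_{|D|}|\pi_X(\text{solutions})|$ through the solution set, which is the unknown. Neither the splitting nor the choice and materialisation of bags can be carried out from it. Knowing $b(B)\le w$ gives you no way to compute $\pi_B(\text{solutions})$, and the obvious computable superset can be much larger. For example, for the bag $\{x,y,z\}$ of the $4$-cycle query $R(x,y),S(y,z),T(z,w),U(w,x)$, that superset is a two-atom path join of size up to $|D|^2$, while the solution projection may be tiny. Marx's splitting instead works with projections of the \emph{instance}, which can be computed when they are small. It imposes uniformity on those, and the approximately submodular function is obtained from these computable projections. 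That is what makes small width actually yield bag relations computable in time $\|I\|^{O(w)}$. This is the main technical content of the tractability direction, not a routine step.

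\paragraph{Hardness direction.} An embedding of $G$ into $H$ of edge depth $d$ turns a binary CSP instance with domain $D$ into one whose relations have size up to $|D|^{O(d)}$. The blow-up is therefore exponential in $d$, not a polynomial factor. That is exactly why you need $d=o(m/\log m)$: then an $f(H)\|I\|^{O(1)}$ algorithm becomes an $f'(G)|D|^{o(m/\log m)}$ algorithm for binary CSP.

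You should also not rely on an edge-depth bound of $O(m/\text{connectivity})$. The known embeddings lose more; there is a $\log$ factor already in the graph case. However, any bound of the form $m/g(\subw(H))$ with $g$ unbounded suffices, because unbounded $\subw$ lets you pick $H\in\cH$ of arbitrarily large width depending on $G$.

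Finally, effectivity does not require computing $\subw$. (In any case, $\subw$ is a maximum of finitely many LPs, not a single LP.) Enumerating $\cH$ and searching by brute force for an embedding of $G$ of the required depth is enough.
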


We extend hypergraph parameters to classes of hypergraphs in the natural way. That is, for a class $\cH$ we write $f(\cH)$ to mean $\sup_{H\in \cH} f(H)$. Moreover, we write $f(\cH) < \infty$ to state that there exists a constant $c\in \rpos$ such that $f(\cH) \le c$.

We can now formulate our main consequence in terms of computational complexity. Analogous to the bounded rank dichotomy for CQ evaluation by Grohe~\cite{DBLP:journals/jacm/Grohe07} and the aforementioned degree 2 result~\cite{LanzingerPODS22}, we observe a dichotomy where $\cqprob(\cH)$ is  either solvable in polynomial time, or $\pcq(\cH)$ is not fixed-parameter tractable.

\begin{theorem}\label{thm:complexity}
  Let $\cH$ be a recursively enumerable class of hypergraphs with bounded submodular gauge-routing ratio, i.e.,  $\cgamp(\cH)<\infty$. 
Assuming ETH, the following are equivalent:
\begin{enumerate}[label=(\roman*)]
\item $\ghw(\cH) < \infty$.
\item $\subw(\cH)< \infty$.
\item $\subwlift(\cH)< \infty$.
\item $\pcq(\cH) \in \mathsf{FPT}$.
\item $\cqprob(\cH) \in \mathsf{PTIME}$.
\end{enumerate}
\end{theorem}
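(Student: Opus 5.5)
The plan is to close a cycle of implications using the results already established: (i)$\Rightarrow$(v)$\Rightarrow$(iv)$\Rightarrow$(ii)$\Rightarrow$(iii)$\Rightarrow$(i), with (i)$\Rightarrow$(ii) also direct from \Cref{ghw.subw}.

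\textbf{The easy implications.}
\begin{enumerate}[label=(\alph*)]
\item (i)$\Rightarrow$(v) is \Cref{prop:hwtrac}.
\item (v)$\Rightarrow$(iv) holds because a polynomial-time algorithm is in particular an $f(q)\,\mathsf{poly}(D)$ algorithm. Here the query $q$ is part of the input, so the running time $\mathsf{poly}(|q|+|D|)$ is bounded by $f(q)\,\mathsf{poly}(D)$ for a computable $f$.
\item (iv)$\Leftrightarrow$(ii) is exactly \Cref{prop:subwfpt}. This is the only step that uses ETH and recursive enumerability.
\item (ii)$\Rightarrow$(iii) follows from $\subwlift(H)\le\subw(H)$ in \Cref{cor:lifted-bw}.
\item Conversely, (iii)$\Rightarrow$(ii) follows from $\subw(H)\le\max\{1,\tfrac32\subwlift(H)\}$ in the same corollary. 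So (ii) and (iii) are interchangeable without any assumption on $\cgamp$.
\end{enumerate}

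\textbf{The main step: (ii)$\Rightarrow$(i).} This is the only place the hypothesis $\cgamp(\cH)\le c<\infty$ is used. Suppose $\subw(\cH)\le s$ and, for contradiction, that $\ghw$ is unbounded on $\cH$. For any $H\in\cH$ with $\ghw(H)\ge 9$, \Cref{lem:load.to.width} gives
\[
s\ge \subw(H)\ge \frac{C}{c}\cdot\frac{\ghw(H)}{\log\ghw(H)}.
\]
Since $x/\log x$ is increasing and unbounded for $x\ge 9$, this bounds $\ghw(H)$ by a constant depending only on $s$, $c$ and $C$. Hypergraphs with $\ghw(H)<9$ are bounded trivially. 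Hence $\ghw(\cH)<\infty$.

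\textbf{Technical points to check.}
\begin{itemize}
\item \Cref{lem:load.to.width} requires $\cgamp(H)>0$ to be meaningful. If $\cgamp(H)=0$, the bound should be read as infinite, which would contradict boundedness of $\subw$ even more strongly.
\item The definition of $\cgamp$ assumed that $P(H^*)$ has at least one edge. For hypergraphs whose line graph is edgeless, $\ghw(H)\le 1$, since every edge is its own component. These hypergraphs therefore never violate (i) and can be discarded.
\end{itemize}
I expect these convention issues, not any substantive argument, to be the only real obstacle. The rest is direct chaining of the stated propositions.
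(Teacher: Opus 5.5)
Your proof is correct and follows essentially the same route as the paper. Both use (ii)$\Leftrightarrow$(iii)$\Leftrightarrow$(iv) via \Cref{cor:lifted-bw} and \Cref{prop:subwfpt}, then (i)$\Leftrightarrow$(ii) via \Cref{ghw.subw} and \Cref{lem:load.to.width} under bounded $\cgamp$, and close the loop with (i)$\Rightarrow$(v)$\Rightarrow$(iv). Your convention caveats are harmless: $\cgamp(H)>0$ holds automatically whenever $P(H^*)$ has an edge, because a nonzero $\alpha$ yields a nonzero $q_{M,\alpha}$, whose gauge is positive.
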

\begin{proof}
By \Cref{cor:lifted-bw} and \Cref{prop:subwfpt}, (ii) $\iff$ (iii) $\iff $ (iv). Assuming bounded $\cgamp$, \Cref{ghw.subw} and \Cref{lem:load.to.width}, also (i) is equivalent to those statements.
To close the final loop, observe that (i) $\implies$ (v) by \Cref{prop:hwtrac}, and (v) $\implies$ (iv) trivially. 
\end{proof}

\Cref{thm:complexity} should be viewed as a meta-theorem.
The previous sections identified several natural sufficient conditions for bounded
gauge-routing ratio, including bounded edge excess, exclusive intersection structure,
and the more general routing principles from \Cref{sec:kappa}.
Taken together, these results clarify how tightly $\subw$ and $\ghw$ are linked on a
broad family of hypergraph classes.

So far we have focused our investigations on generalised hypertree width $\ghw$, i.e., $\tau_b$ where $b$ is the (integral) edge cover number. Relaxing this to the case where $b$ is the \emph{fractional} edge cover number defines the important notion of \emph{fractional hypertree width} $\fhw$~\cite{fhw}, that is known to induce tractable classes for $\cqprob$ beyond bounded $\ghw$.
It is well-known that $\subw(H) \le \fhw(H) \le \ghw(H)$. Hence $\cgamp(\cH)<\infty$ also implies that $\ghw(\cH)<\infty$  iff $\fhw(\cH)<\infty$.  That is, for the classes investigated here the notions collapse in terms of boundedness. Lower bounds based particularly on $\fhw$ are a natural opportunity for future work. We discuss this matter in  detail in~\Cref{sec:conclusion}.

$\cqprob$ and $\pcq$ are also studied in terms of restrictions by query classes rather than hypergraph classes. This introduces one extra step of complications, as queries might have structurally simple equivalent \emph{cores}.
Our results extend naturally to the setting of query classes by results of Chen et al.~\cite{DBLP:conf/ijcai/ChenGLP20}; see also \cite[Section~4.3]{LanzingerPODS22} for details.

\section{Conclusion \& Outlook}
\label{sec:conclusion}
\subsection{Conclusion}
We propose a geometric reframing of submodular width. Instead of reasoning directly about tree decompositions for each admissible submodular function, we pass to symmetric edge-set functions and study their position inside the realisable submodular body $K^\Sub_{\one}(H)$. Our variational characterisation recasts submodular width as a geometric question about the hypergraph itself. Instead of treating each admissible submodular function separately, one studies the shape of the realisable submodular body and the extent to which symmetric cut functions can be accommodated within it. In this way, a problem originally phrased in terms of decompositions and adversarial submodular functions is recast as a finite-dimensional convex optimisation problem.

This geometric reformulation presents us with a new task: one needs a systematic way of producing symmetric cut functions that are sufficiently well distributed to witness large width. We develop such a method by applying classical multicommodity-flow results on the line graph of the studied hypergraph. This yields symmetric cut functions with  controlled incident load at each hyperedge. In a further step we show that small incident load also forces small gauge in the realisable body. Our applications show that this mechanism is rather robust and we are able to bound $\subw$ in terms of $\ghw$ under much looser conditions than previously known.

Finally, these structural results have concrete consequences for the complexity of conjunctive query evaluation. In particular, on every recursively enumerable class with bounded gauge-routing ratio, polynomial-time solvability and fixed-parameter tractability coincide under ETH.

\subsection{Directions for Future Research}
\label{sec:future}
The geometric framework developed in this paper recasts the study of submodular width as a collection of questions about a finite-dimensional convex body. This change in perspective opens up a broader landscape of problems, only a small part of which is explored here. On the more immediate side, the results of \Cref{sec:lift} and \Cref{sec:geometry}  extend naturally to other width measures in the spirit of adaptive or submodular width, for instance those defined using entropic or polymatroid functions. It is also natural to ask how this geometric viewpoint interacts with data-constraint-aware forms of conjunctive query evaluation, such as those studied in~\cite{DBLP:conf/pods/Khamis0S17,panda25}. Among these and many other possibilities, we highlight the following directions for further work.

\paragraph{Computing Submodular Width}
The geometric viewpoint suggests a new algorithmic approach to computing submodular width of a hypergraph. Historically, deciding or computing submodular width, or finding certificates for low width has been highly challenging. Recently, there has been some progress as Abo Khamis et al.~\cite{DBLP:journals/pacmmod/KhamisHS25} presented a general algorithm for computing the more general $\omega$-$
\subw$ through finitely many linear programs, but it remains unclear whether this is practical.
Rather than searching directly over tree decompositions and adversarial submodular witnesses, one could phrase the problem in terms of optimisation over the realisable body and its gauge. This shifts the difficulty into a finite-dimensional convex setting, where one may hope for separation
procedures, approximation algorithms, or certificate systems that are more approachable than the original definition.

From this perspective, the point is not only to ask whether submodular width can be computed exactly. More broadly, one can ask whether the variational description and the associated antiblocker duality can be turned into effective optimisation tools. The primal body describes which symmetric cut functions can be realised at a given scale, while the antiblocker offers a dual language for certifying when such realisation is impossible. If these two sides can be made algorithmic, then the geometric viewpoint could provide a practical way of estimating, certifying, and investigating submodular width.

\paragraph{Technical Approaches Beyond \Cref{sec:support-inputs}}

The route through the line graph $P(H^*)$ is only one way of building on the variational characterisation. It is effective because multicommodity flow on the line graph produces balanced cut weightings with small incident load,
  but this is only one technical option towards identifying relevant cut weightings. 
In this sense, the line graph should be viewed as a particularly convenient first interface between routing and overlap, not as the unique setting in which the geometric method can operate.

We believe there is room for several alternative approaches. One may try to construct
good $q \in \Sym(H)$ directly from separators in the hypergraph or in the incidence
graph, from metric or spectral relaxations, or from higher-order overlap
structures that are not visible in the pairwise line graph alone. Another
natural direction is to look for mechanisms for constructing symmetric cut functions that already
reflect fractional cover structure to relate submodular width also more directly to $\fhw$ rather than only to $\ghw$.

\paragraph{Polyhedral Structure of the Realisable Submodular Body}
For a fixed hypergraph $H$, the set $K^\Sub_{\one}(H)\subseteq \rpos^{2^{E(H)}}$
is a finite-dimensional polytope.

 The current
framework shows that lower bounds on submodular width are governed by
membership and scaling questions for a polyhedral body, but the geometry is
still encoded implicitly through submodular witnesses. An explicit description
in cut coordinates could turn existential representation into concrete
optimisation, provide recognisable certificates of membership and
non-membership, and make it possible to study approximation algorithms for the
gauge in a more direct way.

From the hypergraph point of view, the central issue is to relate the
intersection structure of $H$ to the polyhedral structure of
$K^\Sub_{\one}(H)$. Which overlap configurations force valid inequalities, and
when do these become facet-defining? Which vertices of
$K^\Sub_{\one}(H)$ can be described combinatorially? How does the polytope
transform under natural operations on $H$, such as gluing along separators,
subdividing intersections, or adjoining vertices with controlled incidence?
Clarifying these questions would make the geometry more explicit, and could
help connect it more closely to the algorithmic theory of conjunctive query
evaluation.

\paragraph{Sparse Representations of Near-Optimal Symmetric Cut Functions}

A related geometric question concerns the complexity of representing the symmetric cut functions that are relevant to submodular width. Since $K^\Sub_{\one}(H)\cap \Sym(H)$ is a polytope, every element of this set can be written as a convex combination of finitely many extreme points. The ambient-dimensional bound  on the complexity of this combination is immediate, but is likely to be far from sharp. This raises the question of whether symmetric cut functions that are near-optimal for $\subwlift(H)$ always admit \emph{sparse} representations.

In our setting, such a result would have a concrete meaning. It would say that lower bounds relevant to submodular width can be witnessed by combining only a small number of basic realisable cut functions, rather than by a highly distributed convex combination. This would be particularly interesting if the required support size could be bounded in terms of structural parameters of $H$, such as rank, intersection behaviour, or VC-dimension.

This question also has a natural algorithmic interpretation. If near-optimal witnesses always admitted succinct decompositions, then one could hope to search for strong lower-bound certificates in a much smaller space, or to design hybrid evaluation procedures in which the more expensive submodular-width machinery is only invoked on parts of the query that genuinely require it.

\paragraph{Fractional Witness Assignments as Structural Parameters}

The parameter $\kappa(H)$ from \Cref{sec:kappa} already subsumes the two main applications of this paper: private intersections give $\kappa(H)\le 1$, while bounded edge excess gives $\kappa(H)\le \eex(H)+1$. In that sense, $\kappa(H)$ is not merely an auxiliary quantity, but a general
structural parameter extracted from the realisability argument itself.

What is still missing is a theory explaining when $\kappa(H)$ is small and how sharply it captures the gauge-routing ratio. For fixed $p$ and $e$, the quantity $\kappa_e(p)$ is a linear-programming optimum, and LP duality gives
it the flavour of a weighted fractional cover problem on the line graph $P(H^*)$. This suggests that $\kappa(H)$ should admit a more transparent combinatorial interpretation than is currently visible from the definition.

\bibliographystyle{ACM-Reference-Format}
\bibliography{references}


\begin{thebibliography}{28}


\ifx \showCODEN    \undefined \def \showCODEN     #1{\unskip}     \fi
\ifx \showISBNx    \undefined \def \showISBNx     #1{\unskip}     \fi
\ifx \showISBNxiii \undefined \def \showISBNxiii  #1{\unskip}     \fi
\ifx \showISSN     \undefined \def \showISSN      #1{\unskip}     \fi
\ifx \showLCCN     \undefined \def \showLCCN      #1{\unskip}     \fi
\ifx \shownote     \undefined \def \shownote      #1{#1}          \fi
\ifx \showarticletitle \undefined \def \showarticletitle #1{#1}   \fi
\ifx \showURL      \undefined \def \showURL       {\relax}        \fi
\providecommand\bibfield[2]{#2}
\providecommand\bibinfo[2]{#2}
\providecommand\natexlab[1]{#1}
\providecommand\showeprint[2][]{arXiv:#2}

\bibitem[Adler et~al\mbox{.}(2007)]%
        {DBLP:journals/ejc/AdlerGG07}
\bibfield{author}{\bibinfo{person}{Isolde Adler}, \bibinfo{person}{Georg
  Gottlob}, {and} \bibinfo{person}{Martin Grohe}.}
  \bibinfo{year}{2007}\natexlab{}.
\newblock \showarticletitle{Hypertree width and related hypergraph invariants}.
\newblock \bibinfo{journal}{\emph{Eur. J. Comb.}} \bibinfo{volume}{28},
  \bibinfo{number}{8} (\bibinfo{year}{2007}), \bibinfo{pages}{2167--2181}.
\newblock
\href{https://doi.org/10.1016/j.ejc.2007.04.013}{doi:\nolinkurl{10.1016/j.ejc.2007.04.013}}


\bibitem[Arenas et~al\mbox{.}(2022)]%
        {pdm}
\bibfield{author}{\bibinfo{person}{Marcelo Arenas}, \bibinfo{person}{Pablo
  Barcel\'o}, \bibinfo{person}{Leonid Libkin}, \bibinfo{person}{Wim Martens},
  {and} \bibinfo{person}{Andreas Pieris}.} \bibinfo{year}{2022}\natexlab{}.
\newblock \bibinfo{booktitle}{\emph{Database Theory}}.
\newblock \bibinfo{publisher}{Open source at
  \url{https://github.com/pdm-book/community}}.
\newblock


\bibitem[Bulatov(2017)]%
        {Bulatov17}
\bibfield{author}{\bibinfo{person}{Andrei~A. Bulatov}.}
  \bibinfo{year}{2017}\natexlab{}.
\newblock \showarticletitle{A Dichotomy Theorem for Nonuniform CSPs}. In
  \bibinfo{booktitle}{\emph{58th {IEEE} Annual Symposium on Foundations of
  Computer Science, {FOCS} 2017, Berkeley, CA, USA, October 15-17, 2017}},
  \bibfield{editor}{\bibinfo{person}{Chris Umans}} (Ed.).
  \bibinfo{publisher}{{IEEE} Computer Society}, \bibinfo{pages}{319--330}.
\newblock
\href{https://doi.org/10.1109/FOCS.2017.37}{doi:\nolinkurl{10.1109/FOCS.2017.37}}


\bibitem[Chekuri and Chuzhoy(2013)]%
        {ChekuriChuzhoy13}
\bibfield{author}{\bibinfo{person}{Chandra Chekuri} {and}
  \bibinfo{person}{Julia Chuzhoy}.} \bibinfo{year}{2013}\natexlab{}.
\newblock \showarticletitle{Large-treewidth graph decompositions and
  applications}. In \bibinfo{booktitle}{\emph{Proceedings of the 45th Annual
  {ACM} Symposium on Theory of Computing Conference, STOC 2013}}.
  \bibinfo{publisher}{{ACM}}, \bibinfo{pages}{291--300}.
\newblock
\href{https://doi.org/10.1145/2488608.2488645}{doi:\nolinkurl{10.1145/2488608.2488645}}


\bibitem[Chekuri et~al\mbox{.}(2005)]%
        {ChekuriKhannaShepherd05}
\bibfield{author}{\bibinfo{person}{Chandra Chekuri}, \bibinfo{person}{Sanjeev
  Khanna}, {and} \bibinfo{person}{F.~Bruce Shepherd}.}
  \bibinfo{year}{2005}\natexlab{}.
\newblock \showarticletitle{Multicommodity flow, well-linked terminals, and
  routing problems}. In \bibinfo{booktitle}{\emph{Proceedings of the 37th
  Annual {ACM} Symposium on Theory of Computin, STOC 2005}}.
  \bibinfo{publisher}{{ACM}}, \bibinfo{pages}{183--192}.
\newblock
\href{https://doi.org/10.1145/1060590.1060618}{doi:\nolinkurl{10.1145/1060590.1060618}}


\bibitem[Chen et~al\mbox{.}(2020)]%
        {DBLP:conf/ijcai/ChenGLP20}
\bibfield{author}{\bibinfo{person}{Hubie Chen}, \bibinfo{person}{Georg
  Gottlob}, \bibinfo{person}{Matthias Lanzinger}, {and}
  \bibinfo{person}{Reinhard Pichler}.} \bibinfo{year}{2020}\natexlab{}.
\newblock \showarticletitle{Semantic Width and the Fixed-Parameter Tractability
  of Constraint Satisfaction Problems}. In
  \bibinfo{booktitle}{\emph{Proceedings of the Twenty-Ninth International Joint
  Conference on Artificial Intelligence, {IJCAI} 2020}}.
  \bibinfo{publisher}{ijcai.org}, \bibinfo{pages}{1726--1733}.
\newblock
\href{https://doi.org/10.24963/IJCAI.2020/239}{doi:\nolinkurl{10.24963/IJCAI.2020/239}}


\bibitem[Csisz{\'{a}}r et~al\mbox{.}(1990)]%
        {DBLP:journals/combinatorica/CsiszarKLMS90}
\bibfield{author}{\bibinfo{person}{Imre Csisz{\'{a}}r},
  \bibinfo{person}{J{\'{a}}nos K{\"{o}}rner},
  \bibinfo{person}{L{\'{a}}szl{\'{o}} Lov{\'{a}}sz}, \bibinfo{person}{Katalin
  Marton}, {and} \bibinfo{person}{G{\'{a}}bor Simonyi}.}
  \bibinfo{year}{1990}\natexlab{}.
\newblock \showarticletitle{Entropy splitting for antiblocking corners and
  perfect graphs}.
\newblock \bibinfo{journal}{\emph{Comb.}} \bibinfo{volume}{10},
  \bibinfo{number}{1} (\bibinfo{year}{1990}), \bibinfo{pages}{27--40}.
\newblock
\href{https://doi.org/10.1007/BF02122693}{doi:\nolinkurl{10.1007/BF02122693}}


\bibitem[Feige et~al\mbox{.}(2008)]%
        {FeigeHajiaghayiLee08}
\bibfield{author}{\bibinfo{person}{Uriel Feige}, \bibinfo{person}{MohammadTaghi
  Hajiaghayi}, {and} \bibinfo{person}{James~R. Lee}.}
  \bibinfo{year}{2008}\natexlab{}.
\newblock \showarticletitle{Improved Approximation Algorithms for Minimum
  Weight Vertex Separators}.
\newblock \bibinfo{journal}{\emph{{SIAM} J. Comput.}} \bibinfo{volume}{38},
  \bibinfo{number}{2} (\bibinfo{year}{2008}), \bibinfo{pages}{629--657}.
\newblock
\href{https://doi.org/10.1137/05064299X}{doi:\nolinkurl{10.1137/05064299X}}


\bibitem[Fulkerson(1972)]%
        {fulkerson1972anti}
\bibfield{author}{\bibinfo{person}{Delbert~R Fulkerson}.}
  \bibinfo{year}{1972}\natexlab{}.
\newblock \showarticletitle{Anti-blocking polyhedra}.
\newblock \bibinfo{journal}{\emph{Journal of Combinatorial Theory, Series B}}
  \bibinfo{volume}{12}, \bibinfo{number}{1} (\bibinfo{year}{1972}),
  \bibinfo{pages}{50--71}.
\newblock


\bibitem[Gottlob et~al\mbox{.}(2021)]%
        {DBLP:journals/jacm/GottlobLPR21}
\bibfield{author}{\bibinfo{person}{Georg Gottlob}, \bibinfo{person}{Matthias
  Lanzinger}, \bibinfo{person}{Reinhard Pichler}, {and} \bibinfo{person}{Igor
  Razgon}.} \bibinfo{year}{2021}\natexlab{}.
\newblock \showarticletitle{Complexity Analysis of Generalized and Fractional
  Hypertree Decompositions}.
\newblock \bibinfo{journal}{\emph{J. {ACM}}} \bibinfo{volume}{68},
  \bibinfo{number}{5} (\bibinfo{year}{2021}), \bibinfo{pages}{38:1--38:50}.
\newblock
\href{https://doi.org/10.1145/3457374}{doi:\nolinkurl{10.1145/3457374}}


\bibitem[Gottlob et~al\mbox{.}(2002)]%
        {DBLP:journals/jcss/GottlobLS02}
\bibfield{author}{\bibinfo{person}{Georg Gottlob}, \bibinfo{person}{Nicola
  Leone}, {and} \bibinfo{person}{Francesco Scarcello}.}
  \bibinfo{year}{2002}\natexlab{}.
\newblock \showarticletitle{Hypertree Decompositions and Tractable Queries}.
\newblock \bibinfo{journal}{\emph{J. Comput. Syst. Sci.}} \bibinfo{volume}{64},
  \bibinfo{number}{3} (\bibinfo{year}{2002}), \bibinfo{pages}{579--627}.
\newblock
\href{https://doi.org/10.1006/jcss.2001.1809}{doi:\nolinkurl{10.1006/jcss.2001.1809}}


\bibitem[Gottlob et~al\mbox{.}(2009)]%
        {DBLP:journals/jacm/GottlobMS09}
\bibfield{author}{\bibinfo{person}{Georg Gottlob},
  \bibinfo{person}{Zolt{\'{a}}n Mikl{\'{o}}s}, {and} \bibinfo{person}{Thomas
  Schwentick}.} \bibinfo{year}{2009}\natexlab{}.
\newblock \showarticletitle{Generalized hypertree decompositions: NP-hardness
  and tractable variants}.
\newblock \bibinfo{journal}{\emph{J. {ACM}}} \bibinfo{volume}{56},
  \bibinfo{number}{6} (\bibinfo{year}{2009}), \bibinfo{pages}{30:1--30:32}.
\newblock
\href{https://doi.org/10.1145/1568318.1568320}{doi:\nolinkurl{10.1145/1568318.1568320}}


\bibitem[Grohe(2007)]%
        {DBLP:journals/jacm/Grohe07}
\bibfield{author}{\bibinfo{person}{Martin Grohe}.}
  \bibinfo{year}{2007}\natexlab{}.
\newblock \showarticletitle{The complexity of homomorphism and constraint
  satisfaction problems seen from the other side}.
\newblock \bibinfo{journal}{\emph{J. {ACM}}} \bibinfo{volume}{54},
  \bibinfo{number}{1} (\bibinfo{year}{2007}), \bibinfo{pages}{1:1--1:24}.
\newblock
\href{https://doi.org/10.1145/1206035.1206036}{doi:\nolinkurl{10.1145/1206035.1206036}}


\bibitem[Grohe(2016)]%
        {Grohe16}
\bibfield{author}{\bibinfo{person}{Martin Grohe}.}
  \bibinfo{year}{2016}\natexlab{}.
\newblock \showarticletitle{Tangled up in Blue {(A} Survey on Connectivity,
  Decompositions, and Tangles)}.
\newblock \bibinfo{journal}{\emph{CoRR}}  \bibinfo{volume}{abs/1605.06704}
  (\bibinfo{year}{2016}).
\newblock
\showeprint[arXiv]{1605.06704}
\urldef\tempurl%
\url{http://arxiv.org/abs/1605.06704}
\showURL{%
\tempurl}


\bibitem[Grohe and Marx(2006)]%
        {fhw}
\bibfield{author}{\bibinfo{person}{Martin Grohe} {and}
  \bibinfo{person}{D{\'{a}}niel Marx}.} \bibinfo{year}{2006}\natexlab{}.
\newblock \showarticletitle{Constraint solving via fractional edge covers}. In
  \bibinfo{booktitle}{\emph{Proceedings of the Seventeenth Annual {ACM-SIAM}
  Symposium on Discrete Algorithms, {SODA} 2006}}. \bibinfo{publisher}{{ACM}
  Press}, \bibinfo{pages}{289--298}.
\newblock
\urldef\tempurl%
\url{http://dl.acm.org/citation.cfm?id=1109557.1109590}
\showURL{%
\tempurl}


\bibitem[Khamis and Chen(2026)]%
        {jaguar}
\bibfield{author}{\bibinfo{person}{Mahmoud~Abo Khamis} {and}
  \bibinfo{person}{Hubie Chen}.} \bibinfo{year}{2026}\natexlab{}.
\newblock \showarticletitle{Jaguar: {A} Primal Algorithm for Conjunctive Query
  Evaluation in Submodular-Width Time}.
\newblock \bibinfo{journal}{\emph{CoRR}}  \bibinfo{volume}{abs/2603.13624}
  (\bibinfo{year}{2026}).
\newblock
\href{https://doi.org/10.48550/ARXIV.2603.13624}{doi:\nolinkurl{10.48550/ARXIV.2603.13624}}
\showeprint[arXiv]{2603.13624}


\bibitem[Khamis et~al\mbox{.}(2025a)]%
        {DBLP:journals/pacmmod/KhamisHS25}
\bibfield{author}{\bibinfo{person}{Mahmoud~Abo Khamis}, \bibinfo{person}{Xiao
  Hu}, {and} \bibinfo{person}{Dan Suciu}.} \bibinfo{year}{2025}\natexlab{a}.
\newblock \showarticletitle{Fast Matrix Multiplication meets the Submodular
  Width}.
\newblock \bibinfo{journal}{\emph{Proc. {ACM} Manag. Data}}
  \bibinfo{volume}{3}, \bibinfo{number}{2} (\bibinfo{year}{2025}),
  \bibinfo{pages}{98:1--98:26}.
\newblock
\href{https://doi.org/10.1145/3725235}{doi:\nolinkurl{10.1145/3725235}}


\bibitem[Khamis et~al\mbox{.}(2017)]%
        {DBLP:conf/pods/Khamis0S17}
\bibfield{author}{\bibinfo{person}{Mahmoud~Abo Khamis},
  \bibinfo{person}{Hung~Q. Ngo}, {and} \bibinfo{person}{Dan Suciu}.}
  \bibinfo{year}{2017}\natexlab{}.
\newblock \showarticletitle{What Do Shannon-type Inequalities, Submodular
  Width, and Disjunctive Datalog Have to Do with One Another?}. In
  \bibinfo{booktitle}{\emph{Proceedings of the 36th {ACM} {SIGMOD-SIGACT-SIGAI}
  Symposium on Principles of Database Systems, {PODS} 2017}}.
  \bibinfo{publisher}{{ACM}}, \bibinfo{pages}{429--444}.
\newblock
\href{https://doi.org/10.1145/3034786.3056105}{doi:\nolinkurl{10.1145/3034786.3056105}}


\bibitem[Khamis et~al\mbox{.}(2025b)]%
        {panda25}
\bibfield{author}{\bibinfo{person}{Mahmoud~Abo Khamis},
  \bibinfo{person}{Hung~Q. Ngo}, {and} \bibinfo{person}{Dan Suciu}.}
  \bibinfo{year}{2025}\natexlab{b}.
\newblock \showarticletitle{{PANDA:} Query Evaluation in Submodular Width}.
\newblock \bibinfo{journal}{\emph{TheoretiCS}}  \bibinfo{volume}{4}, Article
  \bibinfo{articleno}{12} (\bibinfo{year}{2025}).
\newblock
\href{https://doi.org/10.46298/THEORETICS.25.12}{doi:\nolinkurl{10.46298/THEORETICS.25.12}}


\bibitem[Khamis et~al\mbox{.}(2025c)]%
        {pandaexpress}
\bibfield{author}{\bibinfo{person}{Mahmoud~Abo Khamis},
  \bibinfo{person}{Hung~Q. Ngo}, {and} \bibinfo{person}{Dan Suciu}.}
  \bibinfo{year}{2025}\natexlab{c}.
\newblock \showarticletitle{PANDAExpress: a Simpler and Faster {PANDA}
  Algorithm}.
\newblock \bibinfo{journal}{\emph{CoRR}}  \bibinfo{volume}{abs/2512.10217}
  (\bibinfo{year}{2025}).
\newblock
\href{https://doi.org/10.48550/ARXIV.2512.10217}{doi:\nolinkurl{10.48550/ARXIV.2512.10217}}
\showeprint[arXiv]{2512.10217}


\bibitem[Lanzinger(2022)]%
        {LanzingerPODS22}
\bibfield{author}{\bibinfo{person}{Matthias Lanzinger}.}
  \bibinfo{year}{2022}\natexlab{}.
\newblock \showarticletitle{The Complexity of Conjunctive Queries with Degree
  2}. In \bibinfo{booktitle}{\emph{Proceedings of the 41st {ACM}
  {SIGMOD-SIGACT-SIGAI} Symposium on Principles of Database Systems, {PODS}
  2022}}. \bibinfo{publisher}{{ACM}}, \bibinfo{pages}{91--102}.
\newblock
\href{https://doi.org/10.1145/3517804.3524152}{doi:\nolinkurl{10.1145/3517804.3524152}}


\bibitem[Marx(2011)]%
        {adw}
\bibfield{author}{\bibinfo{person}{D{\'{a}}niel Marx}.}
  \bibinfo{year}{2011}\natexlab{}.
\newblock \showarticletitle{Tractable Structures for Constraint Satisfaction
  with Truth Tables}.
\newblock \bibinfo{journal}{\emph{Theory Comput. Syst.}} \bibinfo{volume}{48},
  \bibinfo{number}{3} (\bibinfo{year}{2011}), \bibinfo{pages}{444--464}.
\newblock
\href{https://doi.org/10.1007/S00224-009-9248-9}{doi:\nolinkurl{10.1007/S00224-009-9248-9}}


\bibitem[Marx(2013)]%
        {Marx13}
\bibfield{author}{\bibinfo{person}{D{\'a}niel Marx}.}
  \bibinfo{year}{2013}\natexlab{}.
\newblock \showarticletitle{Tractable Hypergraph Properties for Constraint
  Satisfaction and Conjunctive Queries}.
\newblock \bibinfo{journal}{\emph{J. {ACM}}} \bibinfo{volume}{60},
  \bibinfo{number}{6} (\bibinfo{year}{2013}), \bibinfo{pages}{42:1--42:51}.
\newblock
\href{https://doi.org/10.1145/2535926}{doi:\nolinkurl{10.1145/2535926}}


\bibitem[Menger(1927)]%
        {menger1927allgemeinen}
\bibfield{author}{\bibinfo{person}{Karl Menger}.}
  \bibinfo{year}{1927}\natexlab{}.
\newblock \showarticletitle{Zur allgemeinen kurventheorie}.
\newblock \bibinfo{journal}{\emph{Fundamenta mathematicae}}
  \bibinfo{volume}{10}, \bibinfo{number}{1} (\bibinfo{year}{1927}),
  \bibinfo{pages}{96--115}.
\newblock


\bibitem[Oum and Seymour(2006)]%
        {OumSeymour06}
\bibfield{author}{\bibinfo{person}{Sang{-}il Oum} {and}
  \bibinfo{person}{Paul~D. Seymour}.} \bibinfo{year}{2006}\natexlab{}.
\newblock \showarticletitle{Approximating clique-width and branch-width}.
\newblock \bibinfo{journal}{\emph{J. Comb. Theory {B}}} \bibinfo{volume}{96},
  \bibinfo{number}{4} (\bibinfo{year}{2006}), \bibinfo{pages}{514--528}.
\newblock
\href{https://doi.org/10.1016/J.JCTB.2005.10.006}{doi:\nolinkurl{10.1016/J.JCTB.2005.10.006}}


\bibitem[Reed(1997)]%
        {Reed97}
\bibfield{author}{\bibinfo{person}{B.~A. Reed}.}
  \bibinfo{year}{1997}\natexlab{}.
\newblock \showarticletitle{Tree Width and Tangles: A New Connectivity Measure
  and Some Applications}.
\newblock In \bibinfo{booktitle}{\emph{Surveys in Combinatorics, 1997}},
  \bibfield{editor}{\bibinfo{person}{R.~A. Bailey}} (Ed.).
  \bibinfo{series}{London Mathematical Society Lecture Note Series},
  Vol.~\bibinfo{volume}{241}. \bibinfo{publisher}{Cambridge University Press},
  \bibinfo{pages}{87--162}.
\newblock
\href{https://doi.org/10.1017/CBO9780511662119.006}{doi:\nolinkurl{10.1017/CBO9780511662119.006}}


\bibitem[Rockafellar(1997)]%
        {rockafellar1997convex}
\bibfield{author}{\bibinfo{person}{R~Tyrrell Rockafellar}.}
  \bibinfo{year}{1997}\natexlab{}.
\newblock \bibinfo{booktitle}{\emph{Convex analysis}}.
  Vol.~\bibinfo{volume}{28}.
\newblock \bibinfo{publisher}{Princeton university press}.
\newblock


\bibitem[Zhuk(2020)]%
        {Zhuk20}
\bibfield{author}{\bibinfo{person}{Dmitriy Zhuk}.}
  \bibinfo{year}{2020}\natexlab{}.
\newblock \showarticletitle{A Proof of the {CSP} Dichotomy Conjecture}.
\newblock \bibinfo{journal}{\emph{J. {ACM}}} \bibinfo{volume}{67},
  \bibinfo{number}{5} (\bibinfo{year}{2020}), \bibinfo{pages}{30:1--30:78}.
\newblock
\href{https://doi.org/10.1145/3402029}{doi:\nolinkurl{10.1145/3402029}}


\end{thebibliography}

 \appendix

\section{A Direct Lifted Submodular Width Lower Bound}
\label{app:lift.ex}

We demonstrate a simple mechanism for constructing symmetric edge-set profiles with large cut value and small submodular gauge.

\begin{lemma}[Orthogonal Partitions Lemma]
\label{thm:orthogonal-partitions}
Let $H$ be a hypergraph, and let
\[
\mathcal A=\{A_i \mid i\in I\},
\qquad
\mathcal B=\{B_j \mid j\in J\}
\]
be two families of edges of $H$, each of which is a partition of $V(H)$.
Let $\mu$ be a probability measure on $V(H)$ such that
$\mu(A_i\cap B_j)=\mu(A_i)\mu(B_j)$
 for all $i\in I, j\in J$.
Set $\theta:=\max_{e\in E(H)}\mu(e)$,
and assume that $\mu(A_i)\le \tfrac12$ for all $i\in I$.

Then
\(
\subw(H)\ge \frac{1}{3\theta}.
\)
\end{lemma}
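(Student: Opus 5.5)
The plan is to exhibit a single modular witness and lower-bound its lifted branchwidth, then transfer the bound via the first inequality of \Cref{thm:lift.to.tau}. Set $b(X):=\mu(X)/\theta$ for $X\subseteq V(H)$. This $b$ is normalised, nonnegative and modular. Moreover $b(e)=\mu(e)/\theta\le 1$ for every $e\in E(H)$, and by monotonicity the same holds for all subsets of $e$. Hence $b\in\pMod{\one}(H)\subseteq\pSub{\one}(H)$. By \Cref{thm:lift.to.tau} (or \Cref{cor:lifted-bw}), $\subw(H)\ge\tau_b(H)\ge\disp_H(\lambda_b)$. It therefore suffices to show that every branch decomposition of $E(H)$ has a cut $F$ with $\mu(\partial_H(F))\ge 1/3$.

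To locate such a cut, I would apply \Cref{lem:branch-one-third} (equivalently \Cref{lem:general-balanced-profile}) with the probability measure $\rho$ on $E(H)$ supported on the family $\mathcal A$, namely $\rho(\{A_i\}):=\mu(A_i)$ and $\rho(\{e\}):=0$ for all other edges. Since $\mathcal A$ partitions $V(H)$, $\rho$ is a probability measure, and the hypothesis $\mu(A_i)\le 1/2$ gives the required bound on point masses. In any branch decomposition we thus obtain a cut $(F,E\setminus F)$ with $a:=\rho(F)=\mu\bigl(\bigcup\{A_i\mid A_i\in F\}\bigr)\in[1/3,2/3]$.

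The key step is to bound $\mu(\partial_H(F))$ using the second partition $\mathcal B$. Fix $j\in J$.
\begin{itemize}
\item If $B_j\in F$, then every vertex of $B_j\cap A_i$ with $A_i\notin F$ lies in $\partial_H(F)$, since it is covered by $B_j\in F$ and by $A_i\in E\setminus F$.
\item If $B_j\notin F$, then symmetrically every vertex of $B_j\cap A_i$ with $A_i\in F$ lies in $\partial_H(F)$.
\end{itemize}
By the product condition $\mu(A_i\cap B_j)=\mu(A_i)\mu(B_j)$, the $\mu$-mass of these pieces inside $B_j$ is $\mu(B_j)(1-a)$ in the first case and $\mu(B_j)\,a$ in the second. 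In either case it is at least $\mu(B_j)\min\{a,1-a\}$. The sets $B_j$ are pairwise disjoint, so summing over $j\in J$ gives
\[
\mu(\partial_H(F))\ \ge\ \min\{a,1-a\}\sum_{j\in J}\mu(B_j)\ =\ \min\{a,1-a\}\ \ge\ \tfrac13 .
\]
Hence $\lambda_b(F)\ge 1/(3\theta)$, so $\disp_H(\lambda_b)\ge 1/(3\theta)$, which gives the claim.

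The main point requiring care is this boundary counting. One must check that membership of $B_j$ in $F$ is well defined even when some $B_j$ coincides with some $A_i$ as an edge; since each edge lies on exactly one side of the cut, the case split above still applies. One must also note that $|E(H)|\ge 2$, so that the convention $\disp_H=0$ does not interfere. This holds because $\mu(A_i)\le 1/2$ forces $\mathcal A$ to contain at least two edges.
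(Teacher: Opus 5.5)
Your proposal is correct and is essentially the paper's proof: the same modular witness $b=\mu/\theta$, the same measure $\rho$ supported on $\mathcal A$, the balanced-cut lemma, and then $\subw(H)\ge\subwlift(H)$. Your block-by-block count over the $B_j$ is just an unpacked form of the paper's calculation, which uses independence to compute the $\mu$-mass of the symmetric difference between $\bigcup(\mathcal A\cap F)$ and $\bigcup(\mathcal B\cap F)$.
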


\begin{proof}
Set $b(X):=\mu(X)/\theta$ for $X\subseteq V(H)$. Since $b$ is modular and
$b(X)\le 1$ whenever $X\subseteq e\in E(H)$, we have
$b\in \pMod{\one}(H)\subseteq \pSub{\one}(H)$. Let $q:=\lambda_b$, so
$q(F)=\mu(\partial_H(F))/\theta$ for every $F\subseteq E(H)$.

Define a probability measure $\rho$ on $E(H)$ by
\[
\rho(e):=
\begin{cases}
\mu(e), & e\in \mathcal A,\\
0, & e\notin \mathcal A.
\end{cases}
\]
Since $\mathcal A$ is a partition of $V(H)$, $\rho$ is a probability measure on
$E(H)$, and $\rho(e)\le \tfrac12$ for every $e\in E(H)$.

Fix $F\subseteq E(H)$, and set
\[
A_F:=\bigcup (\mathcal A\cap F),
\qquad
B_F:=\bigcup (\mathcal B\cap F).
\]
If $x\in A_F\triangle B_F$ ($\triangle$ being the symmetric set difference), then one of the two partition-edges containing $x$
lies in $F$ and the other in $E(H)\setminus F$. Thus
$x\in \partial_H(F)$, and hence
\(
\mu(\partial_H(F))\ge \mu(A_F\triangle B_F).
\)

Because $\mathcal A$ and $\mathcal B$ are partitions and
$\mu(A_i\cap B_j)=\mu(A_i)\mu(B_j)$ for all $i,j$, the events
$x\in A_F$ and $x\in B_F$ are independent under $\mu$. Moreover,
$\mu(A_F)=\rho(F)$ and we have
\[
\mu(A_F\triangle B_F)
=
\mu(A_F)+\mu(B_F)-2\mu(A_F)\mu(B_F)
=
\rho(F)+\mu(B_F)(1-2\rho(F)).
\]
Therefore, whenever $\rho(F)\le \tfrac12$, we also have
$\mu(\partial_H(F))\ge \rho(F)$.
By symmetry of $q$ it follows
that
\[
q(F)\ge \frac{1}{\theta}\min\{\rho(F),\rho(E(H)\setminus F)\}
\qquad \forall F\subseteq E(H).
\]

Applying \Cref{lem:general-balanced-profile} with
$\phi(x)=x/\theta$ yields $\disp_H(q)\ge 1/(3\theta)$.
Since $q=\lambda_b$ with $b\in\pSub{\one}(H)$, we get
$\subwlift(H)\ge \disp_H(q)\ge 1/(3\theta)$.
Finally, by \Cref{cor:lifted-bw} also $\subw(H)\ge \subwlift(H)$.
\end{proof}

A simple example is the row/column hypergraph of an $m\times n$ grid. Its
vertex set is $[m]\times [n]$, and its hyperedges are the rows
$R_i=\{(i,j)\mid j\in [n]\}$  for $i\in [m]$, 
and the columns
$C_j=\{(i,j)\mid i\in [m]\}$ for $j\in [n]$.

Taking $\mathcal A=\{R_i\mid i\in [m]\}$ and $\mathcal B=\{C_j\mid j\in [n]\}$,
we obtain two partitions of the vertex set. If $\mu$ is the uniform measure on
$[m]\times [n]$, then
\[
\mu(R_i)=\frac1m,\qquad \mu(C_j)=\frac1n,\qquad
\mu(R_i\cap C_j)=\frac1{mn}=\mu(R_i)\mu(C_j),
\]
so the two partitions are orthogonal in the sense of
\Cref{thm:orthogonal-partitions}. Moreover,
\[
\max_{e\in E(H)}\mu(e)=\max\Bigl\{\frac1m,\frac1n\Bigr\}
=\frac1{\min\{m,n\}},
\]
and if $m,n\ge 2$, then in particular each row has measure at most $1/2$.
Thus \Cref{thm:orthogonal-partitions} applies and yields
\[
\subw(H)\ge \frac{\min\{m,n\}}{3}.
\]
Somewhat surprisingly, \Cref{thm:orthogonal-partitions} also tells us that the same lower bound holds if we add arbitrary additional hyperedges on the same
vertex set, as long as they all have rank at most $\max\{m,n\}$.

\section{Additional Details for Section~\ref{sec:applications}}
\label{app:applications}

We give individual proofs for the bounds on $\cgamp$ in terms of excess and private edge intersections. While both cases are covered by the more general argument in~\Cref{sec:kappa} we believe the more specific arguments here can be instructive.

Both proofs apply the following principle.
\begin{lemma}
\label{lem:witness-set-realisation}
Let $H$ be a hypergraph, let $M:=P(H^*)$, and let $\alpha\in \rpos^{E(M)}$.
For each edge $xy\in E(M)$, let $A_{xy}\subseteq x\cap y$ be nonempty.
Define
\[
L_A(\alpha):=
\max_{e\in E(H)}
\sum_{\substack{xy\in E(M)\\A_{xy}\cap e\neq \emptyset}}
\alpha_{xy}.
\]
Then $\gauge_H^\Sub(q_{M,\alpha})\le L_A(\alpha)$.
\end{lemma}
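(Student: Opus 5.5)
The plan is to exhibit a single witness $b\in\pSub{\one}(H)$, scaled by $L_A(\alpha)$, whose boundary lift dominates $q_{M,\alpha}$ pointwise. Then $q_{M,\alpha}\in L_A(\alpha)\cdot K^\Sub_\one(H)$, and the bound on the gauge follows from the definition. The natural candidate is a weighted coverage function. For $X\subseteq V(H)$ define
\[
c(X):=\sum_{xy\in E(M)} \alpha_{xy}\,\mathbf 1[A_{xy}\cap X\neq\emptyset].
\]
This is a nonnegative combination of the set functions $X\mapsto \mathbf 1[A_{xy}\cap X\neq\emptyset]$. Each of these is normalised, monotone and submodular: it is the coverage function of a single set, and adding an element to a larger set can only lower the marginal gain. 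Hence $c$ is normalised, monotone and submodular.

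First, I would show domination of the cut profile. Fix $F\subseteq E(H)$ and an edge $xy\in\cut{M}(F)$. Then $x$ and $y$ lie on opposite sides of the bipartition, so every vertex of $x\cap y$ lies in $\partial_H(F)$. In particular $A_{xy}\subseteq\partial_H(F)$, and $A_{xy}\neq\emptyset$ gives $A_{xy}\cap\partial_H(F)\neq\emptyset$. Each cut edge therefore contributes its full weight $\alpha_{xy}$ to $c(\partial_H(F))$, and by nonnegativity of the remaining terms
\[
q_{M,\alpha}(F)=\sum_{xy\in\cut{M}(F)}\alpha_{xy}\le c(\partial_H(F))=\lambda_c(F).
\]

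Second, I would check the edge constraints. For $e\in E(H)$ and $X\subseteq e$, monotonicity gives
\[
c(X)\le c(e)=\sum_{xy:\,A_{xy}\cap e\neq\emptyset}\alpha_{xy}\le L_A(\alpha).
\]
If $L_A(\alpha)=0$, then $c=0$. Since each $A_{xy}$ is a nonempty subset of a hyperedge, each $\alpha_{xy}$ appears in some $c(e)$, so every $\alpha_{xy}=0$. Hence $q_{M,\alpha}=0$ and its gauge is $0$.

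Otherwise set $\widehat c:=c/L_A(\alpha)\in\pSub{\one}(H)$. Then $q_{M,\alpha}\le L_A(\alpha)\,\Lambda_H\widehat c$, so $q_{M,\alpha}/L_A(\alpha)$ lies in the downward closure $K^\Sub_\one(H)$. This gives $\gauge_H^\Sub(q_{M,\alpha})\le L_A(\alpha)$.

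I expect essentially no obstacle. The only point needing care is the submodularity of the coverage-type function, since a modular choice such as the one used in \Cref{prop:fractional-witness-realisation} would overcount edges $e$ that meet $A_{xy}$ in several vertices. Using the indicator of nonempty intersection is exactly what makes the hyperedge bound equal $L_A(\alpha)$.
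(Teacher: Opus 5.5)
Your proof is correct and is essentially the paper's argument. It uses the same weighted coverage function $b(X)=\sum_{xy}\alpha_{xy}\mathbf 1[A_{xy}\cap X\neq\emptyset]$, the same domination $q_{M,\alpha}\le\lambda_b$ via $A_{xy}\subseteq\partial_H(F)$ for cut edges, and the same rescaling by $L_A(\alpha)$. You are also slightly more explicit about subsets $X\subseteq e$ and the degenerate case $L_A(\alpha)=0$.

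One small correction to your closing remark: a modular witness that spreads a single unit of mass over $A_{xy}$ does not overcount. This is an intersection capacity allocation supported on $A_{xy}$, as in \Cref{prop:fractional-witness-realisation}. Such a witness is pointwise at most your coverage function while still dominating $q_{M,\alpha}$. So the coverage choice is convenient rather than necessary.
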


\begin{proof}
Our goal is to show that for every $\alpha$, there is a $b\in\Sub(H)$ with $q_{M,\alpha} \le \lambda_b$ and $\lambda_b \in L_A(\alpha)\cdot K^\Sub_\one$.

To this end, define
\[
b(X):=
\sum_{xy\in E(M)} \alpha_{xy}\,\mathbf 1[A_{xy}\cap X\neq \emptyset]
\qquad \forall X\subseteq V(H)
\]
where $\one[\varphi]$ is $1$ if $\varphi$ is true, and $0$ otherwise. This is a coverage function, hence $b\in \Sub(H)$.

Fix $F\subseteq E(H)$. If $xy\in \cut{M}(F)$, then $x$ and $y$ lie on opposite
sides of the cut, so every vertex of $x\cap y$, and therefore every vertex of
$A_{xy}$, belongs to $\partial_H(F)$. Hence
\[
q_{M,\alpha}(F)
=
\sum_{xy\in \cut{M}(F)}\alpha_{xy}
\le
b(\partial_H(F))
=
\lambda_b(F).
\]

What is left to show is that $\lambda_b \in L_A(\alpha)\cdot K^\Sub_\one$. Towards this claim, observe that for every $e\in E(H)$,
\[
b(e)
=
\sum_{xy\in E(M)} \alpha_{xy}\,\mathbf 1[A_{xy}\cap e\neq \emptyset]
=
\sum_{\substack{xy\in E(M)\\A_{xy}\cap e\neq \emptyset}}\alpha_{xy}
\le
L_A(\alpha).
\]
If $L_A(\alpha)=0$, then $q_{M,\alpha}=0$ and there is nothing to prove.
Otherwise, $\tilde b:=L_A(\alpha)^{-1}b$ lies in $\pSub{\one}(H)$, and
$q_{M,\alpha}\le \lambda_b = L_A(\alpha)\lambda_{\tilde b}$.
Thus $q_{M,\alpha}\in L_A(\alpha)\,K^\Sub_{\one}(H)$, which means
$\gauge_H^\Sub(q_{M,\alpha})\le L_A(\alpha)$.
\end{proof}

\begin{proposition}
\label{prop:exclusive-witness-realisation}
Assume that $H$ has the private intersections property. Let
$M:=P(H^*)$, and let $\alpha\in \rpos^{E(M)}$. Then
\(
\gauge_H^\Sub(q_{M,\alpha})
\le
\max_{e\in E(H)}\alpha(\Inc_M(e)).
\)
\end{proposition}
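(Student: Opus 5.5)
We apply \Cref{lem:witness-set-realisation} with singleton witness sets. For every line-graph edge $xy\in E(M)$, the private intersections property provides a vertex $v_{xy}\in x\cap y$ with $\deg_H(v_{xy})=2$. Set $A_{xy}:=\{v_{xy}\}$, which is a nonempty subset of $x\cap y$. By \Cref{lem:witness-set-realisation}, $\gauge_H^\Sub(q_{M,\alpha})\le L_A(\alpha)$, so it remains to show $L_A(\alpha)\le \max_{e\in E(H)}\alpha(\Inc_M(e))$.

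Fix $e\in E(H)$ and consider a line-graph edge $xy$ with $A_{xy}\cap e\neq\emptyset$, that is, $v_{xy}\in e$. Since $v_{xy}$ lies in $x$, in $y$, and in $e$, and it has degree exactly $2$, the hyperedge $e$ must be one of $x$ or $y$. Otherwise $v_{xy}$ would lie in three distinct hyperedges. Hence $xy$ is incident with $e$ in $M$, i.e.\ $xy\in\Inc_M(e)$. Consequently
\[
\sum_{\substack{xy\in E(M)\\ A_{xy}\cap e\neq\emptyset}}\alpha_{xy}\;\le\;\sum_{xy\in \Inc_M(e)}\alpha_{xy}=\alpha(\Inc_M(e)),
\]
and taking the maximum over $e$ gives $L_A(\alpha)\le\max_e\alpha(\Inc_M(e))$. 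This bound suffices for the claim, and since $\alpha$ is arbitrary it also yields $\cgamp(H)\le 1$.

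The only delicate step is the degree argument. Degree counts hyperedges as a set, so $x$, $y$, $e$ are genuinely distinct edges when $e\notin\{x,y\}$. The hypergraph is defined with $E\subseteq 2^V$, so there are no repeated edges and the count is valid. Everything else is a direct substitution into \Cref{lem:witness-set-realisation}. Alternatively, one could apply \Cref{prop:fractional-witness-realisation} with $p_{xy,v_{xy}}=1$, which yields $\kappa(p)\le 1$ by the same incidence observation.
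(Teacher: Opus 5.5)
Your proposal is correct and matches the paper's proof: you apply \Cref{lem:witness-set-realisation} with the singleton witness sets $A_{xy}=\{v_{xy}\}$, then use $\deg_H(v_{xy})=2$ to show that only line-graph edges in $\Inc_M(e)$ can contribute to the load at $e$. Your alternative via \Cref{prop:fractional-witness-realisation} with $p_{xy,v_{xy}}=1$ is also valid, and it is the route the paper itself points to in \Cref{sec:kappa}.
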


\begin{proof}
For each edge $xy\in E(M)$, fix a witness vertex
$v_{xy}\in x\cap y$ with $\deg_H(v_{xy})=2$, and set $A_{xy}:=\{v_{xy}\}$.
Set $\tau:=\max_{e\in E(H)}\alpha(\Inc_M(e))$. Our goal will be to show that $L_A(\alpha)\le \tau$ ---  in the sense of \Cref{lem:witness-set-realisation} --- for these $A_{xy}$.

Fix $e\in E(H)$. We want to understand which terms can contribute to the load at
$e$, that is, for which edges $xy\in E(M)$ we have $A_{xy}\cap e\neq\emptyset$.
Since $A_{xy}=\{v_{xy}\}$, this simply means that the witness vertex $v_{xy}$
lies in $e$.

Now $v_{xy}$ was chosen from $x\cap y$, so it already lies in the two hyperedges
$x$ and $y$. Moreover, by assumption $\deg_H(v_{xy})=2$, so these are the only
two hyperedges of $H$ containing $v_{xy}$. Therefore, if $v_{xy}\in e$, then
necessarily $e=x$ or $e=y$.

In other words, a witness vertex can only be seen by the two hyperedges that
gave rise to it. Hence every term counted in
\[
\sum_{\substack{xy\in E(M)\\A_{xy}\cap e\neq\emptyset}}\alpha_{xy}
\]
comes from an edge $xy$ of the line graph $M$ that is incident with $e$. It
follows that
\[
\sum_{\substack{xy\in E(M)\\A_{xy}\cap e\neq\emptyset}}\alpha_{xy}
\le
\alpha(\Inc_M(e))
\le
\tau.
\]
Therefore $L_A(\alpha)\le \tau$, and the claim follows from
\Cref{lem:witness-set-realisation}.
\end{proof}

\begin{proposition}
\label{bounded-excess-realisation}
Let $H$ be a hypergraph, let $M:=P(H^*)$, and let $\alpha\in \rpos^{E(M)}$.
Then
\[
\gauge_H^\Sub(q_{M,\alpha})
\le
(\eex(H)+1)\max_{e\in E(H)}\alpha(\Inc_M(e)).
\]
\end{proposition}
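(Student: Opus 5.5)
We apply \Cref{lem:witness-set-realisation} with the full intersections as witness sets, i.e.\ $A_{xy}:=x\cap y$ for every $xy\in E(M)$. These sets are nonempty because $xy$ is an edge of the line graph. Set $\tau:=\max_{e\in E(H)}\alpha(\Inc_M(e))$. It then suffices to show that for every $e\in E(H)$,
\[
\sum_{\substack{xy\in E(M)\\ x\cap y\cap e\neq\emptyset}}\alpha_{xy}\le (\eex(H)+1)\,\tau .
\]

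Fix $e$ and split the contributing line-graph edges $xy$ into two classes.

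\emph{Class 1:} $e\in\{x,y\}$. These are exactly the edges of $\Inc_M(e)$, so their total weight is at most $\alpha(\Inc_M(e))\le\tau$. This accounts for the ``$+1$''.

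\emph{Class 2:} $e\notin\{x,y\}$. Pick a vertex $v\in x\cap y\cap e$. Then $v$ lies in the three distinct hyperedges $x,y,e$, so $\deg_H(v)\ge 3$. Charge $xy$ to this $v$; an edge may be charged to several vertices, but that only overcounts, which is harmless for an upper bound.

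For a fixed $v\in e$ with $\deg_H(v)=d\ge 3$, we bound the total weight of line-graph edges $xy$ with $v\in x\cap y$ and $x,y\ne e$. Each such edge has both endpoints in $\Inc_H(v)\setminus\{e\}$, a set of $d-1$ hyperedges. Summing the incident loads $\alpha(\Inc_M(x))\le\tau$ over these $d-1$ hyperedges counts every such edge twice, so the total weight is at most $(d-1)\tau/2$.

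We need this to be at most $(d-2)\tau$. That holds iff $(d-1)/2\le d-2$, i.e.\ $d\ge 3$, which is exactly our situation. Summing over $v\in e$ then bounds Class 2 by $\sum_{v\in e}\max\{\deg_H(v)-2,0\}\,\tau\le\eex(H)\,\tau$, since vertices of degree at most $2$ contribute nothing to Class 2.

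Adding the two classes gives $L_A(\alpha)\le(\eex(H)+1)\tau$, and \Cref{lem:witness-set-realisation} yields the claimed bound on $\gauge_H^\Sub(q_{M,\alpha})$.

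The main obstacle is the per-vertex count in Class 2. We must avoid the naive bound of $\binom{d}{2}$ pairs and instead use the double-counting of incident loads. We also need to handle the charging carefully, so that the degree-$2$ vertices, and the line-graph edges touching $e$ itself, are absorbed by the single $+1$ term.
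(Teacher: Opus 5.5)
Your proposal is correct and is essentially the paper's proof. It applies \Cref{lem:witness-set-realisation} with $A_{xy}=x\cap y$, absorbs the line-graph edges in $\Inc_M(e)$ into the $+1$ term, and bounds the remaining edges vertex by vertex. That bound double counts incident loads over the $\deg_H(v)-1$ other hyperedges through each $v\in e$, then uses $(d-1)/2\le d-2$ for $d\ge 3$.
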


\begin{proof}
Our goal will be to apply \Cref{lem:witness-set-realisation} with
$A_{xy}:=x\cap y$ for all edges $xy\in E(M)$.
Set $\tau:=\max_{e\in E(H)}\alpha(\Inc_M(e))$.
We will show that $L_A(\alpha)\le (\eex(H)+1)\tau$ in the sense of \Cref{lem:witness-set-realisation}.

Fix an $e\in E(H)$, write
\[
L_e:=
\sum_{\substack{xy\in E(M)\\(x\cap y)\cap e\neq \emptyset}}\alpha_{xy}.
\]
Observe that the edges of the line graph incident with $e$, i.e., $\Inc_M(e)$ contribute at most
$\alpha(\Inc_M(e))\le \tau$.

For the remaining edges of the line graph, namely any $xy \in E(M) \setminus \Inc_M(e)$,  if $(x\cap y)\cap e\neq \emptyset$, then
there exists a vertex $v\in V(H)$ such that $v\in e\cap x\cap y$. Define
$S_v:=\{h\in E(H)\setminus\{e\}\mid v\in h\}$.
Then $x,y\in S_v$, so $xy\in E(M[S_v])$. Hence
\[
L_e
\le
\tau+\sum_{v\in e}\sum_{xy\in E(M[S_v])}\alpha_{xy} \tag{$\ast$}.
\]

If $\deg_H(v)\le 2$, then $|S_v|\le 1$, so $E(M[S_v])=\emptyset$. That is, there is no $xy \in E(M)\setminus \Inc_M(e)$ with $v\in e \cap x \cap y$. Alternatively, since $e,x,y$ are distinct, $v$ in their intersection would imply that $v$ has degree 3.

Thus, only vertices with degree at least $3$ contribute to the sum in $(\ast)$. Suppose that $\deg_H(v)\ge 3$, then by standard double counting
\[
2\sum_{xy\in E(M[S_v])}\alpha_{xy}
\le
\sum_{h\in S_v}\alpha(\Inc_M(h))
\le
|S_v|\tau
=
(\deg_H(v)-1)\tau.
\]
Dividing by $2$, we obtain
\[
\sum_{xy\in E(M[S_v])}\alpha_{xy}
\le \tau 
\frac{\deg_H(v)-1}{2}\le 
\tau \max\{\deg_H(v)-2,0\}.
\]
Then combining with $(\ast)$ from above we get the final inequality,
\[
L_e
\le
\tau+\tau\sum_{v\in e}\max\{\deg_H(v)-2,0\}
\le
(\eex(H)+1)\tau.
\]
Taking the maximum over $e$ gives $L_A(\alpha)\le (\eex(H)+1)\tau$, and the claim follows from \Cref{lem:witness-set-realisation}.
\end{proof}

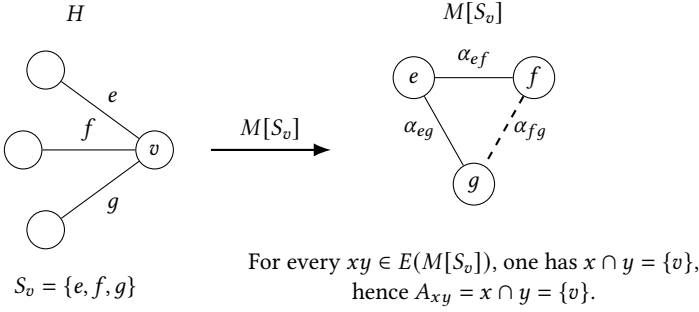
\begin{figure}[t]
\centering
\begin{tikzpicture}[
  >=Latex,
  vertex/.style={circle,draw,fill=white,minimum size=5mm,inner sep=0pt},
  edgev/.style={circle,draw,fill=white,minimum size=5.5mm,inner sep=0pt},
  every node/.style={font=\small}
]

\node[vertex,label=center:$v$] (v) at (0,0) {};
\node[vertex] (a) at (-1.45,1.05) {};
\node[vertex]  (b) at (-1.75,0) {};
\node[vertex] (c) at (-1.45,-1.05) {};

\draw (v) -- node[midway,above right=-1pt] {$e$} (a);
\draw (v) -- node[midway,above] {$f$} (b);
\draw (v) -- node[midway,below right=-1pt] {$g$} (c);

\node at (-1.05,1.8) {$H$};
\node[align=center] at (-1.05,-1.8) {$S_v=\{e,f,g\}$};

\draw[->,thick] (0.75,0) -- (2.35,0) node[midway,above] {$M[S_v]$};

\node[edgev] (E) at (3.45,0.95) {$e$};
\node[edgev] (F) at (5.05,0.95) {$f$};
\node[edgev] (G) at (4.25,-0.45) {$g$};

\draw (E) -- node[midway,above] {$\alpha_{ef}$} (F);
\draw (E) -- node[midway,left] {$\alpha_{eg}$} (G);
\draw[dashed,thick] (F) -- node[midway,right] {$\alpha_{fg}$} (G);

\node at (4.25,1.8) {$M[S_v]$};
\node[align=center] at (4.25,-1.7) {
For every $xy\in E(M[S_v])$, one has $x\cap y=\{v\}$,\\
hence $A_{xy}=x\cap y=\{v\}$.};

\end{tikzpicture}

\caption{
Illustration for the bounded-excess proof for a vertex $v$ of degree $3$. In the bound of $L_e$, $\alpha_{ef}+\alpha_{eg}$ is bounded by $\alpha(\Inc_M(e))$, whereas the dashed edge $fg$ also has $A_{fg}\cap e \neq \emptyset$ and thus also additionally affects the constructed submodular witness in~\Cref{lem:witness-set-realisation}.
}
\label{fig:bounded-excess-local}
\end{figure}

\section{A Natural Family with Logarithmic Edge Excess}
\label{app:excess}

We present a natural family of hypergraphs with edge excess $O(\log\ghw)$, illustrating that even superconstant edge excess suffices to obtain a strong lower bound on $\subw$ in terms of $\ghw$.

For $r\ge 2$, let $Q_r$ be the $r$-dimensional cube on vertex set
$(\mathbb F_2)^r$. We define a hypergraph $H_r$ as follows.

\begin{enumerate}
\item The vertices of $H_r$ are the edges of $Q_r$.
\item For each cube vertex $x\in (\mathbb F_2)^r$, we add the star
\[
S_x:=\{f\in E(Q_r) \mid x\in f\}.
\]
\item For each $i\in\{2,\dots,r\}$ and each $x\in(\mathbb F_2)^r$ with
$x_1=x_i=0$, we add the square
\[
C_{x,i}:=E\bigl(Q_r[\{x,x+e_1,x+e_i,x+e_1+e_i\}]\bigr).
\]
\end{enumerate}

So $H_r$ has one hyperedge for every star of the cube, and one hyperedge for
every $2$-dimensional face in the directions $(e_1,e_i)$.

We use the following terminology of Adler, Gottlob, and Grohe~\cite{DBLP:journals/ejc/AdlerGG07}.
For a hypergraph $H$, a connected set $C\subseteq V(H)$, and a set
$M\subseteq E(H)$, we say that $C$ is \emph{$M$-big} if
\[
\bigl|\{e\in M \mid e\cap C\neq\emptyset\}\bigr|>\frac{|M|}{2}.
\]
For $k\ge 0$, a set $M\subseteq E(H)$ is \emph{$k$-hyperlinked} if for every
$S\subseteq E(H)$ with $|S|<k$, the hypergraph
\[
H\setminus \bigcup S \;:=\; H\bigl[V(H)\setminus \bigcup_{e\in S} e\bigr]
\]
has an $M$-big connected component.
The \emph{hyperlinkedness} $\mathsf{hlink}(H)$ is the largest $k$ such that $H$
contains a $k$-hyperlinked set.
It was shown in~\cite{DBLP:journals/ejc/AdlerGG07} that always $\mathsf{hlink}(H)\le \ghw(H)$.

\begin{proposition}
For every $r\ge 4$,
\[
\eex(H_r)\le 2r
\qquad\text{and}\qquad
\ghw(H_r) \in \Omega\left(\frac{2^r}{r}\right).
\]
Hence $\eex(H_r)=O(\log \ghw(H_r))$,
and therefore
\[
\subw(H_r)\ge C\,\frac{\ghw(H_r)}{\log^2 \ghw(H_r)}
\]
for all sufficiently large $r$.
\end{proposition}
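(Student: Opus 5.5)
The proof has three parts: a direct degree count gives the edge-excess bound, a hyperlinked set of stars gives the $\ghw$ bound through the cube's edge isoperimetry, and \Cref{lem:load.to.width} with \Cref{bounded-excess-ratio} gives the $\subw$ bound.

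\emph{Edge excess.} A vertex of $H_r$ is a cube edge $f=\{x,x+e_j\}$. It lies in exactly two stars, $S_x$ and $S_{x+e_j}$.

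If $j=1$, then $f$ lies in one square $C_{\cdot,i}$ for each $i\in\{2,\dots,r\}$. So $\deg_{H_r}(f)=r+1$ and its excess is $r-1$.

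If $j\ge 2$, then $f$ lies in exactly one square, the one in directions $(e_1,e_j)$. So $\deg_{H_r}(f)=3$ and its excess is $1$.

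A star $S_x$ contains one direction-$1$ edge and $r-1$ other edges, so its excess is $(r-1)+(r-1)=2r-2$. A square $C_{x,i}$ contains two direction-$1$ edges and two direction-$i$ edges, so its excess is $2(r-1)+2=2r$. Hence $\eex(H_r)\le 2r$.

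\emph{Lower bound on $\ghw$.} I show that $\mathcal M:=\{S_x\mid x\in(\mathbb F_2)^r\}$ is $k$-hyperlinked for $k:=\lfloor 2^r/(4r)\rfloor$, and then invoke $\mathsf{hlink}\le\ghw$.

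Fix $S\subseteq E(H_r)$ with $|S|<k$. Every hyperedge has at most $\max\{r,4\}=r$ vertices, since $r\ge 4$. So $\bigcup S$ deletes fewer than $kr\le 2^r/4$ cube edges. Let $Q'$ be the cube with these edges removed.

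Two surviving cube edges sharing a cube vertex $x$ both lie in $S_x$, so they are adjacent in $H_r\setminus\bigcup S$. Hence every non-trivial connected component $D$ of $Q'$ yields a connected set $C_D$ of $H_r\setminus\bigcup S$. Moreover, $C_D$ meets $S_x$ for every $x\in V(D)$.

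It therefore suffices to find a component of $Q'$ with more than $2^{r-1}$ vertices. Suppose every component, isolated vertices included, has at most $2^{r-1}$ vertices. Add components in decreasing order of size until the union reaches $2^r/4$. Because each component has size at most $2^{r-1}$, the union stays at most $3\cdot 2^r/4$. Replacing it by its complement if necessary gives a set $A$ with $2^r/4\le|A|\le 2^{r-1}$ and no surviving edge leaving $A$.

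The standard edge-isoperimetric inequality for $Q_r$ gives $|\partial A|\ge|A|$ for $|A|\le 2^{r-1}$. All edges of $\partial A$ were deleted, so at least $2^r/4$ edges were deleted, which contradicts the count above.

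Consequently $\ghw(H_r)\ge\mathsf{hlink}(H_r)\ge\lfloor 2^r/(4r)\rfloor\in\Omega(2^r/r)$. In particular $\ghw(H_r)\ge 9$ for large $r$.

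\emph{Conclusion.} Since $\ghw(H_r)\le|E(H_r)|\le r2^r$, we have $\log\ghw(H_r)=\Theta(r)$, and so $\eex(H_r)\le 2r=O(\log\ghw(H_r))$. \Cref{bounded-excess-ratio} gives $\cgamp(H_r)\le 2r+1=O(\log\ghw(H_r))$. Substituting this into \Cref{lem:load.to.width} yields $\subw(H_r)\ge C\,\ghw(H_r)/\log^2\ghw(H_r)$.

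The main point needing care is the hyperlinkedness step. One must check that components of the cube with edges deleted really transfer to $\mathcal M$-big connected sets of $H_r\setminus\bigcup S$, where $\bigcup S$ may also remove square-edges, and handle isolated cube vertices correctly. I expect the argument above to suffice, because connectivity is witnessed through the stars alone.
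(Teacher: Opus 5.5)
Your proof is correct and follows essentially the same route as the paper. It uses the same degree count for the edge excess, the same $\lfloor 2^{r-2}/r\rfloor$-hyperlinked family of stars (established via the cube's edge-isoperimetric inequality and connectivity through surviving stars), and the same combination of \Cref{bounded-excess-ratio} with \Cref{lem:load.to.width}; your greedy-plus-complement construction of the balanced union and the use of $|\partial A|\ge |A|$ in place of $|A|\log_2(2^r/|A|)$ are only cosmetic variations.
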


\begin{proof}
We first compute the edge excess. A cube edge in direction $e_1$ lies in its two
endpoint stars and in one square of type $(e_1,e_i)$ for each $i=2,\dots,r$, so
its degree in $H_r$ is $r+1$. A cube edge in direction $e_i$ with $i\ge 2$ lies
in its two endpoint stars and in exactly one square, so its degree is $3$.
Therefore, for every $x\in (\mathbb F_2)^r$,
\[
\sum_{f\in S_x}\max\{\deg_{H_r}(f)-2,0\}
=
(r+1-2)+(r-1)(3-2)=2r-2,
\]
and for every square $C_{x,i}$,
\[
\sum_{f\in C_{x,i}}\max\{\deg_{H_r}(f)-2,0\}
=
2(r+1-2)+2(3-2)=2r.
\]
Hence $\eex(H_r)\le 2r$.

We move on to the lower bound on $\ghw(H_r)$.
Let $M:=\{S_x \mid x\in (\mathbb F_2)^r\}$,(so $|M|=2^r$),
and let
$k:=\left\lfloor 2^{r-2}/r\right\rfloor$.
We claim that $M$ is $k$-hyperlinked. Since $\mathsf{hlink}(H_r)\le \ghw(H_r)$,
this will imply $\ghw(H_r)\ge k$.

Fix $S\subseteq E(H_r)$ with $|S|<k$.
Recall that the vertices of $H_r$ are the cube edges of $Q_r$.
Thus $\bigcup S\subseteq V(H_r)=E(Q_r)$ is exactly the set of cube edges
that lie in one of the selected hyperedges of $S$.
Since every hyperedge of $H_r$ has size at most $r$ (stars have size $r$,
squares have size $4$), we have
\[
\left|\bigcup S \right|
\le \sum_{h\in S}|h|
\le r|S|
< kr
\le 2^{r-2}.
\]
Equivalently, fewer than $2^{r-2}$ cube edges are deleted.

Let $G$ be the spanning subgraph of $Q_r$ whose edge set is
\(
E(G)=E(Q_r)\setminus \bigcup S.
\)

\medskip
\noindent\textbf{Claim 1.}
$G$ has a connected component $U$ with $|U|>2^{r-1}$.

\smallskip
\noindent\emph{Proof of Claim 1.}
Suppose not. Then every connected component of $G$ has size at most $2^{r-1}$.
Choose a union $X$ of connected components of $G$ with
$2^{r-2}\le |X|\le 2^{r-1}$: either one component already has size in this
range, or all components have size less than $2^{r-2}$, in which case such an
$X$ can be built greedily. Because $X$ is a union of connected components, every
cube edge in $\delta_{Q_r}(X)$ was deleted. Hence
$|\delta_{Q_r}(X)|<2^{r-2}$. On the other hand, the edge-isoperimetric
inequality for the cube gives
\[
|\delta_{Q_r}(X)|
\ge
|X|\log_2\!\left(\frac{2^r}{|X|}\right)
\ge
2^{r-2},
\]
a contradiction.\hfill $\blacktriangle$

\medskip
\noindent\textbf{Claim 2.}
$H_r\setminus \bigcup S$ has a connected component $C$
that meets every star $S_x$ with $x\in U$.

\smallskip
\noindent\emph{Proof of Claim 2.}
Let $W$ be the set of cube edges of $G$ with both endpoints in $U$.
Because $U$ is a connected component of $G$ and $|U|>2^{r-1}$, the graph
$G[U]$ is nontrivial, so every $x\in U$ is incident with some edge of $W$.
In particular, $S_x\notin S$ for every $x\in U$.

We claim that $W$ lies in a single connected component of
$H_r\setminus \bigcup S$.
Indeed, let $e,f\in W$.
Since $G[U]$ is connected, there is a sequence of cube edges
\[
e=e_0,e_1,\dots,e_t=f
\]
in $W$ such that consecutive edges $e_{i-1},e_i$ share a cube vertex
$x_i\in U$.
Then both $e_{i-1}$ and $e_i$ belong to the star $S_{x_i}$, and since
$S_{x_i}\notin S$, the hyperedge $S_{x_i}$ survives in
$H_r\setminus \bigcup S$.
Hence $e_{i-1}$ and $e_i$ are adjacent in the primal graph of
$H_r\setminus \bigcup S$.
Therefore all edges in $W$ lie in one connected component $C$ of
$H_r\setminus \bigcup S$.

Finally, for each $x\in U$, some edge of $W$ is incident with $x$, so that
edge belongs to $C\cap S_x$.
Thus $C$ meets every star $S_x$ with $x\in U$.
\hfill $\blacktriangle$

\medskip
By Claim~1, we have $|U|>2^{r-1}=|M|/2$.
By Claim~2, there is a connected component $C$ of $H_r\setminus \bigcup S$
meeting every star $S_x$ with $x\in U$.
Since the stars $S_x$ are pairwise distinct, $C$ meets more than half of the
hyperedges in $M$.
Hence $C$ is $M$-big.

We have proved that for every $S\subseteq E(H_r)$ with $|S|<k$, the
hypergraph $H_r\setminus \bigcup S$ has an $M$-big connected component.
Therefore $M$ is $k$-hyperlinked, so $\mathsf{hlink}(H_r)\ge k$ and therefore
\[
\ghw(H_r)\ge k=\left\lfloor \frac{2^{r-2}}{r}\right\rfloor
\in \Omega\!\left(\frac{2^r}{r}\right).
\]

Therefore
$\eex(H_r) \le 2r=O(\log \ghw(H_r))$ and applying \Cref{bounded-excess-ratio} and \Cref{lem:load.to.width}  yields
\[
\subw(H_r)\ge C\,\frac{\ghw(H_r)}{\log^2 \ghw(H_r)}
\]
for some universal constant $C>0$ and all sufficiently large $r$.
\end{proof}

Note that the rank of $H_r$ is $r$, since each star $S_x$ has size $r$. That is, the family of hypergraphs studied here has unbounded rank and edge excess.

\end{document}